\title{Unfairly Splitting Separable Necklaces}
\author{Patrick Schnider}
\affil[1]{Department of Computer Science, ETH Zürich, Switzerland}
\author[1]{Linus Stalder}
\author[1]{Simon Weber}
\date{}
\newtheorem{theorem}{Theorem}
\newtheorem{corollary}[theorem]{Corollary}
\newtheorem{lemma}[theorem]{Lemma}
\newtheorem{proposition}[theorem]{Proposition}
\newtheorem{observation}[theorem]{Observation}
\newtheorem{claim}[theorem]{Claim}
\theoremstyle{definition}
\newtheorem{definition}[theorem]{Definition}
\newcommand{\f}[1]{\relax\ifmmode#1\else{$#1$}\fi}
\newcommand{\dimension}{\f{n}\xspace}
\newcommand{\Colors}{\f{C}\xspace}
\newcommand{\Reals}{\ensuremath{\mathbb{R}}\xspace}
\newcommand{\UnfairSplitting}{\textsc{$\alpha$-Necklace-Splitting}\xspace}
\newcommand{\nodeA}{\f{a}\xspace}
\newcommand{\nodeB}{\f{b}\xspace}
\newcommand{\nodeC}{\f{c}\xspace}
\newcommand{\bound}{\omega}
\algrenewcommand\algorithmicrequire{\textbf{Input:}}
\algrenewcommand\algorithmicensure{\textbf{Output:}}
\newcommand{\NP}{\ensuremath{\mathsf{NP}}\xspace}
\newcommand{\coNP}{\ensuremath{\mathsf{co}}-\ensuremath{\mathsf{NP}}\xspace}
\newcommand{\UEOPL}{\ensuremath{\mathsf{UEOPL}}\xspace}
\newcommand{\PPA}{\ensuremath{\mathsf{PPA}}\xspace}
\DeclarePairedDelimiter\abs{\lvert}{\rvert}
\renewcommand{\epsilon}{\ensuremath\varepsilon}
\renewcommand{\phi}{\ensuremath{\varphi}}
\algrenewcommand\algorithmicrequire{\textbf{Input}}
\algrenewcommand\algorithmicensure{\textbf{Returns}}
\newcommand{\trav}{\ensuremath{\mathrm{trav}}}
\newcommand{\tw}{\ensuremath{\mathsf{tw}}}
\newcommand{\nn}{\mathbf{n}}
\newcommand{\pp}{\mathbf{p}}
\newcommand{\fs}{\mathbf{s}}
\newcommand{\xx}{\mathbf{x}}
\begin{document}

\maketitle

\begin{abstract}
The Necklace Splitting problem is a classical problem in combinatorics that has been intensively studied both from a combinatorial and a computational point of view.
It is well-known that the Necklace Splitting problem reduces to the discrete Ham Sandwich problem. This reduction was crucial in the proof of \PPA-completeness of the Ham Sandwich problem. Recently, Borzechowski, Schnider and Weber [ISAAC'23] introduced a variant of Necklace Splitting that similarly reduces to the $\alpha$-Ham Sandwich problem, which lies in the complexity class \UEOPL but is not known to be complete. To make this reduction work, the input necklace is guaranteed to be \emph{$n$-separable}. They showed that these necklaces can be fairly split in polynomial time and thus this subproblem cannot be used to prove \UEOPL-hardness for $\alpha$-Ham Sandwich. We consider the more general \emph{unfair} necklace splitting problem on $n$-separable necklaces, i.e., the problem of splitting these necklaces such that each thief gets a desired fraction of each type of jewels. This more general problem is the natural necklace-splitting-type version of $\alpha$-Ham Sandwich, and its complexity status is one of the main open questions posed by Borzechowski, Schnider and Weber. We show that the unfair splitting problem is also polynomial-time solvable, and can thus also not be used to show \UEOPL-hardness for $\alpha$-Ham Sandwich.
\paragraph{Acknowledgements} Simon Weber is supported by the Swiss National Science Foundation under project no. 204320.
\end{abstract}

\section{Introduction}
One of the most famous theorems in fair division is the \emph{Ham Sandwich theorem}~\cite{HS}. It states that any $d$ point sets in $\Reals^d$ can be simultaneously \emph{bisected} by a single hyperplane.
The Ham Sandwich theorem is closely related to another fair division theorem that lives in \Reals; the \emph{Necklace Splitting theorem}. It states that given $n$ point sets in $\Reals$ (a \emph{necklace} with $n$ types of \emph{jewels}), we can split the real number line at $n$ points such that when we partition the resulting pieces alternatingly, each of the two parts contains exactly half of the jewels of each type. In fact, the Necklace Splitting theorem can be proven by lifting the necklace to the \emph{moment curve} in $\Reals^n$, which is the curve parameterised by $(t,t^2,t^3,\ldots,t^n)$, and then applying the Ham Sandwich theorem.

Under some additional assumptions on the input points, the Ham Sandwich theorem can be significantly strengthened: If the input point sets are \emph{well-separated} and in general position, the \emph{$\alpha$-Ham Sandwich theorem}~\cite{originalDiscreteAlphaHS} says that we can not only simultaneously \emph{bisect} each point set, but we can for each $i$ choose any number $1\leq\alpha_i \leq |P_i|$, and find a single hyperplane that cuts off exactly $\alpha_i$ points from each point set $P_i$. Informally, a family of point sets is well-separated if the union of any subfamily can be separated from the union of the complement subfamily by a single hyperplane. Borzechowski, Schnider and Weber~\cite{nseparableNecklaces} introduced an analogue of this well-separation condition for necklaces: A necklace is \emph{$n$-separable}, if any subfamily can be separated from the complement subfamily by splitting the necklace at at most $n$ points. It is shown in \cite{nseparableNecklaces} that a necklace is $n$-separable if and only if its lifting to the moment curve is well-separated.

Existence theorems such as the Ham Sandwich and the Necklace Splitting theorem can also be viewed from the lens of computational complexity. The corresponding problems are \emph{total search problems}, i.e., problems in which a solution is always guaranteed to exist, but the task is to actually find a solution. In this setting, the strategy used above to prove the Necklace Splitting theorem using the Ham Sandwich theorem also yields a \emph{reduction} from the Necklace Splitting problem to the Ham Sandwich problem. This reduction was crucial for establishing the \PPA-hardness of the Ham Sandwich problem, which is since known to be \PPA-complete~\cite{HamSandwichPPAComplete}. For the $\alpha$-Ham Sandwich problem membership in the subclass $\UEOPL\subseteq \PPA$ is known \cite{aHSinUEOPL}, but no matching hardness. It is thus natural to ask whether \UEOPL-hardness of $\alpha$-Ham Sandwich could be proven by reduction from a Necklace Splitting problem on $n$-separable necklaces. Borzechowski, Schnider and Weber~\cite{nseparableNecklaces} showed that the classical (\emph{fair}) Necklace Splitting problem on $n$-separable necklaces is polynomial-time solvable and thus very unlikely to be \UEOPL-hard. However, the natural necklace-splitting-type analogue of $\alpha$-Ham Sandwich would actually allow for \emph{unfair} splittings on $n$-separable necklaces, where the first of the two thieves should get exactly $\alpha_i$ jewels of type $i$, for some input vector $\alpha$.
We settle the complexity status of this problem variant by providing a polynomial-time algorithm. This completely disqualifies Necklace Splitting variants on $n$-separable necklaces as possible problems to prove \UEOPL-hardness of $\alpha$-Ham Sandwich.

\subsection{Results}
Before we can state our results we have to begin with the most crucial definitions.

\begin{definition}[Necklace]
    A \emph{necklace} is a family $C=\{C_1,\ldots,C_n\}$ of disjoint finite point sets in $\Reals$. The sets $C_i$ are called \emph{colours}, and each point $p\in C_i$ is called a \emph{bead} of colour $i$.
\end{definition}

We align the following definition of $\alpha$-cuts in necklaces as closely as possible with the definition of $\alpha$-cuts in the $\alpha$-Ham Sandwich theorem (which we will define later). We require that an $\alpha$-cut of a necklace splits the necklace at exactly one bead per colour, called the \emph{cut point}. Furthermore, the parity of the permutation of how these cut points appear along \Reals determines whether the first part of the necklace is given to the first or the second thief.

\begin{definition}[$\alpha$-Cut]\label{def:alpha-cut}
    Let $C=\{C_1,\ldots,C_n\}$ be a necklace. A set of $n$ cut points $s_1,\ldots,s_n$ such that for all $i$ we have $s_i\in C_i$ defines the subset $C^+$ of $C_1\cup\ldots\cup C_n$ that is assigned to the first thief as follows. Adding the points $s_0:=-\infty$ and $s_{n+1}:=\infty$, we get the two sets of intervals $A_{\text{even}}:=\{[s_i,s_{i+1}]\;\vert\;0\leq i\leq n, n \text{ even}\}$ and $A_{\text{odd}}:=\{[s_i,s_{i+1}]\;\vert\;0\leq i\leq n, n \text{ odd}\}$. Let $s_{\pi(1)}<\ldots<s_{\pi(n)}$ be the sorted order of the $s_i$. Let $A^+$ be $A_{sgn(\pi)}$, i.e., the set of intervals corresponding to the parity of $\pi$. Then, $\{s_1,\ldots,s_n\}$ is an \emph{$\alpha$-cut} of $C$ for the vector $\alpha=(\alpha_1,\ldots,\alpha_n)$, where $\alpha_i = |A^+\cap C_i|$.
\end{definition}

With this definition, $n$-separability of the necklace and the $\alpha$-Ham Sandwich theorem guarantee a unique $\alpha$-cut for every vector $\alpha$ fulfilling $1\leq \alpha_i\leq |C_i|$, as we will see later. We are now ready to define the \UnfairSplitting problem.

\begin{definition}[\UnfairSplitting]\label{def:UnfairSplitting}
Given an $n$-separable necklace \Colors with \dimension colours, and a vector $\alpha=(\alpha_1,\ldots,\alpha_n)$ with $1\leq \alpha_i\leq |C_i|$, the \UnfairSplitting problem is to find the unique $\alpha$-cut of $C$.
\end{definition}

We are now ready to introduce our results.

\begin{restatable}{theorem}{thmAlgo}\label{thm:mainAlgo}
    \UnfairSplitting is polynomial-time solvable.
\end{restatable}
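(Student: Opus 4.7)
The plan is to design a polynomial-time algorithm for \UnfairSplitting that builds on the fair-splitting algorithm of Borzechowski, Schnider, and Weber~\cite{nseparableNecklaces}. The foundation is the uniqueness of the $\alpha$-cut: lifting each bead $b\in C_i$ to $\phi(b)=(b,b^2,\ldots,b^n)\in\R^n$ turns the $n$-separable necklace into a well-separated family in $\R^n$, and then the $\alpha$-Ham Sandwich theorem guarantees that the $\alpha$-cut is unique. This uniqueness is what makes both the construction and the verification of individual update steps well-defined.

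My approach would be incremental. I would start from an initial cut that is easy to compute---for instance, the unique fair cut, which can be produced in polynomial time by \cite{nseparableNecklaces}---and iteratively modify it to approach the target $\alpha$-cut by adjusting one coordinate of $\alpha$ at a time. The key technical claim to establish is that a unit change in a single coordinate $\alpha_i$ corresponds to a local update of the cut: shifting some $s_j$ to an adjacent bead of its colour and possibly swapping neighbouring cut points in sorted order. Such a local update can be computed in polynomial time using the structural constraints of $n$-separable necklaces, and uniqueness of the new $\alpha$-cut guarantees that the update is well-defined.

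Since the total number of unit coordinate changes needed is at most $\sum_i |C_i|$, and each local update takes polynomial time, the overall running time is polynomial. The main obstacle I anticipate is handling the parity changes of the permutation $\pi$ that sorts the cut points: when two cut points swap during an update, the parity of $\pi$ flips, reversing the roles of $A^+$ and $A^-$ on the affected intervals. This could in principle cause a cascade of further adjustments. The decisive ingredient for bounding this cascade polynomially will be $n$-separability itself, which severely restricts the possible interleaving patterns of cuts and hence the number of distinct sorted orders the cut can pass through along the path from the fair cut to the target. Proving this bound rigorously, and verifying that the corresponding local updates can be carried out efficiently without accidentally leaving the space of $\alpha$-cuts for intermediate $\alpha$, is where I expect the main difficulty to lie.
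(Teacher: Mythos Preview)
Your proposal has a genuine gap at its core. The key technical claim---that incrementing a single coordinate $\alpha_i$ by one corresponds to a \emph{local} update of the cut (shifting one $s_j$ to an adjacent bead of its colour, plus perhaps one transposition)---is asserted but not argued, and in fact it is not true in general. When you move $s_i$ to the next bead of colour $i$, the two positions need not be physically adjacent in the necklace: there can be many beads of other colours between them. All of those beads switch sides, so the resulting cut no longer realises the correct $\alpha_j$ for those colours, and their cut points must move as well. You acknowledge this as a ``cascade'' but offer no mechanism to bound it; the sentence ``$n$-separability restricts the possible interleaving patterns'' is a hope, not an argument. Without such a bound the incremental scheme has no polynomial running-time guarantee, and the difficulty you flag at the end is exactly the whole problem, not a detail to be filled in.

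For comparison, the paper does not attempt any incremental walk through $\alpha$-space. Instead it (i) uses $n$-separability to show that at most a constant number of colours have more than two components, and guesses the cut component for those; (ii) applies further reduction rules to reach an \emph{irreducible} necklace; (iii) proves that all irreducible necklaces on $n$ colours have the \emph{same} walk graph $N_n$; and (iv) encodes the remaining problem as a binary ILP whose primal graph has constant treewidth, so it can be solved (and its $O(1)$ many feasible solutions enumerated) in polynomial time via the FPT algorithm of Jansen and Kratsch. The structural rigidity of irreducible walk graphs is the decisive ingredient, and nothing analogous appears in your outline.
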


We contrast this result by showing that without the promise of $n$-separability, the associated decision problem is $\NP$-complete.

\begin{restatable}{theorem}{thmHardness}\label{thm:NPhardness}
    Given a necklace $C$ with \dimension colours, and a vector $\alpha=(\alpha_1,\ldots,\alpha_n)$ with $1\leq \alpha_i\leq |C_i|$, the problem of deciding whether $C$ has an $\alpha$-cut is \NP-complete.
\end{restatable}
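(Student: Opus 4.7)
Membership in $\NP$ is immediate: the tuple of cut points $(s_1, \ldots, s_n)$ forms a polynomial-size certificate, and given such a tuple we sort to obtain the permutation $\pi$, read off its sign and hence determine $A^+$, and verify $|A^+ \cap C_i| = \alpha_i$ for every colour in linear time. For the hardness direction, I plan a reduction from \textsc{Subset Sum}: given positive integers $a_1, \ldots, a_m$ and target $T$, we will construct a necklace with $n = m+1$ colours and a target vector $\alpha$ such that an $\alpha$-cut exists if and only if some subset of the $a_i$'s sums to $T$.

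The construction uses one \emph{counting colour} $C_0$ and $m$ \emph{selector colours} $C_1, \ldots, C_m$. The real line is divided into $m$ spatially disjoint \emph{blocks}, ordered by index; block $i$ contains a constant number of beads of $C_i$ bracketing $a_i$ consecutive beads of $C_0$, so that choosing the cut $s_i$ to be a left bead versus a right bead of its block encodes ``exclude $a_i$'' versus ``include $a_i$''. Flipping between these two choices shifts exactly $a_i$ beads of $C_0$ from one side of $s_i$ to the other, changing $|A^+ \cap C_0|$ by exactly $a_i$. Setting $\alpha_0 := T + B$ for a fixed baseline $B$ depending only on the construction therefore forces the included subset to have total sum $T$, while each $\alpha_i$ is chosen so that both local choices at block $i$ remain feasible.

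The main obstacle is the parity-of-permutation rule in \cref{def:alpha-cut}: which intervals lie in $A^+$ depends on the sign of $\pi$, and a naive two-bead gadget tends to also make $\alpha_i$ depend on the local binary choice, thereby revealing the choice and trivialising the problem. I plan to control parity by placing the blocks in strict left-to-right order by index, so that for every combination of local choices the sorted order of the $m+1$ cuts is the identity permutation and $A^+$ is a fixed set of intervals. Decoupling $\alpha_i$ from the local choice will require a slightly richer gadget --- for example giving each selector $C_i$ a third, neutral bead, or pairing adjacent blocks so that the cut-bead contributions to $\alpha_i$ cancel --- so that both the ``include'' and ``exclude'' choices yield the same value of $\alpha_i$ while differing in $\alpha_0$ by $a_i$. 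Once such gadgets are in place, correctness follows from summing the independent contributions of the $m$ blocks across the single fixed $A^+$-interval that covers the $C_0$ beads, and polynomial-time computability of the reduction is clear.
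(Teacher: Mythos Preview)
Your proposal has a fatal gap: the reduction from \textsc{Subset Sum} is not polynomial-time. \textsc{Subset Sum} is only \emph{weakly} \NP-hard --- its hardness relies on the input integers $a_1,\ldots,a_m$ being encoded in binary, so that $\sum_i a_i$ can be exponential in the input size. Your construction places $a_i$ beads of the counting colour $C_0$ in block $i$, so the necklace has at least $\sum_i a_i$ beads in total; this is exponential in the \textsc{Subset Sum} input, and the reduction does not run in polynomial time. Switching to unary-encoded \textsc{Subset Sum} does not help, since that version is in \P.

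Even setting this aside, the sketch leaves the cut point $s_0\in C_0$ unaccounted for. Since $C_0$ is spread across all $m$ blocks, $s_0$ can land anywhere along the necklace, and its position both alters the permutation $\pi$ (undoing your plan to keep $\pi$ the identity) and directly changes $|A^+\cap C_0|$ independently of the selector choices. You would need an additional mechanism to pin $s_0$ to a fixed location, and the ``slightly richer gadget'' needed to make each $\alpha_i$ independent of the local binary choice is never actually exhibited --- yet this is exactly the crux of the construction.

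The paper sidesteps both issues by reducing from \textsc{e3-Sat} instead: a SAT instance involves no large numbers, so the necklace size is trivially polynomial, and the construction uses dedicated ``force-positive'' and ``force-negative'' bead types $P,N$ (with $\alpha(P)=|P|$, $\alpha(N)=1$) together with variable-, clause-, transfer-, and separator-gadgets to pin down cut positions and control parity. The resulting reduction is substantially more intricate than a \textsc{Subset-Sum}-style counting argument, precisely because forcing specific cuts and managing the alternation requires several interacting bead types rather than a single counting colour.
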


Note that Borzechowski, Schnider and Weber~\cite{nseparableNecklaces} proved that it is possible to check whether a given necklace is $n$-separable in polynomial time. This is in contrast to the Ham Sandwich setting, where it has been shown that checking well-separation is \coNP-complete~\cite{WellSeparationCoNP}.

\subsection{Related Work}
Fair necklace splitting was introduced by Alon, Goldberg and West \cite{Alon1987, Alon1986, Goldberg1985}. The problem has been extended to higher dimensions, larger numbers of thieves~\cite{Alon1987,Alon1986}, as well as the continuous setting (\emph{consensus halving}) where necklaces are sets of measures instead of sets of points~\cite{Brams1995, Deng2012, Hobby1965, SegalHalevi2020, Simmons2003}.

A closely related problem is the \emph{paint shop problem}. In the language of necklaces, it asks to split a given necklace among $k\geq 2$ thieves such that each thief gets a desired number of jewels of each type, using as few cuts as possible~\cite{Bonsma2006, Epping2004, Meunier2009}.
This problem can be viewed as the minimisation variant of a generalisation of our \UnfairSplitting problem, without the promise of $n$-separability and possibly more thieves.

\subsection{Proof Techniques}
The algorithm of Borzechowski, Schnider and Weber~\cite{nseparableNecklaces} relies on the following core observation: If some colour only appears in the necklace in two \emph{components} (i.e., consecutive intervals of $\Reals$ that contain only points of this colour), the smaller of the two components can be discarded since the cut point of that colour must lie in the larger component. While for fair splitting this follows quite immediately --- the larger component must be split, otherwise the cut cannot be fair --- this observation does not generalise to unfair splitting. We thus have to use a more complicated approach.

We first show that under the promise of $n$-separability the number of colours that consist of more than two consecutive components is bounded by a constant. For each of these colours we can guess in which component the cut point of this colour lies. For each guess we reduce each of these colours to the single component containing the cut point, and try to compute the $\alpha$-cut for the resulting necklace. For one of these guesses, the resulting $\alpha$-cut must be adaptable to an $\alpha$-cut of the original necklace. To compute the $\alpha$-cut for these necklaces (in which now every colour consists of at most two components), we reduce the necklace further according to some reduction rules similar to those in \cite{nseparableNecklaces}. This process will eventually come to an end; we will reach an \emph{irreducible} necklace. Here comes the crucial part of our proof: We will show that for each $n$, the irreducible necklaces with $n$ colours all have the same \emph{walk graph} (as introduced in \cite{nseparableNecklaces} and below in \Cref{sec:preliminaries}). We translate \UnfairSplitting on irreducible necklaces into an integer linear program (ILP), and use the rigid structure of irreducible necklaces to show that the \emph{primal graph} of this ILP has constant treewidth. Using the FPT algorithm of Jansen and Kratsch~\cite{ILPtreewidth}, we can thus solve these ILPs efficiently.

An experimental implementation of our algorithm for \UnfairSplitting can be found at the following link: 
\url{https://github.com/linstald/alpha-necklace-splitting}.
Note that the implementation does not use the FPT algorithm of Jansen and Kratsch, but a generic ILP solver.

For the \NP-hardness proof of the decision version of \UnfairSplitting we use a reduction from \textsc{e3-Sat}. An implementation of the reduction can also be accessed at the above link.

\section{Preliminaries}\label{sec:preliminaries}
Let us first formally introduce separability, as defined by Borzechowski, Schnider and Weber~\cite{nseparableNecklaces}.
\begin{definition}[Separability]
A necklace \Colors is \emph{$k$-separable} if for all $A \subseteq \Colors$ there exist $k$ \emph{separator points} $s_1<\ldots<s_k\in\Reals$ that separate $A$ from $\Colors \setminus A$. More formally, if we alternatingly label the intervals $(-\infty,s_1],[s_1,s_2],\ldots,[s_{k-1},s_k],[s_k,\infty)$ with $A$ and $\overline{A}$ (starting with either $A$ or $\overline{A}$), for every interval $I$ labelled $A$ we have $I\cap \bigcup_{c\in (\Colors\setminus A)}c=\emptyset$ and for every interval $I'$ labelled $\overline{A}$ we have $I'\cap \bigcup_{c\in A} c = \emptyset$.

The \emph{separability} $sep(\Colors)$ of a necklace $\Colors$ is the minimum integer $k\geq 0$ such that $\Colors$ is $k$-separable.
\end{definition}

We call each maximal set of consecutive points that have the same colour $c$ a \emph{component} of $c$. 
We say a colour $c$ is an \emph{interval}, if it consists of exactly one component. 
In other words, a colour $c$ is an interval if its convex hull does not intersect any other colour $c'$. In \cref{fig:2separable}, the green colour $c$ is an interval, whereas the red colour $a$ is not, it consists of two components.

\begin{figure}[h!]
\begin{subfigure}[b]{0.3\textwidth}
\centering
\includegraphics[page=1]{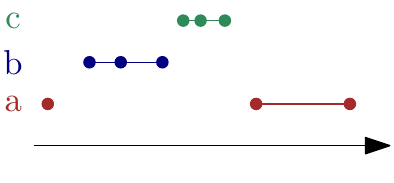}
\caption{``\nodeA \nodeB \nodeC \nodeA'' is 2-separable.}
\label{fig:2separable}
\end{subfigure}
\hfill
\begin{subfigure}[b]{0.3\textwidth}
\centering
\includegraphics[page=2]{figs/separability.pdf}
\caption{``\nodeA \nodeB \nodeA \nodeC'' is 3-separable.}
\label{fig:3separable}
\end{subfigure}
\hfill
\begin{subfigure}[b]{0.3\textwidth}
\centering
\includegraphics[page=3]{figs/separability.pdf}
\caption{``\nodeA \nodeB \nodeA \nodeB \nodeC'' is 4-separable.}
\label{fig:4separable}
\end{subfigure}
\caption{Necklaces with 3 colours \nodeA, \nodeB and \nodeC. The convex hulls of each component are shown.}
\label{fig:exampleOfSeparability}
\end{figure}

Note that for a necklace with $n$ colours, $sep(C)\geq n-1$, and this is tight, as can be seen in \Cref{fig:2separable}.
Borzechowski, Schnider and Weber further showed that $n$-separability is strongly related to the notion of \emph{well-separation}.

\begin{definition}
Let $P_1, \dots, P_k \subset \Reals^d$ be point sets. They are \emph{well-separated} if and only if for every non-empty index set $I \subset [k]$, the convex hulls of the two disjoint subfamilies  $\bigcup_{i \in I} P_i$ and $\bigcup_{i \in [k]\setminus I} P_i$ can be separated by a hyperplane.
\end{definition}

\begin{lemma}[\cite{nseparableNecklaces}]
\label{lem:dSeparable<=>wellSeparable}
Let \Colors be a set of \dimension colours in $\Reals$. Let $\Colors'$ be the set of subsets of $\Reals^\dimension$ obtained by lifting each point in each colour of $\Colors$ to the \dimension-dimensional moment curve using the function $f(t)=(t,t^2,\ldots,t^\dimension)$.
Then the set \Colors is \dimension-separable if and only if $\Colors'$ is well-separated.
\end{lemma}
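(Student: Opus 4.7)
The plan is to exploit two standard features of the moment curve $f(t) = (t, t^2, \ldots, t^n)$. First, a hyperplane $H = \{\xx \in \R^n : a_0 + a_1 x_1 + \ldots + a_n x_n = 0\}$ meets the curve precisely at the real roots of the polynomial $p(t) = a_0 + a_1 t + \ldots + a_n t^n$ of degree at most $n$, so $H$ has at most $n$ intersections with the curve, and the side of $H$ on which $f(t)$ lies is determined by the sign of $p(t)$. Second, given any $n$ real numbers $s_1 < \ldots < s_n$, the polynomial $\prod_{i=1}^n (t - s_i)$ has degree exactly $n$ with simple real roots, and its sign alternates on the intervals between consecutive $s_i$'s. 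This matches exactly the alternating labeling in the definition of $n$-separability.

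For the forward implication, let $C$ be $n$-separable and fix an arbitrary $A \subseteq C$. Let $s_1 < \ldots < s_n$ be separator points for $A$. The polynomial $p(t) = \prod_{i=1}^n (t-s_i)$ yields a hyperplane $H$ whose two open half-spaces correspond exactly to the partition of $\R$ into the $A$-labeled and $\overline{A}$-labeled intervals. Consequently, the lifted points of $A$ lie in one half-space and the lifted points of $C \setminus A$ in the other, so $H$ separates $\conv(A')$ from $\conv((C \setminus A)')$, and $C'$ is well-separated.

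For the reverse implication, let $C'$ be well-separated and fix an arbitrary $A \subseteq C$. Let $H$ be a hyperplane separating the convex hulls of the lifted $A$ and $C \setminus A$. After an arbitrarily small perturbation (which preserves strict separation of the finite lifted point sets), we may assume the associated polynomial $p$ has only simple real roots, none of which coincides with a bead position. Let these roots be $s_1 < \ldots < s_k$ with $k \leq n$. Since $p$ alternates sign across consecutive roots, and the lifted points of $A$ and $C \setminus A$ lie on opposite sides of $H$, the points $s_i$ form valid separators between $A$ and $C \setminus A$, showing that $C$ is $k$-separable. To reach exactly $n$ separator points, pad with $n-k$ extras placed in a bead-free region (for instance to the right of all beads): pairs of them create empty intervals that do not affect the labeling of intervals containing beads, and a single leftover can be absorbed by flipping the initial label, which the definition allows.

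The main obstacle lies in the reverse direction, specifically in arguing that the separating hyperplane can be perturbed so its associated polynomial has exactly the desired combinatorial structure (simple real roots, not at bead positions, with the correct sign pattern) while still separating the two lifted sets. This is a standard genericity argument but requires care. The forward direction is essentially a one-line construction, and the padding step in the reverse direction is straightforward once one notices that empty intervals are indifferent to their label.
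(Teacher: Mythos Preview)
The paper does not include a proof of this lemma; it is imported verbatim from \cite{nseparableNecklaces} and used as a black box, so there is no in-paper argument to compare against. Your proposal is the standard moment-curve argument and is essentially correct: the forward direction via $p(t)=\prod_i(t-s_i)$ is clean, and the reverse direction via the polynomial attached to a separating hyperplane, followed by padding to exactly $n$ separator points, is the right idea. Two small remarks: the padding can be done more simply than you indicate, since placing all $n-k$ extra points to the right of every bead never disturbs the labels of bead-containing intervals regardless of the parity of $n-k$; and your perturbation step is fine but could be phrased more sharply by noting that the lifted point sets are finite, so the separating hyperplane may be taken strictly separating, after which a generic tilt makes the associated polynomial have only simple real roots, none at bead positions.
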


The following theorem due to Steiger and Zhao~\cite{originalDiscreteAlphaHS} shows that we can always unfairly bisect well-separated point sets.
\begin{lemma}[{$\alpha$-Ham-Sandwich Theorem, \cite{originalDiscreteAlphaHS}}]\label{lem:alphaHS}
Let $P_1, \dots, P_\dimension \subset \Reals^\dimension$ be finite well-separated point sets in general position, and let $\alpha_1, \dots, \alpha_\dimension$ be positive integers with $\alpha_i \leq \abs{P_i}$, then there exists a unique $(\alpha_1, \dots, \alpha_\dimension)$-cut, i.e., a hyperplane $H$ that contains a point from each colour and such that for the closed positive halfspace $H^+$ bounded by $H$ we have $\abs{H^+ \cap P_i} = \alpha_i$. Here, the positive side of a hyperplane $H$ containing one point $p_i$ per point set $P_i$ is determined by the orientation of these $p_i$, i.e., for any point $h\in H^+$ the simplex $(p_1,\ldots,p_n,h)$ is oriented positively.
\end{lemma}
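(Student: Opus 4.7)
The plan is to establish existence and uniqueness of the $\alpha$-cut separately, using the combinatorial structure imposed by well-separation together with a continuous-deformation argument in the space of candidate hyperplanes.

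For existence, I would parameterise candidate hyperplanes by \emph{transversals}, i.e., tuples $(p_1, \ldots, p_\dimension) \in P_1 \times \cdots \times P_\dimension$. General position ensures that every such transversal is affinely independent and hence determines a unique oriented hyperplane $H(p_1, \ldots, p_\dimension)$, with its positive side fixed by the sign convention stated in the lemma. I would first produce a starting configuration whose positive halfspace contains exactly one point from each colour: for each $i$ pick an extreme point of $P_i$ along a direction that separates $P_i$ from the union of the remaining $P_j$; such a direction exists by well-separation. Then I would perform local adjustment moves that fix $\dimension - 1$ transversal points and slide the remaining cut point to the next point of its colour, each such move incrementing one coordinate of the count vector by one, until the target vector $\alpha$ is reached.

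For uniqueness, assume two distinct $\alpha$-cuts $H$ and $H'$ exist, and continuously deform $H$ into $H'$ through the pencil of hyperplanes containing the codimension-two flat $H \cap H'$. During this deformation every transversal point $p_i$ must migrate to a different point of $P_i$, and each migration changes some coordinate of the count vector by a nonzero integer. Well-separation tightly restricts how points can cross sides --- an entire colour class behaves coherently relative to a nearby hyperplane --- and combined with general position and the required invariance of $\alpha$ along the deformation this forces that no migration is possible, hence $H = H'$.

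The main obstacle will be making the existence argument fully rigorous: one has to show that the local adjustment moves are always available (the updated transversal remains a valid cut through one point per colour) and that the graph they induce on the space of transversals is connected in the sense that every admissible count vector in $\prod_i [1, |P_i|]$ is reachable. This is precisely where well-separation enters non-trivially: without it, sliding one cut point can push an entire colour to the wrong side and destroy the property that the hyperplane still meets every $P_i$. The classical resolution, due to Steiger and Zhao, combines a parametric continuity argument with an induction on $\dimension$, and I expect the same template to guide a clean write-up here.
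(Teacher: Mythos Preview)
The paper does not give a proof of this lemma at all: it is stated with attribution to Steiger and Zhao~\cite{originalDiscreteAlphaHS} and then used as a black box (to derive \Cref{thm:uniquenessOfSolution} via the moment-curve lifting). There is therefore no ``paper's own proof'' to compare against, and your write-up is not filling a gap the paper intended to fill.

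On the substance of your sketch: the outline you give is broadly in the spirit of the original Steiger--Zhao argument, but as written it is only a plan, not a proof. Two points would need real work. First, in the existence part, your local adjustment moves must be shown to preserve the property that the hyperplane still passes through one point of every other colour while only changing a single count by exactly one; you acknowledge this obstacle yourself. Second, your uniqueness argument via the pencil through $H\cap H'$ is too loose: sweeping through that pencil does not in general keep the hyperplane a transversal (it need not contain a point of each $P_i$ at intermediate times), so the ``count vector'' you want to track is not even defined along the deformation. The actual uniqueness proof uses well-separation more directly, arguing that two distinct $\alpha$-cuts would force some colour to lie entirely on one side of the flat $H\cap H'$ in a way that contradicts the equal counts. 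If you want a self-contained proof here you should consult the original paper rather than reconstruct it.
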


Through the  classical reduction of Necklace Splitting to the Ham-Sandwich problem obtained by lifting the points to the moment curve, as it appeared in many works before~\cite{momentCurve2,HamSandwichPPAComplete,matousek2002lectures,momentCurve1}, we can easily obtain the following theorem; by \cref{lem:dSeparable<=>wellSeparable} the point sets lifted to the moment curve are well-separated, and thus \Cref{lem:alphaHS} applies.
\begin{theorem}
\label{thm:uniquenessOfSolution}
\UnfairSplitting always has a unique solution.
\end{theorem}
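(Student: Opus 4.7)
The plan is to transfer the existence and uniqueness of an $\alpha$-cut from the $\alpha$-Ham Sandwich theorem to the necklace side via the classical moment-curve reduction. Given the $n$-separable necklace $\Colors$, I lift each bead $t \in C_i$ to $f(t) = (t, t^2, \ldots, t^n) \in \R^n$, producing point sets $P_1, \ldots, P_n$. By \Cref{lem:dSeparable<=>wellSeparable} these $P_i$ are well-separated, and points on the moment curve are automatically in general position. Applying \Cref{lem:alphaHS} with the input vector $\alpha$ then yields a \emph{unique} hyperplane $H$ containing one lifted bead $q_i \in P_i$ per color and satisfying $|H^+ \cap P_i| = \alpha_i$.

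Next, I translate $H$ back to a cut of $\Colors$. Since $f$ is injective on $\R$, each $q_i$ corresponds to a unique $s_i \in C_i$, giving $n$ candidate cut points with $s_i \in C_i$. A hyperplane meets the moment curve in at most $n$ points (the intersection is the zero set of a univariate polynomial of degree $n$), so $H$ meets the curve in exactly $q_1, \ldots, q_n$. Consequently, the arcs of the moment curve between consecutive cut points each lie entirely on one side of $H$, switching sides at each $s_i$. Hence the intervals forming $A^+$ in \Cref{def:alpha-cut} are exactly those whose lifts lie in $H^+$, and the condition $|H^+ \cap P_i| = \alpha_i$ turns into $|A^+ \cap C_i| = \alpha_i$.

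The remaining subtlety, which I expect to be the main obstacle, is to reconcile the parity-based choice of $A^+$ in \Cref{def:alpha-cut} with the orientation-based definition of $H^+$ in \Cref{lem:alphaHS}. By the Vandermonde determinant formula, the orientation of the simplex $(q_1, \ldots, q_n, h)$ for $h \in H^+$ depends only on $\text{sgn}(\pi)$ (the sign of the permutation sorting the $s_i$) and on which side of $H$ the point $h$ lies, which is precisely how \Cref{def:alpha-cut} selects $A^+$ from $A_{\text{even}}$ and $A_{\text{odd}}$. Uniqueness then follows immediately: any $\alpha$-cut of $\Colors$ lifts to a hyperplane through the corresponding lifted cut points with the same halfspace counts, so by the uniqueness part of \Cref{lem:alphaHS} this hyperplane must coincide with $H$, forcing the cut points themselves to coincide.
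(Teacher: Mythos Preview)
Your proposal is correct and follows essentially the same approach as the paper: lift to the moment curve, invoke \Cref{lem:dSeparable<=>wellSeparable} for well-separation, and apply \Cref{lem:alphaHS}. The paper in fact gives only this one-sentence sketch, whereas you spell out the translation back to the necklace and the Vandermonde-based parity/orientation correspondence, which is exactly the right way to fill in the details.
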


To argue about the separability of necklaces, we use the view of \emph{walk graphs} that were also introduced in~\cite{nseparableNecklaces}.
\begin{definition}[Walk graph]
Given a necklace \Colors, the walk graph $G_\Colors$ is the multigraph with $V=\Colors$ and with every potential edge $\{a,b\}\in \binom{V}{2}$ being present with the multiplicity equal to the number of pairs of points $p\in a,p'\in b$ that are neighbouring.
\end{definition}

The walk graphs of the example necklaces in \cref{fig:exampleOfSeparability} can be seen in \cref{fig:WalkGraphExample}.

\begin{figure}[h!]
\begin{subfigure}[b]{0.3\textwidth}
\centering
\begin{tikzpicture}
\node (a) at (0,0) {\nodeA};
\node (b) at (2,0) {\nodeB};
\node (c) at (1,1.5) {\nodeC};
\draw (a) edge (b);
\draw (b) edge (c);
\draw (c) edge (a);
\end{tikzpicture}
\caption{Walk graph for ``\nodeA \nodeB \nodeC \nodeA''}
\end{subfigure}
\begin{subfigure}[b]{0.3\textwidth}
\centering
\begin{tikzpicture}
\node (a) at (0,0) {\nodeA};
\node (b) at (2,0) {\nodeB};
\node (c) at (1,1.5) {\nodeC};
\draw (a) edge (b);
\draw (a) edge[bend left] (b);
\draw (c) edge (a);
\end{tikzpicture}
\caption{Walk graph for ``\nodeA \nodeB \nodeA \nodeC''}
\end{subfigure}
\begin{subfigure}[b]{0.3\textwidth}
\centering
\begin{tikzpicture}
\node (a) at (0,0) {\nodeA};
\node (b) at (2,0) {\nodeB};
\node (c) at (1,1.5) {\nodeC};
\draw (a) edge (b);
\draw (a) edge[bend left] (b);
\draw (a) edge[bend right] (b);
\draw (c) edge (b);
\end{tikzpicture}
\caption{Walk graph for ``\nodeA \nodeB \nodeA \nodeB \nodeC''}
\end{subfigure}
\caption{Walk graphs of the examples in \cref{fig:exampleOfSeparability}.}
\label{fig:WalkGraphExample}
\end{figure}
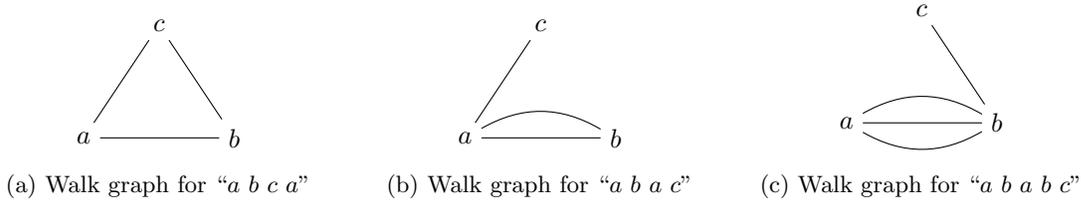
Note that given a necklace \Colors as a set of point sets, the walk graph can be built in linear time in the size of the necklace $N:=\sum_{c\in\Colors}|c|$ .

Recall that a graph is \emph{Eulerian} if it contains a Eulerian tour, a closed walk that uses all edges exactly once. A graph is \emph{semi-Eulerian} if it contains a Eulerian path, a (not necessarily closed) walk that uses all edges exactly once.

\begin{observation}\label{obs:semiEulerian}
The walk graph of a necklace is connected and semi-Eulerian, and thus at most two vertices have odd degree.
\end{observation}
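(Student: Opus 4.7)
The plan is to exhibit an explicit Eulerian path in the walk graph by reading the necklace from left to right. Label the beads of $\Colors$ as $p_1,p_2,\ldots,p_N$ in increasing order along $\Reals$, and let $\chi(p_j)\in\Colors$ denote the colour of $p_j$. I would consider the sequence of vertices $\chi(p_1),\chi(p_2),\ldots,\chi(p_N)$ in the walk graph $G_\Colors$, and for each index $j$ with $\chi(p_j)\neq\chi(p_{j+1})$, record that we traverse the edge $\{\chi(p_j),\chi(p_{j+1})\}$. (When $\chi(p_j)=\chi(p_{j+1})$ we simply stay at the same vertex and traverse nothing.) This produces a walk $W$ in $G_\Colors$ starting at $\chi(p_1)$ and ending at $\chi(p_N)$.

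The main observation is that $W$ is an Eulerian path. By the definition of the walk graph, for every pair of colours $\{a,b\}$ the multiplicity of the edge $\{a,b\}$ equals the number of indices $j$ with $\{\chi(p_j),\chi(p_{j+1})\}=\{a,b\}$. Thus each edge of $G_\Colors$ is traversed by $W$ exactly once, so $W$ is a (not necessarily closed) trail that uses every edge. This immediately shows that $G_\Colors$ is semi-Eulerian.

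For connectedness, I would use that every colour $c\in\Colors$ contains at least one bead, so $W$ visits every vertex of $G_\Colors$ at least once; combined with the fact that $W$ uses every edge, this gives that $G_\Colors$ is connected (isolated vertices are ruled out because a colour with at least one bead is adjacent in the necklace to either another colour or lies entirely alone, but in the latter degenerate case the necklace has only one colour and the graph is trivially connected). Finally, the statement about odd-degree vertices follows from the classical fact that a connected semi-Eulerian multigraph has at most two vertices of odd degree, namely the endpoints $\chi(p_1)$ and $\chi(p_N)$ of the Eulerian path (they coincide, giving the closed Eulerian case, exactly when all degrees are even).

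There is no real obstacle here; the only subtlety is making sure the walk interpretation matches the edge multiplicities in the definition of $G_\Colors$ and that every colour contributes at least one bead so that no vertex is left isolated.
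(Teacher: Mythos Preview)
Your argument is correct and is exactly the intended one: the paper states this as an observation without proof precisely because the defining property of the walk graph is that reading the necklace left to right yields a walk that traverses every edge once, which is the content of your construction. The only cosmetic point is that connectedness follows immediately from the fact that your trail $W$ visits every vertex (each colour is nonempty), so the separate discussion of isolated vertices is unnecessary.
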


The separability of a necklace turns out to be equivalent to the max-cut in its walk graph.

\begin{definition}[Cut]
    In a (multi-)graph $G$ on the vertices $V$, a \emph{cut} is a subset $A\subseteq V$. The \emph{size $\mu(A)$ of a cut} $A$ is the number of edges $\{u,v\}$ in $G$ such that $u\in A$ and $v\not\in A$. The \emph{max-cut}, denoted by $\mu(G)$, is the largest size of any cut $A\subseteq V$. 
\end{definition}
\begin{lemma}[\cite{nseparableNecklaces}]
    For every necklace $\Colors$, we have $sep(\Colors)=\mu(G_\Colors)$.
\end{lemma}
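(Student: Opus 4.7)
The plan is to prove $sep(\Colors) = \mu(G_\Colors)$ by exploiting a direct correspondence between separator points along the necklace and cut edges in $G_\Colors$. Enumerating the beads from left to right as $b_1, b_2, \ldots, b_N$, the edges of $G_\Colors$ are (with multiplicities) in bijection with the consecutive bead pairs $(b_i, b_{i+1})$ of distinct colours. Consequently, for any $A \subseteq \Colors$, the edges cut by $A$ are in bijection with those consecutive bead pairs whose colours lie on opposite sides of $(A, \overline{A})$, so there are exactly $\mu(A)$ such pairs.

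For the lower bound $sep(\Colors) \geq \mu(G_\Colors)$, I would pick a maximum cut $A^*$ with $\mu(A^*) = \mu(G_\Colors)$. In any valid separation of $A^*$ from $\overline{A^*}$, the alternating labelling must switch between any two consecutive beads whose colours lie on opposite sides of $A^*$, which requires an odd number (hence at least one) of separator points strictly between them (no separator can coincide with a bead, as then a single bead would sit in two intervals with opposite labels). Summing over the $\mu(G_\Colors)$ critical consecutive pairs gives at least $\mu(G_\Colors)$ separator points.

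For the upper bound $sep(\Colors) \leq \mu(G_\Colors)$, I would show that for every $A$, a valid separation using exactly $\mu(G_\Colors)$ separators exists: place one separator between each of the $\mu(A)$ consecutive bead pairs with colours on opposite sides of $A$, then distribute the remaining $\mu(G_\Colors) - \mu(A)$ separators either as pairs of nearby separators inside an arbitrary bead-free region (each such pair leaves all labels unchanged), or, when the remainder is odd, by additionally placing one separator to the left of all beads, exploiting the freedom granted by the definition to begin the alternating labelling with either $A$ or $\overline{A}$. The main (and in fact only) delicate point is this parity case in the upper bound, where one must confirm that the two ways of adding \textquotedblleft dummy\textquotedblright\ separators (pairs, versus a single one on the outside combined with a starting-label flip) together cover both parities of $\mu(G_\Colors) - \mu(A)$; everything else is a direct translation from the geometric language of separator points into the combinatorial language of cuts in $G_\Colors$.
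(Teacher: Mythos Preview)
The paper does not actually prove this lemma; it is quoted verbatim from \cite{nseparableNecklaces} and used as a black box, so there is no in-paper proof to compare against. That said, your argument is correct and is the natural one: the bijection between edges of $G_\Colors$ and neighbouring bead pairs of different colour immediately identifies, for each $A\subseteq\Colors$, the cut size $\mu(A)$ with the number of positions where a separator is forced, giving $sep(\Colors)\geq\mu(G_\Colors)$ via a maximum cut and $sep(\Colors)\leq\mu(G_\Colors)$ by padding with dummy separators. Your treatment of the parity issue in the upper bound (pairs of dummy separators plus possibly one extra separator to the left of all beads together with a flip of the starting label) is exactly the right fix and covers both parities of $\mu(G_\Colors)-\mu(A)$.
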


In our proofs we often need to show that certain structures or properties do not appear in walk graphs of necklaces with bounded separability. The general strategy for these proofs is to show that walk graphs with these structures or properties have a large max-cut, and thus the corresponding necklaces cannot have the claimed separability. Our main tool for this is the following bound that is a corollary of a theorem of Poljak and Turzík.
\begin{corollary}[\cite{poljakBoundForNonsimple}]\label{cor:multigraphErdos}
A connected (multi-)graph $G$ with $n$ vertices and $m$ edges has a maximum cut $\mu(G)$ of at least $\bound(G):=\frac{m}{2}+\frac{n-1}{4}$.
\end{corollary}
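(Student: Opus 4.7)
The plan is to reduce the multigraph bound to the classical Poljak--Turz\'ik bound for simple graphs via a weight/multiplicity argument. I would pass from the multigraph $G$ to its underlying simple graph $G'$ with integer edge weights $w(e) \geq 1$ equal to the multiplicity of $e$ in $G$. Then the total weight of $G'$ equals $m$, the vertex count is unchanged, and $\mu(G)$ coincides with the maximum weight cut of $(G', w)$. So it suffices to establish the weighted analogue $\mu_w(G') \geq m/2 + (n-1)/4$ for connected, positive-integer-weighted simple graphs.

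To prove this weighted version, I would induct on $n$. The base cases are $n = 1$ (trivial, since $m = 0$) and $n = 2$ (where $G$ is $m$ parallel edges and $\mu(G) = m \geq m/2 + 1/4$). For the inductive step I would use the standard greedy DFS strategy: fix a spanning tree $T$ of $G'$, process vertices in a DFS order $v_1, \ldots, v_n$, and assign each $v_i$ to the side of the partition that maximizes the weighted contribution of edges from $v_i$ to $\{v_1, \ldots, v_{i-1}\}$. This by itself yields a cut of weight at least $m/2$. The additional $(n-1)/4$ comes from a parity bonus: whenever the total back-weight of $v_i$ is odd, the greedy choice beats the $m/2$ baseline by $\geq 1/2$ on that step.

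The main obstacle is controlling these parities so that enough processing steps give the bonus. In the simple-graph case a careful choice of DFS tree ensures that at least $(n-1)/2$ back-degrees are odd; in the multigraph case, however, collapsing parallel edges can flip parities of back-weights, so naively reducing through $G'$ is dangerous. The cleanest route is to work directly in $G$, treating the whole bundle of parallel edges between two vertices as a single block of weight, and to mirror the Poljak--Turz\'ik induction in the multigraph setting. This is precisely the adaptation carried out in the cited reference, so my plan amounts to verifying that their inductive scheme goes through verbatim once the multigraph base cases above are in place.
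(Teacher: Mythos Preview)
The paper does not actually prove this corollary: it is stated with a citation to Poljak and Turz\'ik and used as a black box. So there is no ``paper's own proof'' to compare your proposal against.

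As for the proposal itself, the reduction from a multigraph to its underlying integer-weighted simple graph is sound, and the greedy scheme correctly yields a cut of total weight at least $m/2$, with an extra $1/2$ gained at every step where the back-weight is odd. The gap is in the sentence ``a careful choice of DFS tree ensures that at least $(n-1)/2$ back-degrees are odd.'' This is precisely the entire content of the Poljak--Turz\'ik bound, and you neither prove it nor indicate how the DFS would be chosen; for an arbitrary connected ordering it is simply false. You then concede this and defer to the cited reference, so what you have written is an outline that identifies the right invariant but does not supply the argument. If you want a self-contained proof, the standard route is not a single clever DFS but an induction on $n$ that removes either a pair of adjacent vertices whose deletion keeps the graph connected, or a vertex of degree~$1$, and tracks the bound through the reduction; this is what Poljak and Turz\'ik actually do, and it handles the weighted (hence multigraph) case directly.
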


Using \Cref{cor:multigraphErdos} we immediately get a bound on the number of edges in the walk graph of a $(n-1+\ell)$-separable necklace.

\begin{lemma}
    Let $C$ be a $(n-1+\ell)$-separable necklace with $n$ colours for $n \geq 8$.
    Then the number of edges $m$ in its walk graph $G_C$ is at most $m \leq \frac{3n-3}{2}+2\ell$.
    \label{lem:num-edges}
\end{lemma}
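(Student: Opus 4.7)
The plan is quite direct: chain together the three earlier results stated in the preliminaries. First I would invoke Observation~\ref{obs:semiEulerian} to note that the walk graph $G_C$ is connected, which is exactly the hypothesis needed for Corollary~\ref{cor:multigraphErdos}. Applying that corollary to $G_C$ (which has $n$ vertices and $m$ edges) yields
\[
\mu(G_C) \;\geq\; \frac{m}{2} + \frac{n-1}{4}.
\]

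Next I would use the identity $sep(C) = \mu(G_C)$ (from the lemma of Borzechowski, Schnider and Weber restated just before Corollary~\ref{cor:multigraphErdos}) together with the hypothesis that $C$ is $(n-1+\ell)$-separable. Since $sep(C)$ is the \emph{minimum} $k$ for which $C$ is $k$-separable, we obtain $\mu(G_C) = sep(C) \leq n-1+\ell$. Combining this with the previous inequality gives
\[
\frac{m}{2} + \frac{n-1}{4} \;\leq\; n-1+\ell,
\]
which rearranges to $m \leq \frac{3(n-1)}{2} + 2\ell$, exactly the desired bound.

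The proof therefore requires no real obstacle to overcome: the only subtlety is verifying the connectedness hypothesis of Corollary~\ref{cor:multigraphErdos}, which is immediate from Observation~\ref{obs:semiEulerian}. The assumption $n \geq 8$ is not needed for this algebraic step itself and appears to be stated for convenience of later applications of the lemma (where the resulting edge bound needs to be small compared to $n$ to yield structural consequences on $G_C$).
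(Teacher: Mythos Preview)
Your proof is correct and follows essentially the same approach as the paper: combine the Poljak--Turz\'ik lower bound $\mu(G_C)\geq \frac{m}{2}+\frac{n-1}{4}$ with the upper bound $\mu(G_C)\leq n-1+\ell$ coming from separability, then rearrange. If anything, your write-up is slightly more careful in explicitly checking the connectedness hypothesis for Corollary~\ref{cor:multigraphErdos} and in noting that the assumption $n\geq 8$ is not actually used here.
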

\begin{proof}
    We have $\mu(G_C) \geq \frac{m}{2} + \frac{n-1}{4}$ and $\mu(G_C) \leq n-1+\ell$.
    Therefore, we have
    \begin{align*}
              && n -1 + \ell &\geq \frac{m}{2} + \frac{n-1}{4} &&\\
        &\iff&  4n -4 + 4\ell  &\geq 2m + n-1 \\
        &\iff&  \frac{3n-3}{2} + 2\ell  &\geq m. &&\hfill\qedhere
    \end{align*}
\end{proof}

We also immediately get a convenient bound on the number of intervals.

\begin{lemma}
    Let $C$ be a $(n-1+\ell)$-separable necklace with $n$ colours for $n \geq 8$.
    Then the numbers of intervals $k$ is at least $k \geq \frac{n+1}{2}-2\ell$.
    \label{lem:num-intervals}
\end{lemma}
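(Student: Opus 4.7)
The plan is to bound the number of intervals $k$ from below by translating the structural assumption (few intervals) into a lower bound on the number of edges $m$ of the walk graph $G_C$, and then combine with the upper bound $m\leq \frac{3n-3}{2}+2\ell$ from \Cref{lem:num-edges} to derive the desired inequality.

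First I would count the total number of components $T$ of the necklace. By definition, each of the $k$ intervals contributes exactly one component, while each of the remaining $n-k$ non-interval colours contributes at least two components. Hence
\[
T \;\geq\; k + 2(n-k) \;=\; 2n-k.
\]
Next I would relate $T$ to the edge count $m$ of $G_C$. Reading off the necklace from left to right, every edge of $G_C$ corresponds to a boundary between two consecutive components (adjacent beads of the same colour lie within a single component and contribute nothing to the walk graph). Since $T$ components partition the necklace into $T-1$ such transitions, we obtain $m=T-1 \geq 2n-k-1$.

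Combining this with \Cref{lem:num-edges}, which gives $m\leq \frac{3n-3}{2}+2\ell$, yields
\[
2n-k-1 \;\leq\; \frac{3n-3}{2}+2\ell,
\]
and solving for $k$ immediately produces $k \geq \frac{n+1}{2}-2\ell$, as claimed. There is no real obstacle here: the proof is just a combinatorial double count followed by the algebraic manipulation, leaning entirely on the previous lemma for the non-trivial input. The only subtlety worth spelling out in the write-up is the identification $m=T-1$, which uses the multigraph convention for $G_C$ so that every necklace-transition contributes its own edge, even when it connects the same pair of colours as another transition.
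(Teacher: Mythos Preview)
Your proof is correct and is essentially the same as the paper's: the paper phrases the count in terms of vertex degrees in $G_C$ (intervals contribute degree $2$, non-intervals at least $4$, with a correction of $-2$ for the necklace endpoints) and combines $\sum_c\deg(c)=2m$ with \Cref{lem:num-edges}, which is exactly your component count $T\geq 2n-k$ together with $m=T-1$ rewritten via the handshake identity. The only cosmetic difference is that you make the identification $m=T-1$ explicit, whereas the paper leaves it implicit in the degree sum.
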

\begin{proof}
    Every interval has degree 2 (if it is not at the start or end of the necklace).
    Moreover, every colour that is not an interval has degree at least 4 (if it is not at the start or end of the necklace).
    We get:
    \begin{align*}
        \sum_{c \in C} \deg(c) &\geq k\cdot 2 + (n-k)\cdot 4 - 2 \\
                               & = 4n - 2k -2
    \end{align*} 
    From \Cref{lem:num-edges} we have $\sum_{c \in C} \deg(c) \leq 3n -3 + 4\ell$.
    Hence, we must have $k \geq \frac{n+1}{2}-2\ell$.
\end{proof}

\section{Tractability of the Search Problem}
In this section we prove \Cref{thm:mainAlgo} by providing a polynomial-time algorithm for \UnfairSplitting. As mentioned above, the algorithm works in two main phases.

In the first phase, the necklace is first reduced to a necklace where each colour consists of at most two components by guessing the correct component to cut for colours with three or more components. The necklace with only colours with at most two components is then further reduced to an \emph{irreducible} necklace.

In the second phase, we reduce necklace splitting in an irreducible necklace to a labelling problem of its walk graph. This labelling problem is then modelled as an integer linear program, which turns out to be tractable in polynomial time. To prove that this ILP is tractable, we prove some strong structural properties about the walk graphs of irreducible necklaces.

\subsection{Reducing Necklaces}
Instead of solving \UnfairSplitting, we will actually solve the following slightly more general problem. Given a vector $\alpha=(\alpha_1,\ldots,\alpha_n)$, we define the complement vector $\overline{\alpha}$ as the vector $(|C_1|-\alpha_1+1,\ldots,|C_n|-\alpha_n+1)$. If $\alpha$ denotes the number of points per point set on the positive side of a cut, $\overline{\alpha}$ denotes the number of points on the other side, both sides including the cut points. Since the cut parity can change in our reduction steps and thus the positive side may become the negative side and vice versa, the following problem is nicer to solve recursively than \UnfairSplitting.

\begin{definition}$\alpha$-$\overline{\alpha}$-\textsc{Necklace-Splitting}

\begin{tabular}{ll}
\textbf{Input:}  & An $n$-separable necklace $C = \{C_1, \dots, C_n\}$, \\ 
                    & a vector $\alpha = (\alpha_1, \dots, \alpha_n)$, $\alpha_i \in \{1, \dots, |C_i|\}$.\\
\textbf{Output:} & A pair $(S, \overline{S})$, where $S$ is the unique $\alpha$-cut and $\overline{S}$ the unique $\overline{\alpha}$-cut.
\end{tabular}
\end{definition}

\subsubsection{Reducing to at most two components per colour}\label{sec:firstphase}
In this section we will algorithmically reduce $\alpha$-$\overline{\alpha}$-\textsc{Necklace-Splitting} to the following subproblem, where each colour in the necklace consists of at most two components.
\begin{definition}$\alpha$-$\overline{\alpha}$-\textsc{Necklace-Splitting$_2$}

\begin{tabular}{ll}
\textbf{Input:}  & An $n$-separable necklace $C = \{C_1, \dots, C_n\}$, where each colour has at most 2\\
                 & components, and a vector $\alpha = (\alpha_1, \dots, \alpha_n)$, $\alpha_i \in \{1, \dots, |C_i|\}$.\\
\textbf{Output:} & A pair $(S, \overline{S})$, where $S$ is the unique $\alpha$-cut and $\overline{S}$ the unique $\overline{\alpha}$-cut.
\end{tabular}
\end{definition}

Our reduction is based on the following observation.
\begin{lemma}
    Let $C$ be an $n$-separable necklace with $n$ colours for $n \geq 8$.
    Then in the necklace at least one of the following must be true:
    \begin{enumerate}[(i)]
        \item there are two neighbouring intervals, or
        \item there is no colour with more than four components, and at most two colours have more than two components.
    \end{enumerate}
    \label{lem:atmost4components}
\end{lemma}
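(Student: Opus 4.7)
The plan is to prove the contrapositive: assume (i) fails, i.e., no two intervals are neighbouring in the necklace, and derive (ii). The first key step is to translate the failure of (i) into a structural statement about the walk graph $G_C$: two neighbouring interval colours would create an edge between two interval vertices in $G_C$, so under the assumption the set $I$ of interval vertices is an independent set of $G_C$. Consequently the cut $(I,\,V(G_C)\setminus I)$ has size equal to $\sum_{v\in I}\deg(v)$. Since every interval consists of a single component, its degree in $G_C$ is exactly $2$, except for the at most two intervals whose component touches the first or last bead of the necklace, for which the degree drops to $1$. Writing $k:=|I|$ and combining with $\mu(G_C)\leq n$ (from $n$-separability), this gives the uniform bound $k\leq (n+2)/2$.

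The second step combines this with the edge count already in hand. Taking $\ell=1$ in \Cref{lem:num-edges} yields $m\leq (3n+1)/2$, and since each pair of consecutive components in the necklace is an edge of $G_C$, the total number of components satisfies $T = m+1 \leq (3n+3)/2$. Letting $t_c$ denote the number of components of colour $c$, the ``extra-beyond-two-components'' quantity admits the identity
\begin{equation*}
    X\;:=\;\sum_{c\,:\,t_c\geq 2}(t_c - 2) \;=\; \sum_c (t_c - 1) \;-\; (n-k) \;=\; T + k - 2n.
\end{equation*}
Plugging in the bounds on $T$ and $k$ gives $X\leq (3n+3)/2 + (n+2)/2 - 2n = 5/2$, and since $X$ is a nonnegative integer, $X\leq 2$.

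Both parts of (ii) then follow immediately: a colour with at least $5$ components would alone contribute $t_c-2\geq 3$ to $X$, and three distinct colours with $\geq 3$ components each would together contribute at least $3$ to $X$. Both would contradict $X\leq 2$, so (ii) must hold whenever (i) fails.

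The main subtlety is the boundary handling in the independent-set cut argument: one has to make sure that the at most two intervals whose sole component touches the start or end of the necklace only decrease the cut size by an additive constant, so that the bound $k\leq (n+2)/2$ remains uniform in $n$ (a per-vertex loss of one that grew with $n$ would break the final bound on $X$). Beyond this, the argument is a clean combination of the Poljak--Turzík type edge bound already packaged in \Cref{lem:num-edges} with the observation that ``no neighbouring intervals'' is graph-theoretically just the statement that the interval vertices form an independent set in the walk graph.
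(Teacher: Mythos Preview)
Your proof is correct and rests on the same two ingredients as the paper's: the observation that, absent neighbouring intervals, the interval vertices form an independent set in $G_C$ (yielding $k\le (n+2)/2$), together with the edge bound from \Cref{lem:num-edges}. The only difference is cosmetic: the paper splits the negation of (ii) into two cases (a colour with $\ge 5$ components versus three colours with $\ge 3$ components) and in each derives $\sum_c\deg(c)\ge 3n+2$ to contradict \Cref{lem:num-edges}, whereas you package both cases at once via the excess quantity $X=T+k-2n\le 2$; the arithmetic is equivalent.
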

\begin{proof}
    We show that when both (i) and (ii) do not hold we get a contradiction.
    Let $G_C$ be the walk graph of $C$.
    
    Assuming that there are no two neighbouring intervals, taking the set of intervals $I$ as a cut yields a cut of size $2|I| - 2$ --- we need to subtract 2 since possibly one interval at the start and one at the end of the necklace only have one incident edge.
    Therefore, we have $\mu(G_C) \geq 2|I|-2$ and thus $|I| \leq \frac{\mu(G_C) + 2}{2}$.
    Since $C$ is $n$-separable, we have $\mu(G_C) \leq n$ which implies $|I| \leq \frac{n+2}{2}$.

    Now assuming that (ii) does not hold either, there is either a colour with at least five components, or more than two colours with at least three components.

    In the first case, there are $|I|$ colours with one component, at least one colour with five components and all other colours have at least two components.
    Since the first and last component both lack one edge for this calculation, we need to subtract $2$ edges for these.
    We thus get a lower bound on the sum of degrees in the walk graph.
    \begin{align*}
        \sum_{c\in C} \deg(c) &\geq |I| \cdot 2 + (n-|I|-1)\cdot 4 + 10 - 2 \\
        &= 4n - 2|I| + 4\\
        &\geq 4n - n - 2 + 4\\
        &= 3n + 2
    \end{align*}
    However, by \Cref{lem:num-edges} we have that $\sum_{c\in C} \deg(c) \leq 3n+1$.
    Therefore, the above result is a contradiction.
   
    In the second case, there are $|I|$ colours with one component, at least three colours with three components, and all other colours have at least two components. Again subtracting $2$ edges for the start and end we get the bound
    \begin{align*}
        \sum_{c \in C}\deg(c) &\geq |I|\cdot 2 + 3\cdot6 + (n-|I|-3)\cdot4 - 2\\
                              &= 4n - 2|I| + 4 \\
                              &\geq 4n - n - 2 + 4 \\
                              &= 3n + 2,
    \end{align*}
    which again contradicts \Cref{lem:num-edges}.

    We conclude that either condition (i) or (ii) has to hold.
\end{proof}
Algorithmically, the conditions (i) and (ii) can be used as follows.
If there are two neighbouring intervals, since an $\alpha$-cut must go through exactly one bead of each colour, there must be a cut in both intervals.
Moreover, in between these cuts there is no other cut point.
Therefore, we remove the two intervals and solve on the newly obtained necklace recursively.
Finally, we add the cuts at the right positions in the removed intervals.

If there are no neighbouring intervals, condition (ii) states that at most two colours have more than two components.
The strategy will be to look at these colours and test out every possible component per colour. Again by condition (ii), this requires at most $4\cdot 4=16\in O(1)$ tests. That is, for each colour with more than two components we fix a component and remove all other components of that colour.
In this smaller necklace (that still consists of $n$ colours) we recursively solve for an $\alpha$-cut.
The necklace for the recursive call will then be a necklace where each colour has at most two components, so we need to solve an $\alpha$-$\overline{\alpha}$-\textsc{Necklace-Splitting$_2$} instance in the recursive step.

Since there must be an $\alpha$-cut, one combination of components must lead to a smaller necklace whose $\alpha$-cut can be augmented by inserting the cut of each colour in the fixed component.

In order to see that our recursive calls indeed work, we also need that removing neighbouring intervals yields a $(n-2)$-separable necklace on $n-2$ colours, and removing all but one component from a colour does not destroy $n$-separability either. Both of these facts are shown in \cite[Lemmas 19 and 20]{nseparableNecklaces}.

This lets us conclude the following proposition.
\begin{proposition}
    Let $T_2(n, N)$ be the time to solve $\alpha$-$\overline{\alpha}$-\textsc{Necklace-Splitting$_2$} on an $n$-separable necklace with $n$ colours and a total of $N$ beads.
    Then $\alpha$-$\overline{\alpha}$-\textsc{Necklace-Splitting} on an $n$-separable necklace with a total of $N$ beads and $n$ colours can be solved in at most
    \(
        \mathcal{O}(T_2(n, N) + n \cdot N)
    \)
    time.
    \label{prop:reduce-to-two-components}
\end{proposition}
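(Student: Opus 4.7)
The plan is to describe a recursive algorithm whose case analysis is driven by \Cref{lem:atmost4components}. First, I handle a base case for $n$ below the threshold of the lemma ($n<8$) by brute force, which takes polynomially many steps in $N$ since $n$ is constant; this cost is absorbed by the target bound. For larger $n$, I construct the walk graph of the current necklace in $O(N)$ time and check which of the two cases of the lemma applies.

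In case (i), I pick two neighbouring intervals $c_j, c_{j+1}$ and delete both from $\Colors$; by \cite[Lemma 19]{nseparableNecklaces} the resulting necklace is $(n-2)$-separable on $n-2$ colours, and I recursively solve $\alpha$-$\overline{\alpha}$-\textsc{Necklace-Splitting} on it with $\alpha$ and $\overline{\alpha}$ restricted accordingly. Once the recursive call returns, I reinsert the two missing cut points at positions inside $c_j$ and $c_{j+1}$ determined by $\alpha_j$ and $\alpha_{j+1}$. This is correct because an $\alpha$-cut uses exactly one bead of each colour, so both deleted cut points must fall in these two intervals, and no other cut point may lie between them (otherwise that colour's cut would be outside of its only component).

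In case (ii), at most two colours have more than two components, each with at most four. I enumerate over the $O(1)$ joint guesses that specify, for each problematic colour, both the component containing its cut point and the side (positive or negative) of the global cut on which each removed component lies. For each guess, I remove the unchosen components and adjust the residual vectors $\alpha'$ and $\overline{\alpha}'$ by subtracting the number of removed beads assigned to each side. By \cite[Lemma 20]{nseparableNecklaces} the reduced necklace is still $n$-separable and now has at most two components per colour, so I invoke $\alpha$-$\overline{\alpha}$-\textsc{Necklace-Splitting}$_2$ on it. By uniqueness (\Cref{thm:uniquenessOfSolution}), exactly one guess corresponds to the true $\alpha$-cut of $\Colors$; I extend each returned sub-cut back to a candidate cut of $\Colors$ and output the unique one that is a valid $\alpha$-cut.

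For the running time, each case-(i) reduction takes $O(N)$ time and shrinks $n$ by two, so there are at most $n/2$ such steps, contributing $O(n\cdot N)$. Case (ii) is entered at most once in the top-level recursion, where it makes $O(1)$ calls to $\alpha$-$\overline{\alpha}$-\textsc{Necklace-Splitting}$_2$ plus $O(N)$ bookkeeping per call, contributing $\mathcal{O}(T_2(n,N) + N)$. In total this yields $\mathcal{O}(T_2(n,N) + n\cdot N)$ as claimed. The main obstacle I anticipate is the $\alpha'$-bookkeeping in case (ii): the correction to $\alpha'$ depends on which side each removed component ends up on, which is not known before solving. Guessing the side assignment as part of the $O(1)$-sized enumeration, and relying on uniqueness to select the consistent candidate a posteriori, resolves this cleanly.
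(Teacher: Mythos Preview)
Your overall strategy matches the paper's: recurse on neighbouring intervals in case (i), and in case (ii) reduce to at most two components per colour and invoke the $\alpha$-$\overline{\alpha}$-\textsc{Necklace-Splitting}$_2$ subroutine, with the same runtime accounting.

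There is one genuine gap in your treatment of case (i). Reinserting the two cut points ``at positions determined by $\alpha_j$ and $\alpha_{j+1}$'' is only well-defined once you know which side of each interval is the positive side, and that depends on the parity of the full $n$-point cut permutation (\Cref{def:alpha-cut}). Removing two neighbouring intervals and later reinserting their cut points can flip this parity relative to the $(n-2)$-point permutation seen by the recursive call; when it does, the recursive $\alpha'$-cut extends to an $\overline{\alpha}$-cut of $\Colors$, not an $\alpha$-cut. The paper handles this by observing that whether the parity flips is determined solely by the position and colour indices of the two intervals (independent of the rest of the cut), checking this once, and swapping $S'$ and $\overline{S}'$ before reinserting. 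Since you already have both cuts from the recursive call, the fix is one line, but without it your correctness argument is incomplete.

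In case (ii) you take a slightly different route from the paper: you additionally guess, for each removed component, which side of the cut it lies on, and adjust $\alpha'$ accordingly. The paper instead sets a dummy value $\alpha'_i=1$ for the problematic colours, solves once per choice of components, and then \emph{shifts} the returned cut point within the chosen component to hit the true $\alpha_i$; this works because moving a cut point within its component affects only that colour's count, and the cut permutation is unchanged since all $n$ colours remain present. Both approaches are valid $O(1)$-sized enumerations; the paper's avoids the extra side guesses and the need to discard guesses with $\alpha'_i<1$, but yours is equally sound once you verify each candidate against the original necklace.
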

\begin{proof}
    We use \Cref{alg:first-phase} for $\alpha$-$\overline{\alpha}$-\textsc{Necklace-Splitting}.
    The algorithm is recursive and thus, if $n$ is small enough we can apply a brute force algorithm that iterates through every possible cut and checks if it has found the $\alpha$-cut or $\overline{\alpha}$-cut.
    Otherwise, the algorithm proceeds as follows.

    In a first step, neighbouring intervals are removed from the necklace and the $\alpha$- and $\overline{\alpha}$-cut is computed in the obtained necklace recursively.
    Then in these cuts the right cuts are added in the removed intervals to get the $\alpha$- and $\overline{\alpha}$-cut.
    We need to make sure to maintain the cut parity when inserting these cuts, so the algorithm checks first if the cut parity switches when inserting these cuts and if yes, it swaps the $\alpha$- and $\overline{\alpha}$-cuts obtained by the recursive call.
    That is, if the cut parity switches, the algorithm uses the obtained $\overline{\alpha}$-cut and turns it into the $\alpha$-cut by inserting the correct cuts in the removed intervals (and similarly turns the $\alpha$-cut into an $\overline{\alpha}$-cut).

    If there are no neighbouring intervals in the necklace, the algorithm determines the set of all colours with at least three components, call this set $C_3$.
    If $C_3$ is empty, the necklace consists only of colours each having at most two components, so we can use an algorithm for $\alpha$-$\overline{\alpha}$-\textsc{Necklace-Splitting$_2$} to compute the $\alpha$- and $\overline{\alpha}$-cut.

    Otherwise, if $C_3$ is not empty, the algorithm iterates over all combinations of components of colours in $C_3$.
    That is, each iteration considers a choice $(c_{i_1}, \dots, c_{i_{|C_3|}})$ of exactly one component $c_{i_k}$ of each colour $C_{i_k}$ in $C_3$.
    In this iteration, all components from colours in $C_3$ except the components from the current choice are removed.   
    We obtain a necklace that still consists of $n$ colours but each colour has at most two components.
    Therefore, we can use an algorithm for $\alpha$-$\overline{\alpha}$-\textsc{Necklace-Splitting$_2$} to find an $\alpha$- and $\overline{\alpha}$-cut in the new necklace.
    However, since we removed beads from the colours in $C_3$ the $\alpha$ might be invalid for this necklace.
    We therefore use a dummy value of 1 for these colours.
    Then the algorithm checks, if the obtained cuts can be turned to the corresponding $\alpha$- or $\overline{\alpha}$-cut by shifting the cuts along the beads within the components of the current iteration.
    
    \begin{algorithm}
    \caption{$\alpha$-$\overline{\alpha}$-\textsc{Necklace-Splitting}}
    \label{alg:first-phase}
    \begin{algorithmic}[1]
        \Require $C = \{C_1, \dots, C_n\}$ an $n$-separable necklace and $\alpha = (\alpha_1, \dots, \alpha_n)$ with $\alpha_i \in \{1, \dots, |C_i|\}$.
        \Ensure $S, \overline{S}$ where $S$ is the unique $\alpha$-cut and $\overline{S}$ the unique $\overline{\alpha}$-cut of $C$.
        \If {$n \leq 8$}
            \State \Return $\textsc{BruteForce}(C, \alpha)$ \Comment{checks every possible cut}
        \EndIf
        \If {there are neighbouring intervals $C_i, C_j$ in $C$}
            \State $S', \overline{S'} \gets$ $\alpha$-$\overline{\alpha}$-\textsc{Necklace-Splitting}$(C \setminus \{C_i,C_j\}, \alpha \setminus \{\alpha_i, \alpha_{j}\})$
            \label{line:first-phase-recursive1}
            \If {inserting cuts in $C_i, C_j$ switches the cut parity}
                \State $S', \overline{S'} \gets \overline{S'}, S$ 
            \EndIf
            \State $S , \overline{S} \gets S', \overline{S'}$ with the right $\alpha,\overline{\alpha}$-cuts in $C_i,C_j$
            \State \Return $S, \overline{S}$
            \label{line:first-phase-return1}
        \EndIf
        \label{line:first-phase-firstpart}
        \Statex \Comment{There are no neighbouring intervals}
        \State $C_3 \gets \{i \in [n] \mid C_i \text{ has at least three components}\}$
        \If {$|C_3| = 0$}
            \State \Return $\alpha$-$\overline{\alpha}$-\textsc{Necklace-Splitting$_2$}$(C, \alpha)$ 
            \label{line:first-phase-recursive2}
        \EndIf
        \State $S , \overline{S} \gets \emptyset$
        \For {all combinations $(c_{i_1}, \dots, c_{i_{|C_3|}})$ of components $c_{i_k} \subseteq C_{i_k}$ for $i_k \in C_3$}
            \label{line:first-phase-forloop}
            \State $C' \gets C \setminus \{C_{i_1}, \dots, C_{i_{|C_3|}}\} \cup \{c_{i_1}, \dots, c_{i_{|C_3|}}\}$\label{line:firstphaseRestrictToOne}
            \Statex \Comment{$C'$ has exactly one fixed component per colour in $C_3$}
            \Statex \Comment{$C'$ consists only of colours with at most two components}
            \State $\alpha' \gets \alpha$
            \State $\alpha'_i \gets 1$ for all $i \in C_3$ \Comment{Take a dummy value for $\alpha'_i$}
            \State $S', \overline{S'} \gets $$\alpha$-$\overline{\alpha}$-\textsc{Necklace-Splitting$_2$}$(C', \alpha')$ 
            \label{line:first-phase-recursive3}
            \If {$S'$ can be shifted to $\alpha$-cut in $C$ by shifting cuts in $c_{i_1}, \dots, c_{i_{|C_3|}}$}
                \State $S \gets S'$ shifted to $\alpha$-cut in $C$ shifting cuts in $c_{i_1}, \dots, c_{i_{|C_3|}}$
                \label{line:first-phase-assignment}
            \EndIf
            \If {$\overline{S'}$ can be shifted to $\overline{\alpha}$-cut in $C$ by shifting cuts in $c_{i_1}, \dots, c_{i_{|C_3|}}$}
                \State $\overline{S} \gets \overline{S'}$ shifted to $\overline{\alpha}$-cut in $C$ shifting cuts in $c_{i_1}, \dots, c_{i_{|C_3|}}$
            \EndIf
        \EndFor
        \State \Return $S, \overline{S}$
        \label{line:first-phase-return2}
    \end{algorithmic}
    \end{algorithm}
    To show that the algorithm is correct, we need to show that the algorithm returns the correct result in the case when removing neighbouring intervals (line~\ref{line:first-phase-return1}), and in the case when removing the components from colours with at least three components (line~\ref{line:first-phase-return2}).
    Moreover, we need to show that the recursive call when removing neighbouring intervals is valid (line~\ref{line:first-phase-recursive1}), and similarly, the calls to the algorithm for $\alpha$-$\overline{\alpha}$-\textsc{Necklace-Splitting$_2$} is valid (lines~\ref{line:first-phase-recursive2} and \ref{line:first-phase-recursive3}).

    The recursive call when removing neighbouring intervals (line~\ref{line:first-phase-recursive1}) is valid and returns a correct result, as the necklace of the recursive call has $n-2$ colours and is $(n-2)$-separable by \cite[Lemma 19]{nseparableNecklaces}.
    Note that inserting cuts in neighbouring intervals either changes the parity of the cut permutation for all cuts or for no cut.
    This can be seen as moving neighbouring intervals as a pair cannot add additional inversions to the cut permutation.
    Thus, moving them to the beginning of the necklace, there are always either an even or an odd number of inversions, regardless of the cut to augment.
    We can therefore indeed check whether inserting the cuts in the intervals changes parity and if it does we swap the $\alpha$- and $\overline{\alpha}$-cut of the necklace from the recursive call.
    This way, we make sure that when inserting the correct cut points in the removed intervals, the obtained cuts are indeed valid $\alpha$- and $\overline{\alpha}$-cuts.
    This shows that in this case (line~\ref{line:first-phase-return1}) the algorithm indeed returns the unique $\alpha$- and $\overline{\alpha}$-cuts.

    We already argued in the description of the algorithm that the calls to the algorithm for $\alpha$-$\overline{\alpha}$-\textsc{Necklace-Splitting$_2$} (lines~\ref{line:first-phase-recursive2} and \ref{line:first-phase-recursive3}) are valid, since these calls only work on $n$-separable necklaces that still have $n$ colours and additionally, each colour has at most two components.
    Also correctness for the case when there are no colours with at least three components (line~\ref{line:first-phase-recursive2}) follows from the correctness of the algorithm for $\alpha$-$\overline{\alpha}$-\textsc{Necklace-Splitting$_2$}.

    Next we show that the cut returned for the case when there are colours with at least three components (line~\ref{line:first-phase-return2}) is correct.
    We only argue for the $\alpha$-cut, the case for the $\overline{\alpha}$-cut is analogous.
    Notice that in the unique $\alpha$-cut there is exactly one component per colour in $C_3$ where the cut lies in.
    Therefore, there must be one iteration (in the for loop in line~\ref{line:first-phase-forloop}) for exactly that choice of components.
    For that iteration, the cut obtained by the call of the algorithm for $\alpha$-$\overline{\alpha}$-\textsc{Necklace-Splitting$_2$} (in line~\ref{line:first-phase-recursive3}) is a valid $\alpha$-cut in the necklace $C'$ for all colours except the ones in $C_3$, where $C'$ is the necklace obtained by removing all components from colours in $C_3$ except the components from the current choice.
    Then the $\alpha$-cut in $C'$ can be shifted to an $\alpha$-cut in $C$ and the algorithm will correctly remember this $\alpha$-cut (in line~\ref{line:first-phase-assignment}).
    Observe that we do not have to worry about changing cut parities, since the necklace $C'$ works on the same set of colours and the cuts are only shifted within components, so the cut permutation does not change.
    Moreover, since the $\alpha$-cut is unique the algorithm will only remember one $\alpha$-cut.
    Hence, the cut obtained in this case (line~\ref{line:first-phase-return2}) will be the correct $\alpha$-cut.
    
    For the runtime analysis note that the brute force step takes $\mathcal{O}(N)$ time.
    Moreover, as long as there are neighbouring intervals the algorithm takes $\mathcal{O}(N)$ per recursive call.
    In each recursive call the number of colours decreases by 2, so there are at most $\mathcal{O}(n)$ iterations.
    Hence, the runtime for removing neighbouring intervals is at most $\mathcal{O}(n \cdot N)$.

    By \Cref{lem:atmost4components} we have $|C_3| \leq 2$ and each colour in $C_3$ has at most four components.
    Therefore, the number of iterations over components of $C_3$ (in the loop of line~\ref{line:first-phase-forloop}) is constant, where each iteration takes time $\mathcal{O}(T_2(n, N) + N)$.

    Hence, the total runtime is $\mathcal{O}(n \cdot N + T_2(n, N) + N) = \mathcal{O}(T_2(n, N) + n\cdot N)$.
\end{proof}

\subsubsection{Further reductions until irreducibility}
To solve the $\alpha$-$\overline{\alpha}$-\textsc{Necklace-Splitting$_2$} problem we perform some further reductions to reach a necklace that we cannot further reduce using any techniques known to us, which we will call irreducible.

\begin{definition}\label{def:irreducible}
    An $n$-separable necklace $C$ is called \emph{irreducible} if and only if all of the following conditions hold.
    \begin{enumerate}[(i)]
        \item All colours have at most two components,
        \item there are no neighbouring intervals,
        \item neither the first nor the last component is an interval,
        \item the first and the last component are of different colours.
    \end{enumerate}
\end{definition}

We thus reduce $\alpha$-$\overline{\alpha}$-\textsc{Necklace-Splitting$_2$} to the following subproblem.
\begin{definition}$\alpha$-$\overline{\alpha}$-\textsc{Irr-Necklace-Splitting}

\begin{tabular}{ll}
\textbf{Input:}  & An $n$-separable irreducible necklace $C = \{C_1, \dots, C_n\}$,  \\
                 & a vector $\alpha = (\alpha_1, \dots, \alpha_n)$, $\alpha_i \in \{1, \dots, |C_i|\}$.\\
\textbf{Output:} & A pair $(S, \overline{S})$, where $S$ is the unique $\alpha$-cut and $\overline{S}$ the unique $\overline{\alpha}$-cut.
\end{tabular}
\end{definition}

To perform this reduction, we search for violations of any of the conditions (ii) to (iv) in \Cref{def:irreducible}. Note that although we already removed all neighbouring intervals in \Cref{sec:firstphase}, neighbouring intervals may be reintroduced by removing some components from the necklace when removing components from colours on line~\ref{line:firstphaseRestrictToOne} of \Cref{alg:first-phase}, or when dealing with any of the other three violations in the further reduction.

In the following we show how the $\alpha$-cut can be computed for each of the violations to conditions (ii) to (iv). We already know how to deal with violations to condition (ii), i.e., neighbouring intervals, from the previous subsection.

If condition (iii) is violated because the first component is an interval, the idea is to remove that colour and solve recursively on the smaller instance $C'$.
To obtain an $\alpha$-cut we take either the $\alpha$-cut or the $\overline{\alpha}$-cut in $C'$ and add the cut the first component at the correct bead.
Note that if the first component is of colour $C_i$ and the number of colours $C_j \in C'$ such that $j < i$ is odd, then augmenting the cut will switch the parity of the cut permutation, since adding the first cut in $C_i \notin C'$ adds an odd number of inversions to the permutation.
Moreover, adding a cut in the first component flips the positive and the negative side of the rest of the necklace once more. Thus, if the first component is such that the cut permutation parity does \emph{not} flip, we extend the $\overline{\alpha}$-cut of $C'$, and otherwise the $\alpha$-cut.

The same idea is applied if the last component is an interval $C_i$.
We again remove this colour, solve recursively and add a cut in the last component.
In this case, adding the cut to the last component does not additionally flip the positive and negative side of the rest of the necklace, but the permutation flips parity if the number of colours $C_j \in C'$ such that $j > i$ is odd.

It is easy to see that removing an interval at the beginning or the end of the necklace decreases the separability by $1$, since this interval can always be put on the correct side of a cut so one cut is needed to separate it from the rest. Thus, $C'$ is indeed $(n-1)$-separable, and the recursive call is legal.

If condition (iv) is violated because the first and last component are of the same colour, we use a similar idea. We obtain the necklace $C'$ by removing this colour.
Note again that $C'$ is $(n-1)$-separable. We again use recursion to get an $\alpha$- and $\overline{\alpha}$-cut for the remaining colours in $C'$. The $\alpha$-cut in $C$ can now be found by trying to augment both the $\alpha$-cut and the $\overline{\alpha}$-cut of $C'$ by inserting a cut in the first or last component. At least one of the two augmentations must succeed as removing the cut in $C_i$ from the unique $\alpha$-cut in $C$ must yield either the $\alpha$- or the $\overline{\alpha}$-cut of $C'$. Similarly, we can also find the $\overline{\alpha}$-cut of $C$.

For the sake of completeness we give pseudocode for the way these reductions can be implemented in \Cref{alg:second-phase}.
From the arguments above it follows that the algorithm is correct and runs in $\mathcal{O}(T_{irr}(n, N) + n \cdot N)$ time, where $T_{irr}(n, N)$ is the time needed to solve $\alpha$-$\overline{\alpha}$-\textsc{Irr-Necklace-Splitting}.
This proves the following proposition.
\begin{proposition}
    Let $T_{irr}(n, N)$ be the time to solve $\alpha$-$\overline{\alpha}$-\textsc{Irr-Necklace-Splitting} on an $n$-separable irreducible necklace with $n$ colours and a total of $N$ beads.
    Then $\alpha$-$\overline{\alpha}$-\textsc{Necklace-Splitting$_2$} on an $n$-separable necklace with a total of $N$ beads and $n$ colours can be solved in at most
    \(
        \mathcal{O}(T_{irr}(n, N) + n \cdot N)
    \)
    time.
    \label{prop:reduce-to-irreducible}
\end{proposition}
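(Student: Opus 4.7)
The plan is to implement the reductions sketched in the paragraphs preceding the proposition as a recursive algorithm, verify the correctness of each reduction rule, and analyze the running time. Given a necklace $C$ in which every colour has at most two components, the algorithm scans for a violation of conditions (ii), (iii), or (iv) of \Cref{def:irreducible}; if none is found, the necklace is irreducible and the $\alpha$-$\overline{\alpha}$-\textsc{Irr-Necklace-Splitting} solver is invoked. Otherwise one reduction is applied, producing a strictly smaller necklace on which the algorithm recurses, after which the returned pair of cuts is extended by re-inserting the cut(s) of the removed colour(s).

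For correctness I would verify each rule in turn. Violations of (ii) (two neighbouring intervals) are handled exactly as in the proof of \Cref{prop:reduce-to-two-components}: removing both intervals yields an $(n-2)$-separable necklace on $n-2$ colours by \cite[Lemma~19]{nseparableNecklaces}, and a single parity check decides whether to extend the recursive $\alpha$- or $\overline{\alpha}$-cut. For (iii), removing a boundary interval of colour $C_i$ gives an $(n-1)$-separable necklace $C'$ on $n-1$ colours, since such an interval can always be placed on one side of a separating cut at no extra cost. Inserting a cut in a first-component interval flips the positive and negative sides of the remaining necklace, while inserting it in a last-component interval does not; in both cases the parity of the cut permutation is shifted by an amount whose parity is fully determined by the position of $C_i$ in the colour ordering. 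Combining these two effects tells me deterministically whether the $\alpha$-cut of $C$ is obtained by augmenting $S'$ or $\overline{S'}$. For (iv), the first and last components share a colour $C_i$; removing $C_i$ again yields an $(n-1)$-separable necklace $C'$, and since the restriction of the unique $\alpha$-cut of $C$ to the remaining colours must equal either $S'$ or $\overline{S'}$, trying to augment both recursive cuts by inserting the missing cut into either the first or the last component is guaranteed to recover it. Uniqueness of the cuts (\Cref{thm:uniquenessOfSolution}) then ensures that at most one such augmentation succeeds for each target.

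For the running time, each reduction step strictly decreases the number of colours (by $1$ or $2$), so at most $O(n)$ reductions occur before the necklace becomes irreducible. Detecting the violation, rebuilding the shortened necklace, and splicing the removed cut points back into the returned solution can each be done in $O(N)$ time by one scan of the necklace. Summing over the recursion gives $O(n\cdot N)$ work outside the single final call to the irreducible solver, yielding the claimed bound $O(T_{irr}(n,N) + n\cdot N)$. The main obstacle I anticipate is the bookkeeping of cut permutation parity across the three reduction types: one has to track both whether splicing a cut back into the necklace flips the sides of the remaining cuts, and how many inversions it introduces into the cut permutation, and then combine these two effects consistently to choose the correct recursive cut ($\alpha$ or $\overline{\alpha}$) to extend in every case.
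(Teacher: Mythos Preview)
Your proposal is correct and follows essentially the same approach as the paper: the paper also presents a recursive procedure (its \Cref{alg:second-phase}) that dispatches on violations of conditions (ii)--(iv), handles each case with the same reduction and parity analysis you describe, and derives the $\mathcal{O}(T_{irr}(n,N)+n\cdot N)$ bound from at most $O(n)$ reductions each costing $O(N)$. The only cosmetic difference is that the paper includes an explicit brute-force base case for $n\leq 8$, which is absorbed into the same bound.
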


\begin{algorithm}
\caption{$\alpha$-$\overline{\alpha}$-\textsc{Necklace-Splitting$_2$}}
\label{alg:second-phase}
\begin{algorithmic}[1]
    \Require $C = \{C_1, \dots, C_n\}$ an $n$-separable necklace where each colour consists of at most 2 components and $\alpha = (\alpha_1, \dots, \alpha_n)$ with $\alpha_i \in [1, |C_i|]$.
    \Ensure $S, \overline{S}$ where $S$ is the unique $\alpha$-cut and $\overline{S}$ the unique $\overline{\alpha}$-cut of $C$.    
    \If {$n \leq 8$}
        \State \Return $\textsc{BruteForce}(C, \alpha)$ \Comment{checks every possible cut}
    \EndIf
    \If {there are neighbouring intervals $C_i, C_j$ in $C$}
            \State Proceed as in \Cref{alg:first-phase}.
        \EndIf
    \If {first component is an interval}
        \State $C_i \gets \text{ the colour of first component}$
        \State $S', \overline{S'} \gets$ $\alpha$-$\overline{\alpha}$-\textsc{Necklace-Splitting$_2$}$(C \setminus \{C_i\}, \alpha \setminus \{\alpha_i\})$
        \If {adding cut in $C_i$ does not flip cut permutation parity}
            \State $S', \overline{S'} \gets \overline{S'}, S'$
        \EndIf
        \State $S, \overline{S} \gets S', \overline{S'}$ with the right $\alpha,\overline{\alpha}$-cut in $C_i$
        \State \Return $S, \overline{S}$
    \EndIf
    \If {last component is an interval}
        \State $C_i \gets \text{ the colour of last component}$
        \State $S', \overline{S'} \gets$ $\alpha$-$\overline{\alpha}$-\textsc{Necklace-Splitting$_2$}$(C \setminus \{C_i\}, \alpha \setminus \{\alpha_i\})$
        \If {adding cut in $C_i$ does flip cut permutation parity}
            \State $S', \overline{S'} \gets \overline{S'}, S'$
        \EndIf
        \State $S , \overline{S} \gets S', \overline{S'}$ with the right $\alpha,\overline{\alpha}$-cuts in $C_i$
        \State \Return $S, \overline{S'}$
    \EndIf
    \If {first and last component, $c_1, c_2$, are from the same colour $C_i$}
        \State $S', \overline{S'} \gets$ $\alpha$-$\overline{\alpha}$-\textsc{Necklace-Splitting$_2$}$(C \setminus \{C_i\}, \alpha \setminus \{\alpha_i\})$ 
        \If {$S'$ can be shifted to $\alpha$-cut by shifting cut in $c_1$ or in $c_2$}
        \label{line:second-phase-ass1}
            \State $S \gets S'$ shifted to $\alpha$-cut by shifting cut in $c_1$ or in $c_2$
        \EndIf
        \If {$\overline{S'}$ can be shifted to $\alpha$-cut by shifting cut in $c_1$ or in $c_2$}
            \State $S \gets \overline{S'}$ shifted to $\alpha$-cut by shifting cut in $c_1$ or in $c_2$
        \EndIf
        \label{line:second-phase-ass2}
        \State Find the $\overline{\alpha}$-cut $\overline{S}$ analogous to lines~\ref{line:second-phase-ass1}--\ref{line:second-phase-ass2}
        \State \Return $S, \overline{S}$
    \EndIf
    \State \Return $\alpha$-$\overline{\alpha}$-\textsc{Irr-Necklace-Splitting}$(C, \alpha)$ \Comment{ $C$ is irreducible}
    
\end{algorithmic}
\end{algorithm}

\subsection{Structure of Irreducible Necklaces}
To be able to solve $\alpha$-$\overline{\alpha}$-\textsc{Irr-Necklace-Splitting}, we will first analyse the structure of the walk graphs of irreducible necklaces.

First, we claim that the walk graphs $G = (V, E)$ of irreducible necklaces with $n$ colours can be characterised as follows.
\begin{restatable}{lemma}{irreducibleCond}\label{lem:irreducibleconditions}
    A graph $G = (V, E)$ on $n$ vertices is the walk graph of an irreducible necklace with $n$ colours if and only if it satisfies all of the following conditions.
    \begin{enumerate}[\quad a)]        
        \item $G$ is semi-Eulerian but not Eulerian,
        \item for all vertices $v \in V$ we have $\deg(v) \in \{2, 3, 4\}$,
        \item there are no adjacent vertices of degree 2,
        \item the maximum cut of $G$ is at most $\mu(G) \leq n$ .
    \end{enumerate}
\end{restatable}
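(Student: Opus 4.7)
My plan is to prove the two directions of the equivalence separately. The forward direction is a straightforward unpacking of the definitions, while the backward direction requires an explicit construction of a necklace from the graph.

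For the forward direction, I assume $C$ is an irreducible necklace with walk graph $G$ and translate each irreducibility condition into one of (a)--(d). Condition (a) follows from \Cref{obs:semiEulerian} (every walk graph is semi-Eulerian) together with irreducibility condition (iv): since the first and last components have different colours, the two endpoints of any Eulerian traversal are distinct vertices, so $G$ has exactly two odd-degree vertices and is not Eulerian. For (b), the key observation is that an interior component of colour $c$ contributes $2$ to $\deg(c)$ while an endpoint component contributes $1$; combined with irreducibility condition (i) (at most two components per colour) and condition (iii) (endpoint components are not intervals, so they cannot be the sole component of their colour), this forces $\deg(c) \in \{2, 3, 4\}$. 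For (c), degree-2 vertices correspond exactly to interior intervals, and condition (ii) forbids adjacent intervals. Condition (d) is just $n$-separability via the identity $sep(C) = \mu(G)$.

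For the backward direction, I use the Eulerian structure of $G$ to build a necklace. Since $G$ is semi-Eulerian but not Eulerian, it has exactly two vertices $s, t$ of odd degree. I fix an Eulerian path $P = v_0, v_1, \ldots, v_m$ from $s$ to $t$ and construct $C$ by placing, in this order, one bead of colour $v_i$ at position $i$ on $\mathbb{R}$. Then the walk graph of $C$ is exactly $G$, since every edge of $P$ becomes an adjacency of consecutive beads and multi-edges of $G$ are respected because the path uses each edge once. For each $c \in V$, the number of components of $c$ in $C$ equals $\lceil \deg(c)/2 \rceil$, which is at most $2$ by (b), giving irreducibility condition (i). Since intervals correspond exactly to degree-2 vertices, (c) gives (ii). Condition (iv) is immediate from $s \neq t$, and $n$-separability follows from (d) via $sep(C) = \mu(G)$.

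The main obstacle is verifying irreducibility condition (iii) in the backward direction, that neither the first nor the last component of $C$ is an interval. The argument combines (a) and (b): the endpoint colours $s$ and $t$ must have odd degree, and the only odd value in $\{2, 3, 4\}$ is $3$, so each of $s$ and $t$ has degree exactly $3$ and therefore appears in two components of $C$. Hence the endpoint components are not intervals, completing the proof.
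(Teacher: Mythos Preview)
Your proof is correct and follows essentially the same approach as the paper: both directions proceed by translating between the irreducibility conditions and the graph conditions, and the backward direction is handled by walking an Eulerian path from one odd-degree vertex to the other and placing one bead per traversal. Your write-up is in fact more detailed than the paper's (which dismisses the backward direction as ``easy to see''), and in particular your explicit argument for condition (iii)---that the endpoint colours $s,t$ must have odd degree, hence degree exactly $3$ by (b), and therefore two components---fills in a step the paper leaves implicit.
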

\begin{proof}
    We first prove that any walk graph of an irreducible necklace fulfils the conditions.
    
    Since any walk graph of some necklace needs to be semi-Eulerian, so is $G$.
    Since the first and last component are of different colours, there are exactly two colours with odd degree, which implies that $G$ is not Eulerian.
    As each colour consists of at most two components the degree of each vertex is at most $4$. Additionally, since the first and last colour are not intervals, the starting and ending vertex have degree $3$ in the walk graph.
    Since there are no two neighbouring intervals in an irreducible necklace, there may not be any adjacent degree $2$ vertices.
    Lastly, the max-cut condition simply captures that the necklace must be $n$-separable.

    It is easy to see that conversely any graph satisfying conditions a)--d) corresponds to an irreducible necklace: one can simply take a Eulerian walk in that graph and place a point of colour $v$ whenever traversing $v$. This results in a necklace where all colours consist of at most two components, no two intervals are neighbouring, the first and the last component are not intervals and the first and last component are of different colour.
\end{proof}

If a graph $G$ is the walk graph for some irreducible necklace, we call $G$ an irreducible walk graph. We will see that all irreducible walk graphs on $n$ vertices are isomorphic. In particular, we claim that the walk graph is isomorphic to the graph $N_n$ defined as follows.
\begin{definition}\label{def:Nn}
    The cycle graph $C_n$ is the cycle on the vertex set $[n]$.
    The graph $P_{n-1}^{odd}$ is the graph with vertex set $[n]$ and edge set
    \[
        E(P_{n-1}^{odd}) = \left\{  \{2i-1, 2i+1\} \mid i \in \left[\lfloor \frac{n-1}{2} \rfloor \right] \right\},
    \]
    that is $P_{n-1}^{odd}$ is the graph on vertex set $[n]$ with a path of length $\lfloor (n-1)/2 \rfloor$ starting at vertex 1 and skipping every other vertex.
    Now the graph $N_{n}$ obtained by taking the union of the edges in $C_n$ and in $P_{n-1}^{odd}$ is called the \emph{irreducible necklace graph} of size $n$.
\end{definition}
See \Cref{fig:irr_walk_graph} for two examples of these graphs.
\begin{figure}
    \centering
    \includegraphics{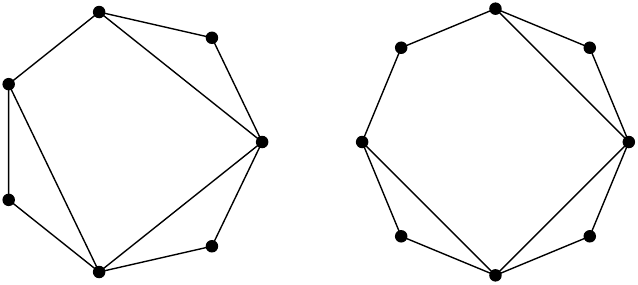}
    \caption{The graphs $N_7$ to the left and $N_8$ to the right.}
    \label{fig:irr_walk_graph}
\end{figure}
Solving $\alpha$-$\overline{\alpha}$-\textsc{Irr-Necklace-Splitting} heavily relies on the following proposition.
\begin{proposition}
    Let $C$ be an irreducible necklace with $n$ colours for some $n \geq 3$.
    The walk graph $G$ of $C$ is isomorphic to $N_n$.
    \label{prop:irreducible-graphs-isomorphic}
\end{proposition}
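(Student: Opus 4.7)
The plan is to determine the structure of $G$ in two stages, using the constraints from \Cref{lem:irreducibleconditions}.

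First I would pin down the degree sequence. By conditions a) and b), $G$ is semi-Eulerian with maximum degree $4$, so exactly two vertices have odd degree (necessarily equal to $3$) and the remaining $n-2$ vertices have degree $2$ or $4$. Let $k$ be the number of degree-$2$ vertices (the intervals) and $p = n-2-k$ the number of degree-$4$ vertices; then $G$ has $m = 2n - 1 - k$ edges. Since the intervals form an independent set by c), they induce a cut of size exactly $2k$, so condition d) yields $k \le \lfloor n/2 \rfloor$.

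The matching lower bound $k \ge \lfloor n/2 \rfloor$ is the delicate part: \Cref{cor:multigraphErdos} only gives $k \ge (n-3)/2$. To close the gap, I would use the general fact that every independent set $S \subseteq V$ satisfies $\sum_{v \in S} \deg(v) \le n$ (since $S$ induces a cut of exactly this size), together with the observation -- itself a cut argument applied to $I \cup \{c\}$ -- that every degree-$4$ vertex is adjacent to at least one interval. An independent-set argument (examining sets of the form interval--endpoint--degree-$4$-vertex, or $I \cup \{e\}$ for an endpoint $e$ with no edge to $I$) then rules out $k = \lfloor n/2 \rfloor - 1$ by producing a forbidden independent set of degree sum exceeding $n$.

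Once the degree sequence is fixed, I would determine the edge structure. For each interval $v$ with neighbours $u, w$ (both of degree $\ge 3$ by c), the independent-set inequality applied to sets containing $v$ forces $u$ and $w$ to be distinct and joined by an edge of $G$; otherwise, extending $\{v\}$ by a suitable independent subset of the non-intervals violates the bound. Symmetrically, the two degree-$3$ endpoints must be adjacent. This recovers the adjacency structure of $N_n$ given in \Cref{def:Nn}: the non-intervals lie on a cycle, with one interval inserted between each consecutive pair of odd-spine vertices. A natural cyclic labelling then produces the desired isomorphism $G \cong N_n$.

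The main obstacle is the lower bound $k \ge \lfloor n/2 \rfloor$: the Poljak--Turzík bound is too weak by a small additive amount, so the sharp value must be obtained by the more delicate independent-set argument, whose hard step is the case analysis of interactions between the endpoints, the intervals, and the degree-$4$ vertices.
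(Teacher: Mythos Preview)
Your overall strategy---bound the degree sequence via cut arguments, then reconstruct the adjacencies---matches the paper's approach in spirit, since for an independent set $S$ the cut $S$ has size exactly $\sum_{v\in S}\deg(v)$, so your ``independent-set inequality'' and the paper's max-cut arguments are the same tool. However, there is a concrete error in your structural conclusion.

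You assert that ``the two degree-$3$ endpoints must be adjacent'' and that ``the non-intervals lie on a cycle.'' For even $n$ this is false in $N_n$ itself: in $N_8$, the two degree-$3$ vertices are $1$ and $7$, which are \emph{not} adjacent (the extra edges $P_{n-1}^{odd}$ form only a path $1$--$3$--$5$--$7$, not a cycle). So any argument that forces adjacency of the endpoints for all $n$ is already doomed; the parity of $n$ genuinely matters here. The paper handles this by treating the two parities separately: for odd $n$ it proves the endpoints are adjacent and subdivides that edge to reduce to the even case, while for even $n$ the endpoints are not adjacent and the reconstruction must start at one endpoint and terminate at the other.

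Separately, your lower bound $k\ge\lfloor n/2\rfloor$ is the crux of the whole proof, and the sketch ``examining sets of the form interval--endpoint--degree-$4$-vertex'' does not come close to covering it. The paper needs a multiplicity bookkeeping (how many interval-edges touch each bicomponent) followed by a lengthy case analysis---including a non-obvious augmentation through auxiliary vertices $Y$ and $Z$ two steps away from the troublesome bicomponents---to rule out $k=\lfloor n/2\rfloor-1$. Your proposal does not indicate how to produce the required large cut when, say, every degree-$4$ vertex already touches some interval; that is precisely the case where simple augmentations of $I$ by one or two vertices fail and deeper structure is needed.
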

The proof of this proposition is quite technical and lengthy.
Moreover, the proof itself is not instructive for the further design of the algorithm.
We present the proof in \Cref{app:structure}.

\subsection{Splitting Irreducible Necklaces}
As mentioned above, we will solve $\alpha$-$\overline{\alpha}$-\textsc{Irr-Necklace-Splitting} by formulating it as an edge labelling problem in the walk graph, which will then be formulated as an integer linear program (ILP).
While at first glance a reduction to an ILP might be counterintuitive due to the \textsf{NP}-completeness of ILP, it turns out that in our special case the ILP is tractable in polynomial time.

\subsubsection{The cut labelling problem}
Recall that the edges of the walk graph correspond to intervals between beads of the necklace --- where one of the beads is the last bead of one component and the other is the first bead of a component of some different colour.
Any choice of one component per colour to put a cut point puts some of these intervals on the positive side of the cut, and others on the negative side. In this way, such a choice of components induces a labelling of the edges of the walk graph by ``positive'' and ``negative''. Of course, not every labelling corresponds to a cut. We will now collect some properties of labellings that do correspond to cuts.

For every component of a colour $c$ there are two edges (except if the component appears at the beginning or at the end of the necklace): one edge corresponding to the change of colours when entering the component and one edge when leaving the component.
We call these pairs of edges \emph{traversals}.
For a vertex $v$ define $\trav(v) := \{(e,e') \in E \times E \mid e,e' \text{ is a traversal of $v$}\}$ to be the set of traversals of $v$. The two edges of a traversal are labelled the same if and only if the corresponding component is not chosen to contain a cut point. Thus, if we now consider a vertex corresponding to an interval, it must have exactly one positively labelled incident edge, and one negatively labelled incident edge. Since a colour has exactly one component with a cut point, a bicomponent that is neither the first nor the last colour of the necklace must have either one positive and three negative, or one negative and three positive incident edges.

The edges of the walk graph only contain information about the intervals between components of the necklace. However, we additionally know that if $n$ is even, the interval from $-\infty$ to the first cut point and the interval from the last cut point to $\infty$ must be on the same side of the cut, as there is an even number of cut points.
Similarly for $n$ odd, the two intervals are on opposite sides of the cut. To capture this information in the edge labelling, we slightly modify our walk graph. We add an edge between the vertices corresponding to the first and last component if $n$ is even, or a subdivided edge (with a new degree two vertex) between these two vertices if $n$ is odd. This newly obtained graph is called the \emph{label graph} (see \Cref{fig:irr_colour_graph}). We see that a choice of one component per colour also induces a labelling of these newly added edges. When $n$ is odd, we see that the two edges incident to the newly introduced vertex $v$ must have different labels, just like if $v$ was a regular vertex corresponding to an interval.

\begin{figure}
    \centering
    \includegraphics{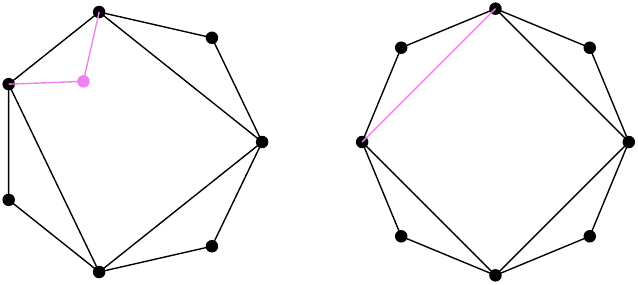}
    \caption{Turning the walk graph to the label graph for $n=7$ (left) and $n=8$ (right). The coloured edges and vertices are added to the walk graph to obtain the label graph.}
    \label{fig:irr_colour_graph}
\end{figure}

So far, all of our properties are invariant under flipping the labelling of all edges. However, by the cut permutation, any choice of one component per colour fixes the positive side of the cut. We can see that for every labelling fulfilling the previous properties, exactly one of the labelling and its inverse are actually the labelling induced by some cut.

We thus have now found a characterisation of the labellings that are induced by choices of one component per colour to cut. However, we are interested only in $\alpha$-cuts. We thus would now like to characterise the labellings that are induced by $\alpha$-cuts.

Clearly, if both edges of a traversal $(e,e')$ are labelled negative, all beads of the corresponding component $c$ of colour $C_v$ must lie on the negative side of the cut. This cut can only be an $\alpha$-cut if $\alpha_c$ is low enough, i.e., $\alpha_v\leq |C_v|-|c|$. Similarly, if both edges were labelled positive, we would have to have $\alpha_v\geq |c|+1$.
It turns out that if an edge labelling of the label graph fulfils this condition for some fixed $\alpha$, we can actually place the cut points in the corresponding components to get an $\alpha$-cut. We thus define the following problem. 

For notational convenience we let $e(v,i,1)$ and $e(v,i,2)$ be the two edges corresponding to the $i$-th traversal of the vertex $v$. That is, $(e(v,i,1),e(v,i,2)) \in \trav(v)$, and this traversal corresponds to the $i$-th component of that colour. In our case $e(v,i,j)$ is only defined for $i \in \{1, 2\}$ for every vertex; for intervals only for $i = 1$. We furthermore define $w_v(i)$ as the size of the $i$-th component of colour $v$.

\begin{definition}\label{def:labelling} We define $\alpha$-\textsc{Cut-Labelling} as the following problem.\\
\textbf{Input:}  The label graph $G = (V, E)$ of some $n$-separable irreducible necklace 
                 $C = \{C_1, \dots, C_n\}$, and a vector $\alpha = (\alpha_1, \dots, \alpha_n)$, $\alpha_i \in \{1, \dots, |C_i|\}$.\\
\textbf{Output:} A subset of the edges $\mathcal{P} \subseteq E$ to be labelled positively (the negatively labelled edges are $\mathcal{N} := E \setminus \mathcal{P}$) such that:
                 \begin{itemize}
                 \item[(1)] $\forall v \in V: |\{e \in \mathcal{P} \mid v \in e\}| \in \{1, 3\}$,
                 \item[(2)] $\forall v \in V, i\in\{1,2\}: \{e(v,i,1),e(v,i,2)\} \subseteq \mathcal{P} \implies \alpha_v \geq w_v(i) + 1$, 
                 \item[(3)] $\forall v \in V,i\in\{1,2\}: \{e(v,i,1),e(v,i,2)\} \subseteq \mathcal{N} \implies \alpha_v \leq |C_v| - w_v(i)$,
                \item[(4)] all edges in $\mathcal{P}$ lie on the positive side of the cut induced by the labelling $(\mathcal{P,N})$.
                \end{itemize}
\end{definition}

A solution of an $\alpha$-\textsc{Cut-Labelling} instance for a given $\alpha$-vector is called an \emph{$\alpha$-labelling}.
The concluding result of this section is the following.
\begin{proposition}
    Let $T_{lab}(n)$ be the time required to solve $\alpha$-\textsc{Cut-Labelling} on the label graph of any irreducible necklace with $n$ colours.
    We can solve $\alpha$-$\overline{\alpha}$-\textsc{Irr-Necklace-Splitting} on any necklace with $n$ colours and $N$ total beads in time $\mathcal{O}(T_{lab}(n, N) + N)$.
    \label{prop:splitting-to-colouring}
\end{proposition}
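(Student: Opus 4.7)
The plan is a direct reduction: given an $\alpha$-$\overline{\alpha}$-\textsc{Irr-Necklace-Splitting} instance, we build the label graph $G$ of $C$ once in $\mathcal{O}(N)$ time, invoke the $\alpha$-\textsc{Cut-Labelling} algorithm twice (once for the vector $\alpha$ and once for $\overline{\alpha}$), and then reconstruct the two cuts from the two labellings in $\mathcal{O}(N)$ time.

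Building $G$ is done by a single left-to-right sweep over the necklace: we create a vertex per colour and an edge whenever two consecutive beads have different colours, and we finish by adding either the direct edge or the subdivided edge between the vertices of the first and last components, depending on the parity of $n$. During the same sweep we record, for every component, its size $w_v(i)$ and pointers to its first and last beads, so that a cut position can later be located inside a component in constant time.

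The heart of the proof is showing that an $\alpha$-labelling of $G$ uniquely determines the $\alpha$-cut and that we can extract it efficiently. Given a labelling $(\mathcal{P},\mathcal{N})$ satisfying conditions (1)--(4) of \Cref{def:labelling}, we reconstruct $s_v$ for each colour $v$ as follows. For a degree-2 vertex (an interval), condition (1) forces exactly one of its two incident edges to be positive, so its single traversal is \emph{mixed} and $s_v$ lies in the unique component. For a degree-4 vertex (a bicomponent), condition (1) forces either 1 or 3 positive incident edges, hence exactly one of its two traversals is monochromatic and the other is mixed; the mixed traversal identifies the component of $C_v$ that contains $s_v$. The exact position of $s_v$ inside that component is then uniquely fixed by $\alpha_v$: by conditions (2) and (3) the monochromatic traversal contributes either all $w_v(i)$ beads of its component or none of them to the positive side $C^+$, so the remaining beads of colour $v$ required to make $|C^+ \cap C_v| = \alpha_v$ come from a known prefix (or suffix, depending on the orientation of the mixed traversal) of the cut component. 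Condition (4) ensures that the side we call ``positive'' actually agrees with the positive side prescribed by the cut parity of \Cref{def:alpha-cut}, so no global flip is needed.

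Correctness follows from \Cref{thm:uniquenessOfSolution}: the unique $\alpha$-cut of $C$ induces, via the discussion preceding \Cref{def:labelling}, a labelling satisfying all four conditions, so $\alpha$-\textsc{Cut-Labelling} has at least one solution, and by the extraction above every solution yields a valid $\alpha$-cut; uniqueness of the cut forces uniqueness of the labelling and of the reconstructed cut points. The same argument applied to $\overline{\alpha}$ delivers the $\overline{\alpha}$-cut. The main conceptual obstacle is bookkeeping orientations consistently when translating between edge labels of $G$ and counts of beads on either side of a cut, especially around the first and last components where the added label-graph edges complete the traversals; once a convention for ``entering'' versus ``leaving'' a component is fixed, the extraction is purely local per colour and takes $\mathcal{O}(N)$ time in total, giving the claimed $\mathcal{O}(T_{lab}(n,N)+N)$ bound.
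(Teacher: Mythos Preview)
Your proposal is correct and follows essentially the same approach as the paper: build the label graph in $\mathcal{O}(N)$, call the $\alpha$-\textsc{Cut-Labelling} oracle for both $\alpha$ and $\overline{\alpha}$, and reconstruct each cut colour by colour in $\mathcal{O}(N)$ by identifying the mixed traversal as the cut component and then reading off the exact position from $\alpha_v$ via conditions (2) and (3). The paper phrases the reconstruction slightly differently (``place the cut at an endpoint of the cut component, then slide it to the correct position''), but the underlying argument and the use of conditions (1)--(4) are the same.
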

\begin{proof}
    To solve $\alpha$-$\overline{\alpha}$-\textsc{Necklace-Splitting} on the necklace $C$ we simply solve $\alpha$-\textsc{Cut-Labelling} and $\overline{\alpha}$-\textsc{Cut-Labelling} on its label graph and translate the labellings back to $\alpha$- and $\overline{\alpha}$-cuts. We only consider the $\alpha$-labelling, since the case for $\overline{\alpha}$ works exactly the same way.
    
    We begin by placing a cut point in the first or last bead of each component that is cut according to the $\alpha$-labelling. Thanks to conditions (1) and (4) of an $\alpha$-labelling, this already yields an $\alpha'$-cut for some other $\alpha'$. The cut points can now be moved within their components. Moving the cut point of colour $v$ within its component only changes the number of points on the positive side of colour $v$ and leaves other colours unaffected. Moving the cut point from one side of the component to the other allows us to include any number of points of the component on the positive side, from $1$ point up to all points. Thanks to conditions (2) and (3), each cut point can be moved such that exactly $\alpha_v$ points of the colour are on the positive side. We have thus reached an $\alpha$-cut we can return. Since this movement can be computed for each colour individually, it can be performed in $O(N)$. The label graph can also be computed in $O(N)$, and thus the statement follows.

\end{proof}

Note that for a given $\alpha$, there must be a unique $\alpha$-labelling, since applying the strategy in the proof above to distinct $\alpha$-labellings would yield distinct $\alpha$-cuts, a contradiction to the $\alpha$-Ham Sandwich theorem.

\subsubsection{An ILP formulation}
In this section we model $\alpha$-\textsc{Cut-Labelling} as an Integer Linear Program (ILP).
To do this, we formulate an ILP such that any solution of the ILP corresponds to a labelling fulfilling conditions (1)--(3) in \Cref{def:labelling} and vice versa.
Condition (4) can then be addressed by observing that for a given instance of $\alpha$-\textsc{Cut-Labelling} there are only a constant number of labellings fulfilling conditions (1)--(3): namely the labelling induced by the $\alpha$-cut and the labelling induced by the $\overline{\alpha}$-cut, but with the label of all edges swapped. Instead of merely solving for one solution of the ILP, we will later just enumerate all feasible solutions and check the result against condition (4).

Our ILP formulation will only use binary variables.
For the sake of clarity we use \textbf{bold} face for variables in our ILP.
To model that each edge is either labelled positively or negatively, we introduce two binary variables per edge.
For each edge $e$ in the label graph add two binary variables $\pp_{e}$ and $\nn_{e}$ with the interpretation that $\pp_e = 1 \iff e$ is labelled positive and $\nn_e = 1 \iff e$ is labelled negative.
Since any edge must have exactly one label we add the constraint $\pp_{e} + \nn_{e} = 1$ for every edge $e$.

To model constraints on traversals, we introduce new binary variables $\pp_{v,i}$ and $\nn_{v,i}$ for every vertex $v$ and possible indices of traversals $i$.
The interpretation of these variables are that in the $i$-th traversal of $v$ both the edges are labelled both positive or both negative, respectively. To keep these new variables consistent with the variables for edges, we wish to add the constraints $\pp_{v,i} = \pp_{e(v,i,1)} \cdot \pp_{e(v,i,2)}$ and $\nn_{v,i} = \nn_{e(v,i,1)} \cdot \nn_{e(v,i,2)}$.
Note that these constraints are not linear, but they can be linearised as follows.
\begin{align*}
    \pp_{v,i} &\leq \pp_{e(v,i,1)} \\
    \pp_{v,i} &\leq \pp_{e(v,i,2)} \\
    \pp_{v,i} &\geq \pp_{e(v,i,1)} + \pp_{e(v,i,2)} - 1 \\
     \nn_{v,i} &\leq \nn_{e(v,i,1)} \\
     \nn_{v,i} &\leq \nn_{e(v,i,2)} \\
     \nn_{v,i} &\geq \nn_{e(v,i,1)} + \xx_{e(v,i,2)} - 1
\end{align*}
We also wish to encode whether the edges corresponding to the $i$-th traversal have the \textbf{s}ame label, no matter what this label is.
We thus introduce the variables $\fs_{v,i}$ for each vertex $v$ and its traversals $i$.
They are related to $\pp_{v,i}$ and $ \nn_{v,i}$ as follows.
\[
    \fs_{v,i} = \pp_{v,i} +  \nn_{v,i}
\]
Now we use these variables to encode condition (1) of a cut labelling.
This can be done by the following constraints.
\begin{align*}
    &\forall v \in V, \deg(v) = 4: &\fs_{v,1} + \fs_{v,2} &= 1& & \\
    &\forall v \in V, \deg(v) = 2: &\fs_{v,1} &= 0& &
\end{align*}
Lastly, we need to encode the conditions the $\alpha$-vector poses on the labelling, that is conditions (2) and (3) of a cut labelling.
Note that these conditions are only necessary for vertices with multiple traversals, since for intervals the conditions are always satisfied.

Condition (2) can be implemented using the constraints $\pp_{v,i} \cdot w_v(i) + 1 \leq \alpha_v$.
For condition (3) we need to be more careful since only using $ \nn_{v,i} \cdot w_v(i) \geq \alpha_v$ results in a violated condition if $ \nn_{v,i} = 0$.
We thus need to include the case when $ \nn_{v,i} = 0$ and add a trivial upper bound on $\alpha_v$.
This can be done by using $|C_v|$, the total number of beads for that vertex.
We obtain the following constraints for all vertices $v$ with $\deg(v) =4$.
\begin{align*}
     \nn_{v,1} \cdot w_v(2) + (1- \nn_{v,1})\cdot |C_v| &\geq \alpha_v \\
     \nn_{v,2} \cdot w_v(1) + (1- \nn_{v,2})\cdot |C_v| &\geq \alpha_v \\
    \pp_{v,1} \cdot w_v(1) + 1 &\leq \alpha_v \\
    \pp_{v,2} \cdot w_v(2) + 1 &\leq \alpha_v    
\end{align*}

Note that in general ILPs also optimise an objective function.
However, we are only interested in the satisfying assignments.
To summarise our ILP is given as follows.

\begin{definition}$\alpha$-\textsc{Cut-Labelling-ILP}

Let $G = (V,E)$ be a label graph of some $n$-separable necklace $C = \{C_1,\dots, C_n\}$ with colours consisting of at most two components. 
The $\alpha$-\textsc{Cut-Labelling-ILP} is the ILP given by the feasibility of the following constraints.
\begin{align*}
    \forall e \in E: &&\pp_{e}, \nn_{e} &\in \{0,1\} \\
    \forall v \in V,i\in[deg(v)/2]: &&\pp_{v,i},  \nn_{v,i}, \fs_{v,i} &\in \{0,1\} \\
    \forall e \in E: &&\pp_{e} + \nn_{e} &= 1 \\
    \forall v \in V,i\in[deg(v)/2]: &&  \pp_{v,i} &\leq \pp_{e(v,i,1)} \\
                     && \pp_{v,i} &\leq \pp_{e(v,i,2)} \\
                     && \pp_{e(v,i,1)} + \pp_{e(v,i,2)} - 1 &\leq \pp_{v,i}  \\
                     &&  \nn_{v,i} &\leq \nn_{e(v,i,1)} \\
                     &&  \nn_{v,i} &\leq \nn_{e(v,i,2)} \\
                     && \nn_{e(v,i,1)} + \nn_{e(v,i,2)} - 1 &\leq  \nn_{v,i} \\  
                     && \fs_{v,i} &= \pp_{v,i} +  \nn_{v,i} \\
    \forall v \in V, \deg(v) =2: &&\fs_{v,1} &= 0 \\
    \forall v \in V, \deg(v) =4: && \fs_{v,1} + \fs_{v,2} &= 1 \\
             && \nn_{v,1} \cdot w_v(2) + (1- \nn_{v,1})\cdot |C_v| &\geq \alpha_v \\
             && \nn_{v,2} \cdot w_v(1) + (1- \nn_{v,2})\cdot |C_v| &\geq \alpha_v \\
             &&\pp_{v,1} \cdot w_v(1) + 1 &\leq \alpha_v \\
             &&\pp_{v,2} \cdot w_v(2) + 1 &\leq \alpha_v 
\end{align*}    
\end{definition}

By construction, $\alpha$-\textsc{Cut-Labelling-ILP} is equivalent to conditions (1)--(3) of $\alpha$-\textsc{Cut-Labelling} in the sense that any solution of the former can be turned to a labelling fulfilling the conditions and vice versa, by simply taking the labelling given by $\pp_e$ and $\nn_e$ and vice versa.

\subsubsection{Solving cut labelling for irreducible necklaces}
In this section we show that $\alpha$-\textsc{Cut-Labelling-ILP} is solvable in polynomial time.
To do that we leverage the structure of the ILP given by the fact that the underlying necklace is irreducible. We wish to use the following FPT algorithm by Jansen and Kratsch.

\begin{theorem}[\cite{ILPtreewidth}]\label{thm:ILPtreewidth}
    For an ILP instance $\mathcal{I}$ where each variable is in the domain $\mathcal{D}$, feasibility can be decided in time $\mathcal{O}(|\mathcal{D}|^{\mathcal{O}(\mathsf{tw}(P(\mathcal{I})))}\cdot |\mathcal{I}|)$, where $\mathsf{tw}(P(\mathcal{I}))$ denotes the treewidth of the primal graph of the instance.
    Moreover, if the ILP only has a constant number of feasible solutions, they can be enumerated in the same time bound.
    \label{thm:fpt-ilp}
\end{theorem}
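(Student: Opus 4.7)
The plan is to prove this by the standard paradigm of dynamic programming on a tree decomposition of the primal graph, which is the canonical approach for parameterised algorithms whose parameter is treewidth. First, I would compute a tree decomposition $(T, \{B_t\}_{t \in T})$ of $P(\mathcal{I})$ of width close to $\mathsf{tw}(P(\mathcal{I}))$, for instance via Bodlaender's algorithm, and normalise it into a nice tree decomposition whose nodes are of the standard four types: leaf, introduce, forget, and join.

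The key structural observation driving the approach is that, by the definition of the primal graph, the variables occurring in any single constraint of $\mathcal{I}$ are pairwise adjacent in $P(\mathcal{I})$ and hence must lie together in at least one bag of any tree decomposition of $P(\mathcal{I})$. I would exploit this by assigning each constraint to one bag that contains all of its variables, so that every constraint is verified exactly once over the course of the DP.

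The dynamic program itself stores, at every node $t \in T$, a table $D_t$ indexed by assignments $\sigma \colon B_t \to \mathcal{D}$, where $D_t(\sigma)$ records whether $\sigma$ extends to a feasible partial assignment over all variables forgotten in the subtree rooted at $t$, satisfying all constraints assigned to bags in that subtree. Since $|B_t| \le \mathsf{tw}(P(\mathcal{I})) + 1$, each table has at most $|\mathcal{D}|^{\mathsf{tw}(P(\mathcal{I})) + 1}$ entries. Tables are filled bottom up using the standard rules: introduce nodes enumerate a value for the newly introduced variable; forget nodes project out the forgotten variable by taking a disjunction over its values; join nodes combine the two child tables by conjunction on matching restrictions to the shared bag. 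Feasibility is then read off from whether the root table contains an accepting entry, yielding the claimed running time since each of the $O(|\mathcal{I}|)$ nodes is processed in $|\mathcal{D}|^{O(\mathsf{tw}(P(\mathcal{I})))}$ time.

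For the enumeration statement in the second part of the theorem, I would augment each DP cell with a list of witness assignments rather than a single boolean, or equivalently backtrack through the DP starting from accepting root entries; since only a constant number of solutions exist, the total branching work is bounded by a constant and the asymptotic running time is unchanged. The main obstacle in a detailed write-up is not conceptual but bookkeeping: carefully defining which constraints are tested at which bag so that each one is checked exactly once, and verifying that the introduce, forget, and join update rules correctly preserve the semantic invariant of the tables. These issues are routine and handled identically in a wide range of classical treewidth-based algorithms, so the proof would follow the blueprint of \cite{ILPtreewidth} with no further ingredients specific to our setting.
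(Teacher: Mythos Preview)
The paper does not prove this theorem at all: it is quoted verbatim as a result of Jansen and Kratsch~\cite{ILPtreewidth} and used as a black box, with no accompanying proof or proof sketch. There is therefore nothing in the paper to compare your proposal against.

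That said, your outline is the expected one and matches the standard approach underlying such results: dynamic programming over a (nice) tree decomposition of the primal graph, using the fact that every constraint's support forms a clique in $P(\mathcal{I})$ and hence is contained in some bag, so constraints can be checked locally. The table sizes $|\mathcal{D}|^{\tw(P(\mathcal{I}))+1}$ and the introduce/forget/join transitions you describe are exactly right, and the enumeration argument via backtracking through a constant number of accepting root entries is sound. One minor point worth tightening in a full write-up is the accounting for the time to evaluate all constraints assigned to a given bag on a given assignment, and being explicit that the number of bags is linear in the number of variables (hence in $|\mathcal{I}|$); but these are routine and do not affect the claimed bound.
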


To understand this theorem we need the following two definitions.
\begin{definition}[Primal graph]
    For an ILP $\mathcal{I}$ given by $\min c^\top \xx$ subject to $A \xx \leq b$, the \emph{primal graph} $P(\mathcal{I})$ is the graph having a vertex for each variable $\xx_i$ in $\mathcal{I}$ and an edge between two variables $\xx_i$ and $\xx_j$ if and only if there is a constraint $A_k$ such that $A_{k,i} \neq 0$ and $A_{k,j} \neq 0$.
    That is, there is an edge if and only if $\xx_i$ and $\xx_j$ occur together in a constraint.
\end{definition}
\begin{definition}[Treewidth]
    Let $G = (V,E)$ be a graph. A \emph{tree decomposition} $(\mathcal{T} = (\mathcal{V}, A), B: \mathcal{V} \to 2^{V})$ consists of a tree $\mathcal{T}$ and a mapping $B$ mapping vertices of $\mathcal{T}$ to subsets of vertices of $G$, commonly referred to as bags, such that the following three conditions hold.
    \begin{enumerate}
        \item $\bigcup_{x \in \mathcal{V}} B(x) = V$, 
        \item $\forall e = \{u,v\} \in E: \exists x \in \mathcal{V} : \{u,v\} \subseteq B(x)$,
        \item $\forall v \in V: \mathcal{T}[\{x \in \mathcal{V} \mid v \in B(x)\}]$ is connected.
    \end{enumerate}
    The \emph{width} of a tree decomposition, $\mathsf{wd}(\mathcal{T}, B)$, is then given by the size of a largest bag minus 1.
    That is, $\mathsf{wd}(\mathcal{T}, B) := \max \{|B(x)| \mid x \in \mathcal{V}\} - 1$.
    The \emph{treewidth} of $G$, denoted by $\mathsf{tw}(G)$, is then given by the minimal width among all tree decompositions of $G$, formally 
    \[  
        \mathsf{tw}(G) := \min_{(\mathcal{T},B) \text{ tree decomposition of $G$} } \mathsf{wd}(\mathcal{T}, B).
    \]
    \label{def:treewidth}
\end{definition}
Treewidth is a very well-studied graph parameter. We now wish to show that the primal graph of the $\alpha$-\textsc{Cut-Labelling-ILP} has a constant treewidth. Since the ILP is boolean, the domain is $\mathcal{D} = \{0,1\}$, and thus from this \Cref{thm:ILPtreewidth} gives us polynomial runtime for enumerating all solutions of  $\alpha$-\textsc{Cut-Labelling-ILP}, and thus for solving $\alpha$-\textsc{Cut-Labelling}.

Our approach for bounding the treewidth will be of a hierarchical nature. We assume that this approach is not novel, but we were unable to find a fitting reference.
\begin{definition}
    Let $G = (V, E)$ be a graph. An \emph{enhanced graph} of $G$ is a graph $\mathcal{G} = (\mathcal{X}, \mathcal{Y})$, with vertices $\mathcal{X} = \{X_1, \dots, X_k\}$ with $X_i \subseteq V$, where the $X_i$ are called hypervertices, such that $\bigcup_{X_i \in \mathcal{X}} X_i = V$.
    The edges of $\mathcal{G}$ are given by 
    \[
        \mathcal{Y} = \left\{\{X_i, X_j\} \subseteq \mathcal{X} \mid X_i \cap X_j \neq \emptyset \text{ or } \exists \{u,v\} \in E: u \in X_i \text{ and } v \in X_j\right\}.
    \]
\end{definition}
An enhanced graph $G$ can be viewed as a graph capturing the structure of $G$.
The vertices of the enhanced graph are subgraphs of $G$.
Hypervertices are connected if they share an element or there is an edge in $G$ connecting an element from one hypervertex and an element from the other.
The treewidth of an enhanced graph naturally gives an upper bound on the treewidth of the graph itself, as seen in the following lemma.
\begin{lemma}
    Consider a graph $G = (V, E)$ with an enhanced graph $\mathcal{G} = (\mathcal{X}, \mathcal{Y})$ of $G$.
    Let $w := \max \{|X_i| \mid X_i \in \mathcal{X}\}$ be the size of a largest hypervertex in $\mathcal{G}$.
    Then the treewidth of $G$ is at most $\tw(G) \leq w \cdot (\tw(\mathcal{G})+1) - 1$.
    \label{lem:treewidth-enhancement}
\end{lemma}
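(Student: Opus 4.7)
The plan is to construct a tree decomposition of $G$ from an optimal tree decomposition of $\mathcal{G}$ by expanding each hypervertex in every bag. Concretely, let $(\mathcal{T}=(\mathcal{V},A), B)$ be a tree decomposition of $\mathcal{G}$ of width $\tw(\mathcal{G})$. Define a new labelling $B': \mathcal{V}\to 2^V$ on the same tree $\mathcal{T}$ by
\[
B'(x) := \bigcup_{X_i \in B(x)} X_i.
\]
Since $|B(x)|\le \tw(\mathcal{G})+1$ and each hypervertex has size at most $w$, we immediately get $|B'(x)|\le w\cdot(\tw(\mathcal{G})+1)$, so the width of $(\mathcal{T},B')$ is at most $w\cdot(\tw(\mathcal{G})+1)-1$. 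It remains to verify the three tree decomposition axioms of \Cref{def:treewidth} for $(\mathcal{T}, B')$.

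Coverage of vertices is immediate: by definition of an enhanced graph $\bigcup_i X_i = V$, and since every $X_i$ occurs in some bag of $(\mathcal{T}, B)$, every $v\in V$ is contained in some $B'(x)$. For the edge axiom, take any edge $\{u,v\}\in E$ and pick hypervertices $X_i \ni u$ and $X_j \ni v$. If $X_i = X_j$ then any bag containing $X_i$ contains both $u$ and $v$ in its expansion. Otherwise, by the definition of $\mathcal{Y}$, the edge $\{u,v\}\in E$ forces $\{X_i, X_j\}\in\mathcal{Y}$, so some bag $B(x)$ of the original decomposition contains both $X_i$ and $X_j$, and thus $B'(x)\supseteq X_i\cup X_j \ni u,v$.

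The step I expect to be the main obstacle is verifying the connectedness axiom: for each $v\in V$, the set $\mathcal{V}_v := \{x\in\mathcal{V}\mid v\in B'(x)\}$ must induce a connected subtree of $\mathcal{T}$. The key observation is that if $v$ lies in several hypervertices $X_{i_1},\ldots,X_{i_k}$, then these hypervertices pairwise intersect (all containing $v$), hence by the definition of $\mathcal{Y}$ they form a clique in $\mathcal{G}$. It is a standard fact about tree decompositions that every clique in the decomposed graph is contained in a single bag, so there is some node $x^*\in\mathcal{V}$ with $\{X_{i_1},\ldots,X_{i_k}\}\subseteq B(x^*)$. For each $j$, the axiom applied to the original decomposition gives that $\mathcal{V}_{X_{i_j}}:=\{x\mid X_{i_j}\in B(x)\}$ is connected and contains $x^*$. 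Since $\mathcal{V}_v = \bigcup_{j=1}^k \mathcal{V}_{X_{i_j}}$ is a union of connected subtrees all containing the common node $x^*$, it is itself connected.

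Combining the three axioms yields a valid tree decomposition of $G$ of width at most $w\cdot(\tw(\mathcal{G})+1)-1$, which proves the claimed bound on $\tw(G)$.
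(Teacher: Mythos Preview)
Your proof is correct and follows essentially the same construction as the paper: expand each bag of an optimal tree decomposition of $\mathcal{G}$ by taking the union of the hypervertices it contains, and verify the three axioms. The only minor difference is in the connectedness argument: the paper shows that the subtrees $\{x \mid X_i \in B(x)\}$ for $X_i \ni v$ pairwise intersect (since any two such $X_i, X_j$ share $v$ and are therefore adjacent in $\mathcal{G}$), which already suffices for a connected union; you invoke the slightly stronger clique-in-a-single-bag fact to obtain a common node $x^*$. Both routes are standard and equally valid.
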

We include the proof of \Cref{lem:treewidth-enhancement} in \Cref{app:enhancedgraph}.

Now we are ready to show that the treewidth of the primal graph of the ILP for irreducible necklaces is bounded.
We show the following proposition.
\begin{proposition}
    Let $C = \{C_1, \dots, C_n\}$ be an irreducible necklace with $n$ colours.
    Then the primal graph of the $\alpha$-\textsc{Cut-Labelling-ILP} of that necklace has treewidth at most 55.
    \label{prop:treewidth-bounded}
\end{proposition}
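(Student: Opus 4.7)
The plan is to apply the enhancement lemma (\Cref{lem:treewidth-enhancement}) with hypervertices indexed by the vertices of the label graph. First I would define, for each vertex $v$ of the label graph, a hypervertex $X_v$ containing all ILP variables attached to $v$: namely the $2\deg(v)$ edge variables $\pp_e,\nn_e$ for edges $e$ incident to $v$, and the $3\deg(v)/2$ traversal variables $\pp_{v,i},\nn_{v,i},\fs_{v,i}$ for $i\in[\deg(v)/2]$. Since the label graph only has vertices of even degree at most $4$ (the added edge/subdivided edge brings the two odd-degree vertices of the walk graph up to degree $4$, and the subdivision vertex for odd $n$ has degree $2$), this gives $|X_v|\leq 2\cdot 4 + 3\cdot 2 = 14$. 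Each edge variable for $e=\{u,v\}$ lies in both $X_u$ and $X_v$, and each traversal variable lies in exactly one $X_v$, so the $X_v$'s cover all variables.

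The second step is to verify that the enhanced graph $\mathcal{G}$ is isomorphic to the label graph. Inspecting the $\alpha$-\textsc{Cut-Labelling-ILP} one sees that each constraint only involves variables attached to a single vertex $v$ (together with its incident edges), so every edge of the primal graph lies entirely inside some $X_v$. Consequently, two hypervertices are adjacent in $\mathcal{G}$ if and only if they share an edge variable, which happens exactly when the two corresponding vertices are adjacent in the label graph.

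The main technical step is to bound the treewidth of the label graph by a small constant. By \Cref{prop:irreducible-graphs-isomorphic} the walk graph is isomorphic to $N_n$, and the label graph differs from $N_n$ only by an added edge (if $n$ is even) or a subdivided edge through a new degree-$2$ vertex (if $n$ is odd), connecting the two odd-degree vertices of $N_n$. I would exhibit an explicit tree decomposition of $N_n$ of width at most $2$, using a ``spine'' of bags $\{1,2i-1,2i+1\}$ joined by the shared chord endpoint $2i+1$, with a triangle bag $\{2i-1,2i,2i+1\}$ attached as a leaf at each spine node, and with the cycle-closing edge absorbed into the terminal spine bag which already contains both $1$ and the last chord endpoint. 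The single extra label-graph edge, or the two edges created by the subdivision (covered by one bag $\{1,w,n\}$ attached to the terminal spine bag), increases the width by at most $1$.

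Combining these pieces, \Cref{lem:treewidth-enhancement} yields $\tw(P(\mathcal{I}))\leq 14\cdot(3+1)-1 = 55$, as required. The chief obstacle is constructing and verifying the explicit tree decomposition of the label graph for both parities of $n$ while correctly handling the odd-degree vertices of $N_n$ and the way the cycle-closing edge interacts with the chord-path structure; once that is done, the rest of the argument is a direct application of the enhancement lemma to a constant-size, structurally uniform graph.
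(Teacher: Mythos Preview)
Your approach is essentially the paper's, with one simplification: you do a single enhancement with hypervertices $X_v$ of size $\le 14$ (one per label-graph vertex), whereas the paper does two nested enhancements---first one hypervertex $X_{v,i}$ of size $7$ per traversal, then grouping the two traversals of each vertex---arriving at the same $55$ via $7\cdot 8-1$ rather than your $14\cdot 4-1$. Your single-step version is cleaner and avoids analysing the intermediate Euler-tour graph $\mathcal{G}_1$.

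There is one inaccuracy you should fix (the paper glosses over the same point). Your enhanced graph $\mathcal{G}$ is \emph{not} isomorphic to the label graph. The constraint $\pp_{e(v,i,1)}+\pp_{e(v,i,2)}-1\le\pp_{v,i}$ creates a primal edge between the two edge variables; since $\pp_{e(v,i,1)}\in X_u$ and $\pp_{e(v,i,2)}\in X_w$ for the \emph{other} endpoints $u,w$ of those two edges, $\mathcal{G}$ gets an edge $\{X_u,X_w\}$ even when $u$ and $w$ are not adjacent in the label graph. Your step ``every primal edge lies inside some $X_v$, hence two hypervertices are adjacent only when they share an edge variable'' is a non sequitur: a primal edge contained in $X_v$ can still have one endpoint in $X_u$ and the other in $X_w$. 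The bound survives, though: because every primal edge sits inside a single $X_v$, and because two hypervertices overlap only on edge variables of a common label-graph edge, you can blow up any width-$3$ tree decomposition of the label graph directly (replace each bag $B$ by $\bigcup_{v\in B}X_v$) and verify all three tree-decomposition axioms for the primal graph without ever appealing to $\tw(\mathcal{G})$. Phrase the last step that way rather than via ``$\mathcal{G}\cong$ label graph''.
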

\begin{proof}
    We proceed in the following steps.
    First we show how the primal graph of that ILP looks like.
    Then we construct an enhanced graph $\mathcal{G}_1$ of that primal graph.
    In a next step we analyze $\mathcal{G}_1$ and give an enhanced graph $\mathcal{G}_2$ of $\mathcal{G}_1$, which turns out to be isomorphic to the label graph.
    Since we understand the label graph we can bound its treewidth and we can bound the treewidth of the primal graph by applying \Cref{lem:treewidth-enhancement}.

    First of all recall that \Cref{prop:irreducible-graphs-isomorphic} characterises the walk graphs of irreducible necklaces.
    This implies that the label graph of an irreducible necklace with $n$ colours is always isomorphic to $C_k + C^{odd}_k$, for $k = 2\lceil n / 2 \rceil$.
    Here $C_k$ is a cycle graph with vertices $[k]$ and $C^{odd}_k$ is the cycle graph with the odd vertices in $[k]$.

    Consider the two traversals of a vertex $v$ with degree 4 as depicted in \Cref{fig:ilp_structure_traversal.pdf}. In the following we will use the vertex and edge names as in \Cref{fig:ilp_structure_traversal.pdf}.
    
    \begin{figure}[htb]
        \centering
        \includegraphics{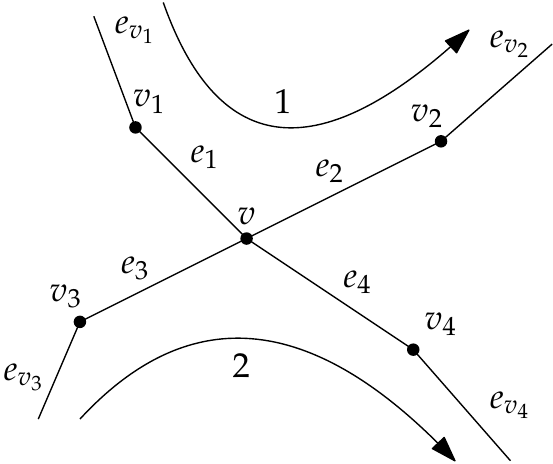}
        \caption{Two traversals in the label graph for a vertex $v$. The arrows indicate the order in which the edges are traversed: in the first traversal $e_1$ and $e_2$ and in the second $e_3$ and $e_4$.}
        \label{fig:ilp_structure_traversal.pdf}
    \end{figure}
    
    Recall that in the primal graph we have a vertex for each variable and an edge between variables if there is a constraint using both these variables. We illustrate all the variables corresponding to the two traversals of $v$, as well as the subgraph of the primal graph induced by these variables in \Cref{fig:ilp_structure_primal.pdf}.
    
    In particular, the primal graph captures the same cyclic structure as the Euler tour in the label graph defined by the necklace, where additionally there are chords between the two occurrences of the same vertex.

    \begin{figure}[htb]
        \centering
        \includegraphics{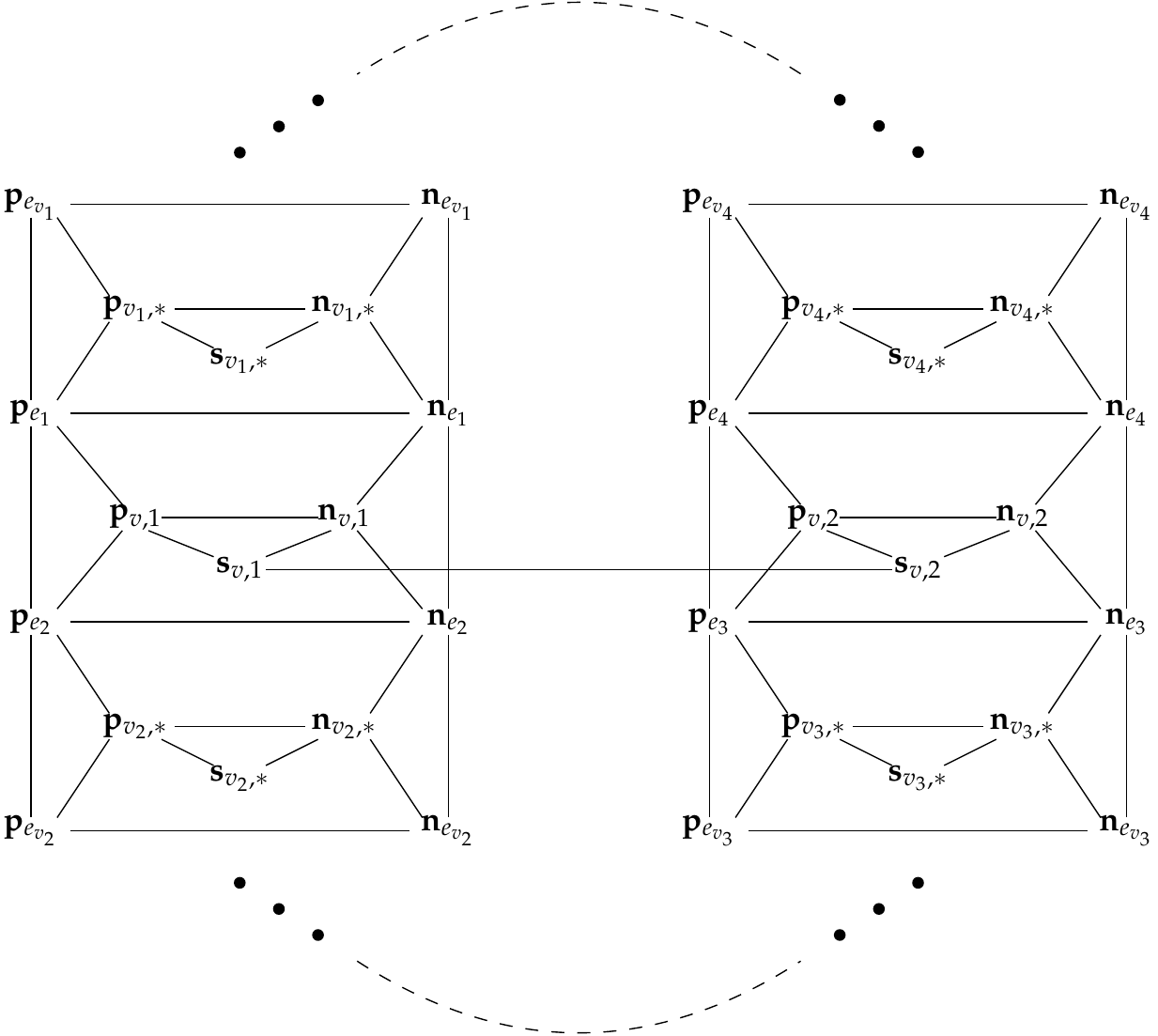}
        \caption{The structure of the primal graph of the ILP. The variables are named according to \Cref{fig:ilp_structure_traversal.pdf}. The *'s indicate that the traversal index for these variables is not relevant for the local picture. The dots and dashed line indicate that the graph continues in a cyclic structure. This cyclic structure corresponds to the Euler tour in the label graph corresponding to the necklace.}
        \label{fig:ilp_structure_primal.pdf}
    \end{figure}

    As mentioned before, we now construct an enhanced graph of the primal graph.
    To do this, we need to define the set of hypervertices as the set of variables occurring in a traversal of a vertex.
    That is if $e=e(v,i,1),e'=e(v,i,2)$ are the edges of the $i$-th traversal of a vertex $v$, we define $X_{v,i} := \{\pp_{e},\nn_{e},\pp_{e'},\nn_{e'}, \pp_{v,i},  \nn_{v,i}, \fs_{v,i}\}$.
    Therefore, define the enhanced graph $\mathcal{G}_1$ of the primal graph to be the enhanced graph defined by $X_{v,i}$ for all vertices $v$ and traversals $i$.
    Note that this is indeed an enhanced graph as any variable occurs in at least one traversal of some vertex.

    Now observe that $\mathcal{G}_1$ is a long cycle corresponding to the Euler tour of the necklace, with the $i$-th traversal of a vertex $v$ being represented by a vertex $X_{v,i}$ in $\mathcal{G}_1$, and these vertices  are adjacent as they appear in the Euler tour.
    Moreover, for a vertex $v$ with two traversals, the corresponding vertices $X_{v,1}$ and $X_{v,2}$ are connected by a chord.
    This edge stems from the edge between $\fs_{v,1}$ and $\fs_{v,2}$ in the primal graph.
    The structure of $\mathcal{G}_1$ is depicted in \Cref{fig:ilp_structure_hypercycle1.pdf}.
    
    \begin{figure}[htb]
        \centering
        \includegraphics{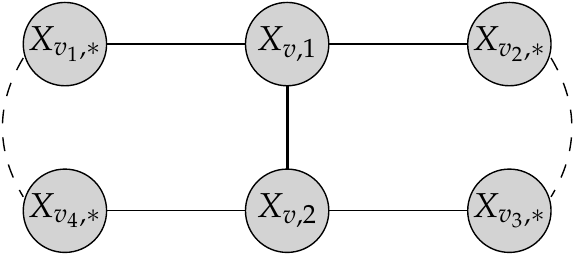}
        \caption{The structure of $\mathcal{G}_1$, where the labels correspond to the labels from \Cref{fig:ilp_structure_traversal.pdf} and \Cref{fig:ilp_structure_primal.pdf}. The vertical edge is the one stemming from the edge between $\fs_{v,1}$ and $\fs_{v,2}$. The dashed line indicates that there is a path between two vertices.}
        \label{fig:ilp_structure_hypercycle1.pdf}
    \end{figure}

    \begin{figure}[htb]
        \centering
        \includegraphics{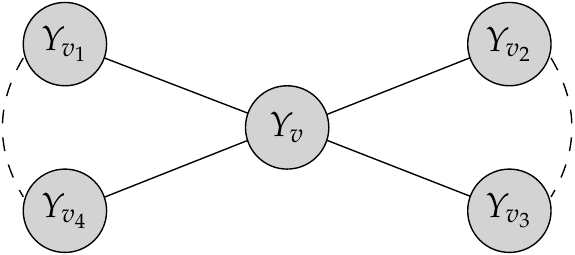}
        \caption{The structure of $\mathcal{G}_2$. Note that the only difference to $\mathcal{G}_1$ is that we grouped the edges stemming from $\fs_{v,1}$ and $\fs_{v,2}$ into a single hypervertex.
        Then $\mathcal{G}_2$ is isomorphic to the label graph as in \Cref{fig:ilp_structure_traversal.pdf}.}
        \label{fig:ilp_structure_hypercycle2.pdf}
    \end{figure}

    We can now define an enhanced graph $\mathcal{G}_2$ of $\mathcal{G}_1$ by simply grouping the vertices $X_{v,1}$ and $X_{v,2}$ together, for all vertices $v$.
    Hence, in $\mathcal{G}_2$ a vertex $Y_v$ has edges to all vertices $Y_{v'}$ where $v'$ is adjacent to $v$ in the label graph.
    This implies that $\mathcal{G}_2$ is isomorphic to the label graph.
    This is depicted in \Cref{fig:ilp_structure_hypercycle2.pdf}.
   
    Now observe that the label graph -- and thus $\mathcal{G}_2$ -- has treewidth at most $3$.
    This can be seen by making a bag for each interval together with its two adjacent vertices.
    The tree is then given by a path obtained by walking along the cycle of the label graph.
    Since the vertex at the start of the cycle will appear in the first and the last bag, we just add it to every bag to get the third property of a tree decomposition.
    Hence, every bag contains at most four vertices, yielding $\tw(\mathcal{G}_2) \leq 3$.
    By construction the hypervertices of $\mathcal{G}_2$ contain at most two vertices of $\mathcal{G}_1$.
    Using \Cref{lem:treewidth-enhancement} we therefore have $\tw(\mathcal{G}_1) \leq 2\cdot4 -1 = 7$.
    Furthermore, since the hypervertices of $\mathcal{G}_1$ consist of exactly seven vertices of the primal graph, we conclude that the treewidth of the primal graph is at most $7\cdot 8 - 1 = 55$.
\end{proof}

We are now ready to prove the following.
\begin{proposition}
    $\alpha$-$\overline{\alpha}$-\textsc{Irr-Necklace-Splitting} can be solved in $\mathcal{O}(n \cdot N)$ time.
    \label{prop:solving-irreducible}
\end{proposition}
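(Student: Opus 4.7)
The plan is to combine all the previous machinery: \Cref{prop:splitting-to-colouring} reduces $\alpha$-$\overline{\alpha}$-\textsc{Irr-Necklace-Splitting} to two calls of $\alpha$-\textsc{Cut-Labelling} (one with $\alpha$ and one with $\overline{\alpha}$) plus $\mathcal{O}(N)$ post-processing, so it suffices to bound $T_{lab}(n,N)$. To solve $\alpha$-\textsc{Cut-Labelling}, I would build the $\alpha$-\textsc{Cut-Labelling-ILP}, apply \Cref{thm:fpt-ilp} of Jansen and Kratsch, and then post-process to enforce condition (4) of \Cref{def:labelling}, which the ILP itself does not capture.

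More concretely, I would first construct the label graph in $\mathcal{O}(N)$ time from the input necklace and then write down the $\alpha$-\textsc{Cut-Labelling-ILP}. By construction this ILP has $\mathcal{O}(n)$ binary variables and $\mathcal{O}(n)$ constraints (each vertex and each edge contributes only constantly many), so the instance size is $|\mathcal{I}| = \mathcal{O}(n)$ and the variable domain is $\mathcal{D} = \{0,1\}$. By \Cref{prop:treewidth-bounded}, the primal graph has treewidth at most $55$, which is a universal constant. Applying \Cref{thm:ILPtreewidth} with $|\mathcal{D}| = 2$ and $\tw(P(\mathcal{I})) \leq 55$ gives a running time of $\mathcal{O}(2^{\mathcal{O}(55)} \cdot n) = \mathcal{O}(n)$ for deciding feasibility, and, crucially, for enumerating all feasible solutions whenever their number is constant.

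To justify that enumeration is enough, I would argue that feasibility of the ILP is by construction equivalent to conditions (1)--(3) of $\alpha$-\textsc{Cut-Labelling}, and as noted in the excerpt there are only two labellings satisfying (1)--(3) for a fixed $\alpha$: the one induced by the $\alpha$-cut and the complement of the one induced by the $\overline{\alpha}$-cut. Hence the ILP has at most two feasible solutions, so Jansen--Kratsch enumerates both in $\mathcal{O}(n)$ time. Among these two candidates we select the one fulfilling condition (4), which can be checked in $\mathcal{O}(n)$ time by scanning around the cyclic structure of the label graph and verifying the sign-of-permutation condition; this yields the unique $\alpha$-labelling. Thus $T_{lab}(n,N) = \mathcal{O}(n)$.

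Plugging this into \Cref{prop:splitting-to-colouring} and running the procedure once for $\alpha$ and once for $\overline{\alpha}$, together with the $\mathcal{O}(N)$ translation back to actual cut points (sliding each chosen cut point within its component to achieve the exact $\alpha_v$ count per colour, as in the proof of \Cref{prop:splitting-to-colouring}), gives a total running time of $\mathcal{O}(n + N) = \mathcal{O}(n \cdot N)$. The step I would expect to require the most care in the write-up is not the treewidth bound (already handled in \Cref{prop:treewidth-bounded}) but rather making sure the constant arising from $2^{\mathcal{O}(55)}$ in \Cref{thm:ILPtreewidth} is properly absorbed and that the enumeration-plus-postprocess argument for condition (4) is clearly linked to the uniqueness guarantee of the $\alpha$-Ham Sandwich theorem.
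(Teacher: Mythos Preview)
Your approach is essentially identical to the paper's: reduce via \Cref{prop:splitting-to-colouring}, solve the $\alpha$-\textsc{Cut-Labelling-ILP} using \Cref{thm:fpt-ilp} together with the treewidth bound of \Cref{prop:treewidth-bounded}, enumerate the constantly many feasible solutions, filter by condition~(4), and absorb everything into $\mathcal{O}(n\cdot N)$. The only notable discrepancy is your estimate $|\mathcal{I}|=\mathcal{O}(n)$: having $\mathcal{O}(n)$ variables and $\mathcal{O}(n)$ constraints does not by itself give a linear-size instance, since the constraint matrix has $\mathcal{O}(n)\times\mathcal{O}(n)$ entries; the paper accordingly takes $|\mathcal{I}|\in\mathcal{O}(n^2)$ and obtains $T_{lab}=\mathcal{O}(n^2+N)$ rather than $\mathcal{O}(n)$. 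Your tighter bound is defensible only if you explicitly invoke a sparse encoding (each constraint has constant support), and you should also account for the $\mathcal{O}(\log N)$ bits needed for the coefficients $w_v(i)$, $|C_v|$, $\alpha_v$. Similarly, the paper budgets $\mathcal{O}(N)$ rather than $\mathcal{O}(n)$ for checking condition~(4). None of this affects the final $\mathcal{O}(n\cdot N)$ bound, since $n\leq N$.
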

\begin{proof}
    \Cref{thm:fpt-ilp} implies an $\mathcal{O}(n^2)$ algorithm for $\alpha$-\textsc{Cut-Labelling-ILP}, using that $|\mathcal{D}|= |\{0,1\}| \in \mathcal{O}(1)$ and $\tw(P(\mathcal{I})) \in \mathcal{O}(1)$ by \Cref{prop:treewidth-bounded}.
    The size of the ILP instance $|\mathcal{I}|$ is given by the number of entries of the matrix defining the instance $\mathcal{I}$.
    Thus we have $|\mathcal{I}| \in \mathcal{O}(n \cdot n)$, as we have a constant number of variables and constraints per colour.
    As argued above, the ILP only has a constant number of feasible solutions.
    Hence, we can enumerate these efficiently, and determine which of these satisfy condition (4) of $\alpha$-\textsc{Cut-Labelling}.
    This check can be implemented in $\mathcal{O}(N)$.
    This shows that $\alpha$-\textsc{Cut-Labelling} can be solved in time $\mathcal{O}(n^2+N)$.
    Therefore, by \Cref{prop:splitting-to-colouring} $\alpha$-$\overline{\alpha}$-\textsc{Irr-Necklace-Splitting} can be solved in $\mathcal{O}(n^2 + N) \in \mathcal{O}(n\cdot N)$ time, using that $n \in \mathcal{O}(N)$.
\end{proof}

Together with \Cref{prop:reduce-to-two-components,prop:reduce-to-irreducible}, we can conclude that $\alpha$-$\overline{\alpha}$-\textsc{Necklace-Splitting}, and thus $\alpha$-\textsc{Necklace-Splitting}, can be solved in $\mathcal{O}(n \cdot N)$ time. We have thus proven \Cref{thm:mainAlgo}.

\section{\NP-Completeness of the Decision Problem}
In this section we prove \Cref{thm:NPhardness}, i.e., we show \textsf{NP}-completeness of deciding whether an arbitrary necklace has an $\alpha$-cut. Since an $\alpha$-cut can be used as a yes-certificate and verified in polynomial time, this problem is clearly in \NP.

Instead of showing \NP-hardness of this problem directly, we instead show \NP-hardness of the following problem.

\begin{definition}$\alpha/\overline{\alpha}$-\textsc{Necklace-Deciding}

\begin{tabular}{ll}
\textbf{Input:}  & A necklace $C = \{C_1, \dots, C_n\}$ and $\alpha = (\alpha_1, \dots, \alpha_n)$, $\alpha_i \in \{1, \dots,|C_i|\}$\\
\textbf{Decide:} & Does $C$ have a cut that is either an $\alpha$-cut or an $\overline{\alpha}$-cut?
\end{tabular}
\end{definition}

$\alpha/\overline{\alpha}$-\textsc{Necklace-Deciding} can be seen as the problem of deciding whether there exist cut points that split the necklace into two sides, one of which has size $\alpha$, but not necessarily the positive side of the cut defined via the cut parity. Clearly, $\alpha/\overline{\alpha}$-\textsc{Necklace-Deciding} can be solved by testing individually whether $C$ has an $\alpha$-cut, and whether it has an $\overline{\alpha}$-cut. Thus, the following proposition immediately implies \Cref{thm:NPhardness}.

\begin{proposition}
    $\alpha/\overline{\alpha}$-\textsc{Necklace-Deciding} is \textsf{NP}-hard.
\end{proposition}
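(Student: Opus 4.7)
The plan is to reduce from \textsc{e3-Sat}. Given a 3-CNF formula $\phi$ on variables $x_1,\ldots,x_m$ with clauses $C_1,\ldots,C_k$, we shall construct in polynomial time a necklace $\mathcal{C}$ and a vector $\alpha$ such that $\mathcal{C}$ admits an $\alpha$- or $\overline{\alpha}$-cut if and only if $\phi$ is satisfiable. Working with $\alpha/\overline{\alpha}$ rather than $\alpha$ alone is convenient: flipping which thief receives each interval simultaneously swaps $\alpha$ with $\overline{\alpha}$, so we may design our gadgets without worrying about the sign of the permutation $\pi$ that determines the positive side in \Cref{def:alpha-cut}.

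The first step is a \emph{variable gadget}. For each variable $x_i$ we introduce a colour $V_i$ whose beads come in exactly two widely-separated blocks that will be the only feasible positions for its cut point; this is enforced by choosing the corresponding entry $\alpha_{V_i}$ so that cutting inside either block is compatible with the per-colour count, while cutting elsewhere overshoots or undershoots. Adjacent to each block we place beads of dedicated \emph{literal colours} so that the choice of block for the cut of $V_i$ toggles, in one swoop, the side assignment of every literal-bead associated with $x_i$ or $\overline{x_i}$. Each variable thus gives rise to an independent binary decision.

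The second step is a \emph{clause gadget}. For each clause $C_j=\ell_{j,1}\vee\ell_{j,2}\vee\ell_{j,3}$ we introduce a colour $W_j$ whose beads are interleaved with the literal-beads of its three literals. The entry $\alpha_{W_j}$ is tuned so that the total count assigned to the first thief within the span of the gadget can be attained if and only if at least one of the three literal-beads ends up on the intended side, i.e., at least one literal is true. The $\alpha$-entries for the literal colours themselves, and for any padding colours that serve as rigid buffers between gadgets, are set so that they are automatically satisfied whenever the variable and clause gadgets are consistent.

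The main obstacle will be ruling out spurious solutions: cuts that hit the required counts by placing cut points in unintended positions and thereby circumventing the intended variable-assignment semantics. To address this we intend to pad the construction with a sequence of interval colours whose cut points admit essentially a unique valid location, pinning down the global ordering of all other cut points. Once this rigidity is in place, the only remaining freedom is the binary block-choice inside each variable gadget, after which the clause gadgets become linearly constrained by $\alpha$ exactly as in a 3-SAT instance. Proving this equivalence formally requires carefully tracking how each local choice propagates through the cut parity and through the $\alpha$-counts of the literal and padding colours, which we expect to be the most delicate part of the argument.
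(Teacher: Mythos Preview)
Your outline reduces from \textsc{e3-Sat} with variable and clause gadgets, which is exactly the paper's route, but what you have is a plan rather than a proof: the construction is never given, and the step you yourself flag as ``the most delicate'' is where essentially all the content lies.

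Two concrete issues. First, the variable gadget as described is slightly off. The cut point of colour $V_i$ is by definition a bead of $V_i$, so it is automatically in one of the two blocks; no $\alpha$-tuning is needed to rule out ``elsewhere''. The real difficulty is that a single cut point toggles the side of \emph{everything} to its right, so one colour alone cannot implement a local binary switch. The paper handles this with a \emph{pair} of bracket colours $x_i^A,x_i^B$ per occurrence (cutting the first $A$ and first $B$ versus the second of each isolates a local bubble), together with a separate transfer colour $x_T$ whose $\alpha$-count forces all occurrences of $x$ to agree. Your sketch has no counterpart to this pairing/transfer mechanism.

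Second, the rigidity you hope to get from ``interval colours whose cut points admit essentially a unique valid location'' does not work as stated: an interval colour of size $s$ has $s$ legal cut positions and its cut point does not constrain cuts in other colours. The paper instead uses two global colours $P$ with $\alpha(P)=|P|$ and $N$ with $\alpha(N)=1$, sprinkled between gadgets so that every segment has a forced side; this is what pushes the cuts of the clause and transfer colours out of the encoding region into a dedicated enforcing tail and thereby kills the spurious solutions. Without such a side-pinning device the reduction does not go through, and your proposal does not supply one.
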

\begin{proof}
    We reduce from \textsc{e3-Sat}.
    Let $\Phi = C_1 \land \dots \land C_m$ be an instance of \textsc{e3-Sat}.
    Each clause $C_i$ consists of exactly three variables.
    We now construct a necklace $\zeta$ and a vector $\alpha$ such that $\zeta$ has an $\alpha$- or $\overline{\alpha}$-cut if and only if $\Phi$ is satisfiable. As discussed above, we can instead show that $\Phi$ is satisfiable if and only if there exist cut points (one per colour) such that \emph{some} chosen side contains the desired number of points $\alpha$.

    To construct $\zeta$, we use the following types of beads.
    The two types of beads $P$ and $N$ are used to enforce that certain parts of the necklace are on the positive or negative side (we will set $\alpha(P)=|P|$ and $\alpha(N)=1$).
    For every variable $x$ we use the types $x_0^A$ and $x_0^B$ to construct a gadget which encodes the truth value of $x$.
    For every clause $C_i$ such that $x \in C_i$ we additionally use the types $x_i^A$ and $x_i^B$ to read out the value of $x$ at the clause $C_i$.
    To transfer the value of $x$, we use yet another type of bead $x_T$.
    For clauses $C_i$ we only need one more type of bead, which we simply name $C_i$.
    
    We also need multiple types of beads as \emph{separator beads}. A separator bead only occurs once, and is used to enforce that there is a cut at its location. We denote a separator bead type by $S_*$ where the subscript indicates the part of the necklace the separator bead belongs to.

    Finally, we need two types of beads $a,b$ that are only used to enforce the positive side of the cut.
    The necklace starts with the string $a\,b\,a$ and we set $\alpha$ such that both beads of $a$ must be on the positive side.
    Since there must be exactly one cut through one bead of $a$ and another cut through the single occurrence of $b$, the only way of having both $a$ and the $b$ on the positive side is to cut the second occurrence of $a$, and to put the unbounded interval from $-\infty$ to the first occurrence of $a$ on the positive side of the cut, or to cut through the first occurrence of $a$, and still putting the unbounded interval from $-\infty$ to the first bead of $a$ on the positive side, which will then also put the second bead of $a$ on the positive side.

    The rest of the necklace consists of two main parts.
    The first part is the encoding part where we encode the assignment of variables and the satisfiability of each clause.
    In the second part we enforce the position of cuts for some types of beads.
    In the following we describe both parts, starting with the encoding part.

    The encoding part first contains the following string for each variable $x$
    \[
        P\, x_0^A\, \underbrace{x_T\dots x_T}_{k\text{ times}}\, x_0^B\, P\, x_0^A\, x_0^B\, P
    \]
    where $k$ is the number of occurrences of the variable $x$, that is, the number of clauses $C_i$ such that $x \in C_i$ (either positively or negatively).
    Then, the encoding part contains the following string for each clause $C_i$ and each variable $x \in C_i$.
    \begin{align*}  
        P\, x_i^A\, &&C_i&&\, &x_i^B\, P\, x_i^A\, x_T\,&& && &x_i^B\, P & &\text{ if $x$ appears as a positive literal in $C_i$, }\\
        P\, x_i^A\, && && &x_i^B\, P\, x_i^A\, x_T\,&&C_i&&\, &x_i^B\, P& &\text{ if $x$ appears as a negative literal in $C_i$.}
    \end{align*}
    Note that the only difference between these two strings is the placement of the bead of type~$C_i$.

    We prove that under the following assumptions the encoding part properly encodes true assignments to $\Phi$. These assumptions will be enforced by the $\alpha$-vector and the second part of the necklace.
    We first assume that there is no cut in any bead of types $P, x_T, C_i$ within the encoding part. Furthermore, we assume that a cut is correct if and only if the positive side of the cut \emph{within the encoding part} contains all beads of type $P$, half of each $x_T$ (note that $x_T$ occurs an even number of times), at most two beads of type $C_i$, and both beads of types $x_i^A$ and $x_i^B$ (for all $i$ including 0). Recall that a bead at which we cut the necklace is counted towards the positive side.
    
    The only way to make the $x_i^A$ and $x_i^B$ types have both beads on the positive side is by either cutting through the first $x_i^A$ and first $x_i^B$, or by cutting through the second $x_i^A$ and second $x_i^B$.
    The first case will encode the assignment $x = \text{``true''}$ and the latter encodes $x = \text{``false''}$.

    The cuts for $i=0$ corresponding to a true assignment will move the $k$ beads of type $x_T$ between $x_0^A$ and $x_0^B$ to the negative side.
    Since we need half of the beads of type $x_T$ on the positive side, this implies that for the cuts in $x_i^A$ and $x_i^B$ for $i\neq 0$, the bead of type $x_T$ must be on the positive side. Hence, the cuts in $x_i^A$ and $x_i^B$ must encode the same assignment as the cuts in $x_0^A$ and $x_0^B$.

    We see that the bead of type $C_i$ between $x_i^A$ and $x_i^B$ is on the negative side of the cut if and only if $C_i$ is satisfied by the variable $x$. We thus see that at most two of the beads of type $C_i$ are on the positive side if and only if $C_i$ is fulfilled by one of its literals.
    Hence, we can conclude that under the assumptions listed above, the encoding part correctly encodes correct assignments to $\Phi$.
    
    In the second part of the necklace we will now first enforce that the cut points of type $P, x_T,$ and $C_i$ will lie in this second part.

    We add the following three strings. For each clause $C_i$ we add the string
    \[  
        P\, C_i\, C_i\, C_i\, S_i\, P,
    \]
    where $S_i$ is a new separator bead. Still assuming that $P$ is not getting cut, this enforces that one of the three beads of type $C_i$ is cut. 

    Then, for each variable $x$ we add the string
    \[  
    \underbrace{P\,x_T\,\dots\,P\,x_T}_{k \text{ times}}\underbrace{N\,x_T\,\dots\,N\,x_T}_{k \text{ times}}\,N\,S_x,
    \]
    where again $k$ is the number of clauses containing $x$, and $S_x$ is a new separator bead. Assuming $P$ and $N$ to not be cut, this enforces the $k$-th occurrence of $x_T$ to be cut, putting $k$ of the beads of type $x_T$ in this substring on the positive side, and $k$ on the negative side.

    Finally, we add the string
        \[
        P\, N.
    \]
    Since the number of types of beads is even ($P$ comes paired with $N$, $x_i^A$ comes paired with $x_i^B$, $C_i$ is paired with $S_i$ and $x_T$ is paired with $S_x$), we can see that this last bead of type $N$ must be cut: If it was not cut, it would be on the positive side of the cut, since both unbounded intervals must belong to the positive side. However, the cut point of type $N$ is another bead that would be counted to the positive side, violating $\alpha(N)=1$. If however this last bead of type $N$ is cut, also the last bead of type $P$ must be cut by a symmetric argument.

    We thus see that the second part of the string enforces the cuts we assumed in the encoding part. Furthermore, since the second part has exactly half of  the beads of type $x_T$ and one up to three beads of type $C_i$ on the positive side, the following $\alpha$-vector exactly gives us all of the assumptions on the number of beads on the positive side we made in the encoding part.
    \begin{align*}
        &\alpha(a) = 2,\ \alpha(b) = 1, \\
        &\alpha(P) = |P|,\ \alpha(N) = 1,\\
        &\alpha(x_i^A) = 2,\ \alpha(x_i^B) = 2 &\text{ for all variables $x$ and all $i$ including 0,}\\
        &\alpha(x_T) = \frac{|x_T|}{2} & \text{ for all variables $x_T$, note that $|x_T|$ is even,}\\
        &\alpha(C_i) = 3 & \text{ for all clauses $C_i$,}\\
        &\alpha(S_*) = 1 & \text{ for any separator $S_*$.}
    \end{align*}
    We have argued before that a cut with the correct number of points $\alpha$ on the positive side corresponds to a satisfying assignment of $\Phi$. On the flip side, it is clear that a satisfying assignment can be turned into a cut by cutting the $x_i^A,x_i^B$ at the corresponding places and the $C_i$ at the correct of the three consecutive occurrences, depending on how many literals in $C_i$ are true.

    The necklace $\zeta$ can be built in polynomial time in $m$, and we thus get that $\alpha/\overline{\alpha}$-\textsc{Necklace-Deciding} is \textsf{NP}-hard.
\end{proof} 

We include an example of this reduction, as well as of the correspondence between cuts and assignments in \Cref{app:example}.

\section{Conclusion and Further Directions}
We have settled the complexity of the promise search problem of \UnfairSplitting on $n$-separable necklaces by showing it to be polynomial-time solvable. Together with the result of Borzechowski, Schnider and Weber~\cite{nseparableNecklaces} showing that $n$-separability is polynomial-time checkable, this also shows the total search problem corresponding to \UnfairSplitting to be in \textsf{FP}. We have contrasted this by showing that the decision problem for existence of an $\alpha$-cut is \NP-complete. It is conceivable that an FPT algorithm using the parameter $\ell$ as in~\cite{nseparableNecklaces} for \UnfairSplitting may exist. Such an algorithm would decide whether a given $(n-1+\ell)$-separable necklace has an $\alpha$-cut, and find one if it does, in time $f(\ell)\cdot \text{poly}(n+N)$. To adapt our algorithm to yield such an FPT algorithm, one would need to analyse the walk graphs of irreducible $(n-1+\ell)$-separable necklaces, and bound their treewidth in terms of $\ell$.

It remains open whether $\alpha$-Ham Sandwich is \UEOPL-complete. To prove \UEOPL-hardness, we now know that we need to find new problems to reduce from, and \UnfairSplitting cannot be used.

\bibliography{literature}
\newpage

\appendix

\section{Proving \texorpdfstring{\Cref{prop:irreducible-graphs-isomorphic}}{Proposition \ref{prop:irreducible-graphs-isomorphic}}}\label{app:structure}
In this section we prove \Cref{prop:irreducible-graphs-isomorphic}.
Recall the statement of said proposition: It claims that for any irreducible necklace $C$ with $n$-colours, the walk graph of $C$ is always isomorphic to the irreducible necklace graph $N_n$, defined as $N_n = C_n + P_{n-1}^{odd}$ in \Cref{def:Nn}.
Recall that $C_n$ is the cycle graph with vertices $[n]$ and $P_{n-1}^{odd}$ is the graph on vertex set $[n]$ with a path of length $\lfloor (n-1)/2 \rfloor$ on the odd vertices.
Also recall the conditions for the walk graphs of irreducible necklaces that we already derived in \Cref{lem:irreducibleconditions}.

\irreducibleCond*
\vspace{0.2cm}

To show \Cref{prop:irreducible-graphs-isomorphic} we distinguish the cases when $n$ is even and when $n$ is odd.
We show the result for even $n$ directly and later we will reduce the case for odd $n$ to the even case.
It is easy to verify that for $n \leq 4$ \Cref{prop:irreducible-graphs-isomorphic} holds, as in this case the only irreducible necklaces (up to permutations) are the following, as verified computationally.
\[
\includegraphics[width=0.9\linewidth]{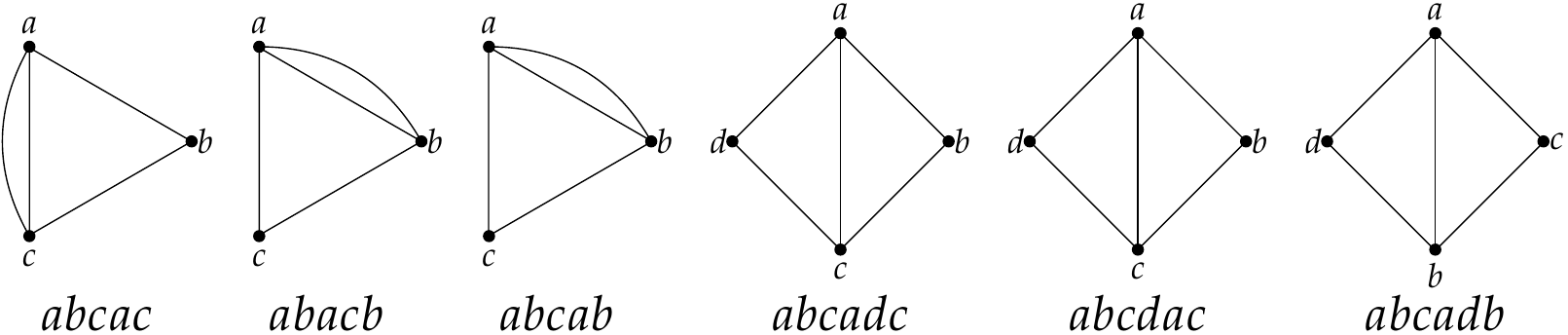}
\]
All of these necklaces have a walk graph isomorphic to $N_n$.
We can therefore assume $n \geq 5$ throughout the remainder of this section.

\subsection{Even \texorpdfstring{$n$}{n}}
We first focus on the case that $n$ is even.
Throughout, let $G$ be an irreducible walk graph, that is, a walk graph for some irreducible necklace $C$ with $n$ colours.
To prove \Cref{prop:irreducible-graphs-isomorphic}, we proceed as follows.
The first and main step is to show that in $G$ we have the correct number of intervals\footnote{In the following we interchangeably use the terms intervals and degree 2 vertices.
Similarly, we use bicomponent to refer to vertices of degree at least 3.}.
Having that, we will show that in $G$ we find the same local structures as in $N_n$.
In particular, we will show that bicomponents are adjacent to exactly two intervals and that the adjacent bicomponents of an interval are adjacent.
Then, we can prove the graph isomorphism by giving an explicit construction of $N_n$ from $G$.
The main tool in these steps is to derive a contradiction by constructing a cut in $G$ that implies $\mu(G) > n$.

We start by analysing the number of intervals in $G$.
Let $I$ be the set of intervals in $G$ and denote its size with $k = |I|$.
First notice that condition c) of irreducible walk graphs states that there are no adjacent intervals.
Moreover, by condition b) all intervals have degree 2, and thus the cut given by $I$ has size $2k$.
By condition d) we have $2k \leq \mu(G) \leq n$, which implies that $k \leq n/2$.
Additionally, \Cref{lem:num-intervals} directly implies that $k \geq \frac{n-3}{2}$.
This shows that for any $n$ (even or odd), we have that $k \in \{\lfloor n / 2 \rfloor - 1, \lfloor n/2 \rfloor\}$.
We summarise this result in the following.
\begin{corollary}
    Let $G$ be an irreducible walk graph with $n$ vertices for some $n \geq 5$.
    Then the number of vertices of degree 2, denoted by $k$, is $k \in \{\lfloor n / 2 \rfloor - 1, \lfloor n/2 \rfloor\}$.
    \label{lem:num-intervals-irreducible-almost}
\end{corollary}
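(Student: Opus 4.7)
The proof is a direct combination of an upper bound and a lower bound on $k$, both of which follow readily from the characterisation in \Cref{lem:irreducibleconditions}.

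For the upper bound, the plan is to view $I$ itself as one side of a cut of $G$. By condition c) no two intervals are adjacent, so every edge incident to $I$ crosses this cut, and by condition b) each interval contributes exactly two such edges. Hence the cut has size exactly $2k$. Condition d) bounds the max-cut by $n$, so $2k \leq \mu(G) \leq n$, giving $k \leq \lfloor n/2 \rfloor$.

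For the lower bound, the plan is to invoke \Cref{lem:num-intervals} with $\ell = 1$, since every $n$-separable necklace is trivially $(n-1+1)$-separable. This yields $k \geq \frac{n+1}{2} - 2 = \frac{n-3}{2}$. Because $k$ is an integer, I would then observe that $\lceil (n-3)/2 \rceil = \lfloor n/2 \rfloor - 1$ regardless of the parity of $n$. Combining the two bounds gives the desired two-element range.

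The only issue requiring care is that \Cref{lem:num-intervals} is stated for $n \geq 8$, whereas the corollary starts at $n \geq 5$. For the three residual values $n \in \{5,6,7\}$ the plan is to verify the bound by direct inspection of irreducible necklaces of these sizes, analogously to the $n \leq 4$ enumeration already carried out in the preceding discussion; since condition d) forces $\mu(G) \leq n$ and the degree profile is tightly constrained by conditions a)--c), such a case check is routine and not the conceptual heart of the argument.
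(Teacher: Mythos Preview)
Your argument is correct and essentially identical to the paper's: the upper bound via the cut $I$ of size $2k\le\mu(G)\le n$, and the lower bound via \Cref{lem:num-intervals} with $\ell=1$ giving $k\ge(n-3)/2$, are exactly what the paper does in the paragraph preceding the corollary. You are in fact slightly more careful than the paper, which silently applies \Cref{lem:num-intervals} without mentioning its stated hypothesis $n\ge 8$; your remark that $n\in\{5,6,7\}$ can be handled by direct inspection (and your explicit rounding observation $\lceil (n-3)/2\rceil=\lfloor n/2\rfloor-1$) fills a small gap the paper leaves implicit.
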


Observe that the irreducible necklace graph $N_n$ has exactly $\lfloor n / 2 \rfloor$ intervals.
Thus, we need to show that $k = \lfloor n / 2 \rfloor$, which is $k= n/2$ for even $n$.
We will do this by contradiction, that is, we show that $k = n/2 -1$ implies that $\mu(G) > n$, contradicting condition d) of irreducible walk graphs.
To do so, we first need some notation.

Let $E_I$ be the set of edges incident to intervals.
For any $e \in E_I$ we label the edge with the bicomponent incident to e.
That is, for an edge $e = \{u,v\} \in E_I$ with $\deg(u) = 2$ and $\deg(v) \geq 3$, we label $e$ with $l(e) := v$.
This label is well-defined, since all edges in $E_I$ are incident to exactly one bicomponent thanks to property c).

Now let $L \subseteq V(G)$ be the multiset of labels of the edges in $E_I$, that is, each label is in $L$ once for every edge with that label.
Formally,
\[
    L = \{v^{m(v)} \mid v \in V(G),\ \deg(v) \geq 3 \}
\]
where $m(v) := |\{e \in E_I \mid l(e) = v\}|$ is the multiplicity of the label $v$.
For an example of this notation, see \Cref{fig:multiplicity_example}.
\begin{figure}
    \centering
    \includegraphics{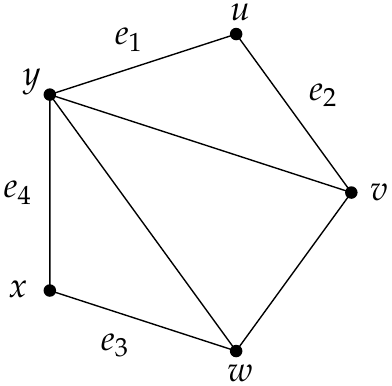}
    \caption{This walk graph contains two intervals, $u$ and $x$. The edges incident to these intervals are $E_I = \{e_1, e_2, e_3, e_4\}$.
    The labels of these edges are given by $l(e_1) = y$, $l(e_2) = v$, $l(e_3) = w$, $l(e_4) = y$.
    Hence, in $L$ the vertices $v$ and $w$ have multiplicity 1 and $y$ has multiplicity 2.}
    \label{fig:multiplicity_example}
\end{figure}

Assuming $k = n/2 -1$, we can apply a counting argument to obtain the following claim.
\begin{claim}
    Assuming $n\geq 5$ even, $k = n/2 - 1$, we have at least one of the following:
    \begin{enumerate}
        \item there is at least one vertex $v \notin I$ such that $v \notin L$ -- meaning $m(v) = 0$, or
        \item there are at least four vertices that occur in $L$ with multiplicity exactly 1.
    \end{enumerate}
    \label{claim:proof-num-intervals}
\end{claim}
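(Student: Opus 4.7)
The plan is to prove the claim by a clean double count of the multiset $L$. First I would record two identities. Since the bicomponents of $G$ are exactly the non-interval vertices, there are $n-k = n/2+1$ of them. By condition c) of \Cref{lem:irreducibleconditions}, no two intervals are adjacent, so every edge of $E_I$ is incident to exactly one bicomponent, and since every interval has degree $2$, we get $|E_I| = 2k = n-2$. Hence $\sum_{v\,:\,\deg(v)\geq 3} m(v) = |L| = n-2$.

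Next I would negate the first alternative and assume $m(v) \geq 1$ for every bicomponent $v$. I would partition the bicomponents into the set of size $a$ with $m(v) = 1$ and the set of size $b$ with $m(v) \geq 2$. Then $a + b = n/2 + 1$ from the first identity, and $a + 2b \leq n-2$ from the second. Subtracting gives $b \leq n/2 - 3$, and substituting yields $a \geq (n/2+1) - (n/2-3) = 4$, which is exactly the second alternative.

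I do not expect a real obstacle here: the argument is purely arithmetic, and only the assumption $k = n/2-1$, the parity of $n$, and condition c) of \Cref{lem:irreducibleconditions} (so that each edge of $E_I$ contributes to a single label) are used. The other conditions will only matter in the later steps of the proof of \Cref{prop:irreducible-graphs-isomorphic}, where the bicomponent missing from $L$ or the four single-label bicomponents produced by this claim are exploited to build a cut in $G$ of size larger than $n$, contradicting $n$-separability.
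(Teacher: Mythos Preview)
Your argument is correct and is essentially the same double-count the paper uses: both rely on the identities that there are $n-k=n/2+1$ bicomponents and $|L|=2k=n-2$ total labels. The only cosmetic difference is the direction of the contrapositive---the paper assumes Case~2 fails and bounds the number of distinct labels $|L^*|\leq k+1$ to force a bicomponent with $m(v)=0$, whereas you assume Case~1 fails and bound $a\geq 4$ directly; your version is arguably a touch cleaner arithmetically.
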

\begin{proof}
    We show that when 2.\ does not hold, then 1.\ holds.

    If Case 2.\ does not hold, there are at most three labels with multiplicity exactly 1 in $L$.
    We now give an upper bound on the size of the set of distinct labels, that is $L^* := \{v \in V\setminus I \mid m(v) > 0\}$.
    Since the number of edges being labelled is fixed, $L^*$ is largest if each label occurs as rarely as possible, i.e., when there are three labels with multiplicity 1 and all other labels occur with multiplicity 2.
    Note that there are exactly $2k$ edges in $E_I$ so the total number of labels is $2k$.
    Thus, the number of labels occurring with multiplicity 2 can be at most $(2k-3)/2 = k-3/2$.
    Thus, $|L^*|\leq k-3/2+3 = k+3/2$, and since this number needs to be an integer we get $|L^*| \leq k+1$.
    Since there are $k$ intervals, at least $n-k-|L^*|$ vertices are neither an interval nor in $L^*$.
    As we assume $k = n/2 - 1$, this implies that $n-k-|L^*| \geq n-k-(k+1) = 1$ and thus, there must be at least one bicomponent not occurring in $L$, as claimed by Case~1.
\end{proof}
We next show that in each of these cases $k = n/2-1$ implies that $\mu(G) > n$.
First consider Case~1.
\begin{lemma}
    Assume $n\geq 5$ even, $k = n/2 - 1$.
    If Case~1.\ of \Cref{claim:proof-num-intervals} applies, we have $\mu(G) \geq n+1$.
    \label{lem:num-intervals-case1}
\end{lemma}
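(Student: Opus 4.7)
The plan is to prove the lemma by constructing an explicit cut of size at least $n+1$ in $G$, thus contradicting the assumption $\mu(G) \leq n$ if one did not want $\mu(G) \geq n+1$. The natural starting point is the cut $I$ consisting of all intervals, and we will augment it by a single carefully chosen bicomponent.

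First I would observe that, by conditions b) and c) of \Cref{lem:irreducibleconditions}, every interval has degree exactly $2$ and all of its neighbours are bicomponents. Hence every edge incident to $I$ actually crosses the cut defined by $I$, giving exactly $2k = n-2$ crossing edges (using the case assumption $k = n/2-1$). This already gets us close to $n$, but we still need to add at least $3$ further crossing edges.

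Next I would invoke the Case 1 hypothesis: there exists a bicomponent $v \notin I$ with $m(v) = 0$. Unpacking the definition of $m$, this means that no edge of $G$ has $v$ as its bicomponent endpoint while the other endpoint lies in $I$, i.e., $v$ has no interval neighbour. Since $v$ is a bicomponent, condition b) gives $\deg(v) \in \{3,4\}$, and every edge incident to $v$ therefore runs from $v$ to some other bicomponent in $V \setminus I$.

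Now consider the cut $A := I \cup \{v\}$. The $2k = n-2$ edges counted before still cross $A$: none of them is incident to $v$ (by $m(v) = 0$), so their bicomponent endpoint still lies in $V \setminus A$. Additionally, every edge incident to $v$ goes to some bicomponent distinct from $v$ that is not in $I$, hence lies in $V \setminus A$, so all $\deg(v)$ such edges cross the cut as well. Therefore
\[
    \mu(G) \;\geq\; \mu(A) \;\geq\; 2k + \deg(v) \;\geq\; (n-2) + 3 \;=\; n+1,
\]
which yields the claim. The argument is a clean counting one and I do not foresee any real obstacle; the only subtlety is to carefully check that the two groups of edges counted (those incident to $I$ and those incident to $v$) are disjoint, which is exactly what the hypothesis $m(v) = 0$ guarantees.
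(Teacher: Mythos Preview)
Your proof is correct and follows essentially the same approach as the paper: take the cut $I$ of size $2k=n-2$, then augment it by the multiplicity-zero bicomponent $v$ to gain at least $\deg(v)\geq 3$ additional cut edges. Your write-up is in fact a bit more careful than the paper's, since you explicitly verify that the edges counted at $I$ and the edges counted at $v$ are disjoint via the hypothesis $m(v)=0$.
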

\begin{proof}
    In this case there is a bicomponent $v$ that is not adjacent to any interval.
    Therefore, consider the cut given by the set of intervals $I$.
    We already argued above that the size of this cut is $2k$.
    Now consider the cut $S = I \cup \{v\}$ obtained by adding $v$ to $I$.
    In addition to the cut edges incident to $I$, all incident edges of $v$ are cut edges in $S$, which implies $\mu(G) \geq 2k + 3 = n - 2 + 3 = n+1$.
\end{proof}

We proceed with Case~2, which will get significantly more  technical.
\begin{lemma}
    Assume $n\geq 5$ even, $k = n/2 - 1$.
    If Case~2.\ of \Cref{claim:proof-num-intervals} applies, we have $\mu(G) \geq n+1$.
    \label{lem:num-intervals-case2}
\end{lemma}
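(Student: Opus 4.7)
The strategy mirrors the proof of \Cref{lem:num-intervals-case1}: starting from the cut $I$ of size $2k = n-2$, I aim to construct a modification achieving cut size at least $n+1$, i.e.\ to obtain a net gain of at least $3$. For any $S \subseteq V \setminus I$, the extended cut satisfies
\[
\mu(I \cup S) = 2k + \sum_{v \in S}(\deg(v) - 2 m(v)) - 2 |E_G(S)|,
\]
where $m(v)$ is the multiplicity of $v$ in the label multiset $L$. Each once-labelled bicomponent contributes an individual gain of $\deg(v) - 2 \in \{1, 2\}$, while each internal edge of $S$ costs $2$. In addition, we allow removing an interval $u \in I$, which amounts to a ``swap'' of $u$ for some of its bicomponent neighbors.

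The plan is a case analysis on the induced multigraph $H$ of $G$ on the four once-labelled bicomponents $v_1, \ldots, v_4$ guaranteed by \Cref{claim:proof-num-intervals}. Whenever $H$ admits an independent triple, choosing $S$ to be that triple immediately gives $\Delta(S) \geq 3$; whenever two of the $v_i$ have $G$-degree $4$ and are non-adjacent in $H$, those two alone give $\Delta(S) = 4$. These observations handle almost every configuration; the only remaining bad configurations are those where $H$ is essentially a perfect matching on four degree-$3$ vertices, in which every $S \subseteq \{v_1, \ldots, v_4\}$ yields $\Delta(S) \leq 2$.

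For the bad configurations I combine addition with an interval removal: if two matched once-labelled bicomponents $v_i, v_j$ share a common interval neighbor $u \in I$, then the cut $(I \setminus \{u\}) \cup \{v_i, v_j\}$ has size exactly $n$ (removing $u$ costs nothing on the $u v_i, u v_j$ edges since those become cut again through $v_i, v_j$, and the two external edges of $v_i, v_j$ contribute the gain), and adding a third once-labelled vertex from the other matched pair --- necessarily non-adjacent to $v_i, v_j$ in the matching $H$ --- raises the size to at least $n+1$. The main obstacle will be establishing that such a shared interval always exists in the matching configuration. I intend to resolve this through a double-counting of interval-to-bicomponent incidences in the extremal regime (where all multi-labelled bicomponents are forced to have multiplicity exactly $2$ by the edge bound of \Cref{lem:num-edges}): the constrained distribution of the $2k$ interval endpoints combined with the connectedness and semi-Eulerian nature of $G$ forces either a shared interval, richer adjacencies in $H$ yielding an independent triple after all, or additional once-labelled bicomponents beyond the guaranteed four --- in all of which the construction above then succeeds. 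The case analysis is finite but technical, with the degree-$4$ subcases simplifying since their individual gain of $2$ rapidly exhausts the required $3$.
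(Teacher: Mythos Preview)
Your case analysis contains a genuine error. You assert that after disposing of the cases ``independent triple in $H$'' and ``two non-adjacent degree-$4$ vertices in $H$'', the only remaining configuration is a perfect matching on four degree-$3$ vertices. This cannot happen: an irreducible walk graph is semi-Eulerian but not Eulerian (property~a) of \Cref{lem:irreducibleconditions}), so it has \emph{exactly two} odd-degree vertices. Four degree-$3$ bicomponents are impossible.

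More importantly, you have missed the configuration that is actually the hard case. Suppose all four multiplicity-$1$ vertices $v_1,\dots,v_4$ have degree $4$ and are pairwise adjacent, so $H\cong K_4$. Then each $v_i$ spends three of its four edges inside $X$ and one on its unique interval neighbour, so this is consistent with the degree and multiplicity constraints. In this configuration there is no independent pair (let alone triple), and every $S\subseteq X$ satisfies $\Delta(S)\le 2$: a single vertex gives $+2$, an adjacent pair gives $2+2-2=2$, a triangle gives $6-6=0$, and the full $K_4$ gives $8-12<0$. Your two observations do not cover this case, and your proposed repair --- swapping in a shared interval and then adding a third vertex ``necessarily non-adjacent to $v_i,v_j$'' --- breaks down precisely because in $K_4$ every third vertex \emph{is} adjacent to the first two.

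The paper deals with this by first adding an edge between the two degree-$3$ vertices (so every bicomponent has degree $4$ and the target becomes $n+2$), then reducing to the $K_4$ case, and finally performing a two-level neighbourhood analysis: letting $Y$ be the interval neighbours of $X$ and $Z$ the other bicomponent neighbours of $Y$, it case-splits on whether $|Y|<4$, $|Z|<4$, or $|Y|=|Z|=4$, each time exhibiting an explicit augmentation of $I$ that removes one or two $y_i$ and adds suitable vertices from $X\cup Z$. Your sketch gestures at a ``double-counting of interval-to-bicomponent incidences'' to force a shared interval, but in fact a shared interval need not exist (the $|Y|=4$ branch), and the argument must go one step further out to the $Z$ layer.
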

\begin{proof}
    Our strategy is similar to Case~1, where we used the cut given by $I$ and augmented it to get a cut of a size contradicting $\mu(G) \leq n$.
    We first observe that we can assume that there is no vertex $v \notin I$ such that $v \notin L$, as then Case~1.\ applies.
    
    To make the arguments easier to comprehend, we modify the graph $G$ by adding an edge between the two vertices of degree 3 (such an edge may already be present, in which case we simply double it).
    After this modification every vertex $v \in V(G)$ has a degree of $\deg(v) \in \{2, 4\}$.
    However, the max-cut could increase by that edge.
    Therefore, we will show that the modified graph has a cut of size at least $n+2$, which then implies this lemma.
    
    By Case~2.\ there are four vertices with multiplicity exactly 1 in $L$, call them $x_1, x_2, x_3, x_4$.
    We use the set $X = \{x_1, x_2, x_3, x_4\}$ to refer to these vertices.
    Observe that when two vertices in $X$ are non-adjacent, say $\{x_1, x_2\} \notin E(G)$, then we can augment the cut given by $I$, by adding both $x_1$ and $x_2$ to this cut.
    This way, the cut $I \cup \{x_1, x_2\}$ will have $2k - 2 + 6 = 2k + 4 = n+2$ cut edges, as claimed.
    This is also visualised in \Cref{fig:walk_graph_maxcut1}.
    \begin{figure}
        \centering
        \includegraphics{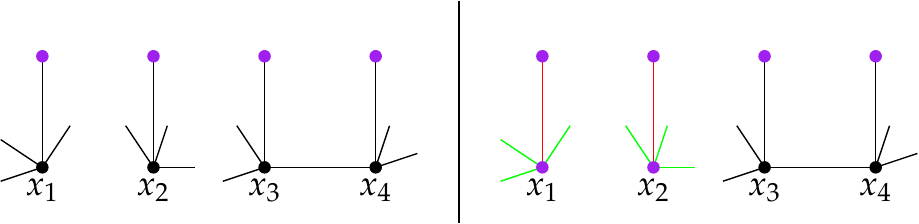}
        \caption{Augmenting the cut when $x_1$ and $x_2$ are non-adjacent. Both the situations before augmenting the cut given by $I$ (left) and after the augmentation with the cut $S = I \cup \{x_1, x_2\}$ (right) are shown.
        The vertices in the cut are coloured purple. Edges that are lost in the cut are coloured red and edges gained in the cut are coloured green.
        Hence, in $S$ we gain six additional cut edges, while loosing two, resulting in a total of four additional cut edges.}
        \label{fig:walk_graph_maxcut1}
    \end{figure}
    
    If there are more than four vertices with multiplicity exactly 1, we can augment the cut in the exact same way: Since each of these vertices must also be adjacent to an interval, they cannot all be pairwise adjacent.
    Therefore, we can now assume that there are exactly four vertices with multiplicity at most 1, and they are pairwise adjacent.
    
    We can also see that $n > 8$. If $n=6$ there is only a single possible graph (two intervals each adjacent to exactly two vertices of $X$), but this graph has a cut of size $8=n+2$. If $n=8$ it is impossible to accommodate all possible edges without either having neighbouring intervals or a disconnected graph.

    Moreover, we can see that there is no vertex with multiplicity at least 3.
    Indeed, assuming there is such a vertex $v \notin I$ with $m(v) \geq 3$ there are at most $2k - 4 -3$ edges to label with the remaining $n-k-4-1$ vertices not in $X\cup\{v\}$.
    But then the remaining average multiplicity is at most $(2k-7)/(n-k-5) = (n-9)/(n/2 - 4) = 2 - 2/(n-8) < 2$ for $n > 8$.
    Hence, there would be at least one other vertex with multiplicity at most 1, which we assumed not to be the case.
    We can thus assume that there are exactly four vertices with multiplicity 1 and all other bicomponents have multiplicity exactly 2.

    Let $Y = \{y_1, y_2, y_3, y_4\}$ be the intervals adjacent to $x_1, x_2, x_3, x_4$, respectively.
    First assume $|Y| < 4$, i.e., without loss of generality $y_1 = y_2$.
    Consider the cut $S = I \setminus \{y_1\} \cup \{x_1, x_2\}$.
    As shown in \Cref{fig:walk_graph_maxcut2} this augments the cut given by $I$ by four edges, implying that $\mu(G) \geq 2k + 4 = n+2$.
    \begin{figure}[t]
        \centering
        \includegraphics{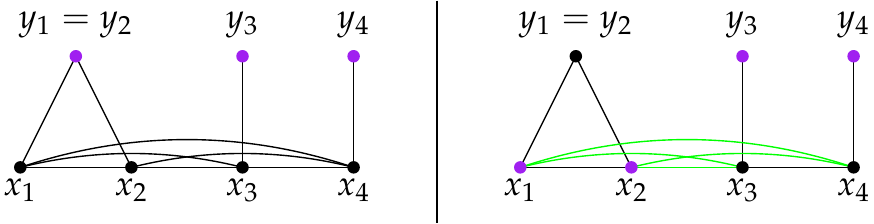}
        \caption{Augmenting the cut when $x_1,x_2,x_3,x_4$ induce a complete subgraph and $y_1 = y_2$. The left side depicts the cut $I$ (purple vertices correspond to vertices in the cut) and the right side depicts the cut $S = I \setminus \{y_1\} \cup \{x_1,x_2\}$. There are four additional cut edges in $S$ (coloured green).}
        \label{fig:walk_graph_maxcut2}
    \end{figure}

    Next, assume $|Y| = 4$.
    Let $Z = \{z_1, z_2, z_3, z_4\}$ be the bicomponents adjacent to $y_1, y_2, y_3, y_4$, such that $z_i \neq x_i$.
    If $|Z| \neq 4$, we have without loss of generality $z_1 = z_2$.
    Consider the cut $S = I \setminus \{y_1, y_2\} \cup \{x_1, x_2, z_1\}$.
    As shown in \Cref{fig:walk_graph_maxcut3} this augments the cut given by $I$ by six edges, implying that $\mu(G) \geq 2k + 6 \geq n+2$.
    \begin{figure}[b]
        \centering
        \includegraphics{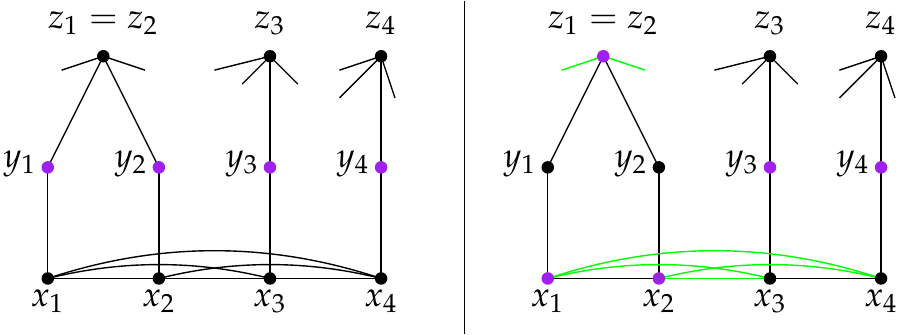}
        \caption{Augmenting the cut when $z_1 = z_2$. As usual, the left side depicts the cut $I$ (with purple vertices corresponding to vertices in the cut). The right side depicts the cut $S = I \setminus \{y_1, y_2\} \cup \{x_1, x_2, z_1\}$. Note that the multiplicity of $z_1$ is 2, so $z_1$ is not adjacent to an interval other than $y_1, y_2$. There are six additional cut edges in $S$.}
        \label{fig:walk_graph_maxcut3}
    \end{figure}

    Finally, consider the case where $|Y|=4$ and $|Z|=4$.
    Then since for all $z_i \in Z$ we have $\deg(z_i) = 4$ and $m(z_i) = 2$, there must be two non-adjacent vertices in $Z$, without loss of generality let them be $z_1$ and $z_2$.
    Then the cut given by $S = I \setminus \{y_1, y_2\} \cup \{x_1, x_2, z_1, z_2\}$ augments the cut given by $I$ by six edges as shown in \Cref{fig:walk_graph_maxcut4}.
    This implies that $\mu(G) \geq 2k + 6 \geq n+2$.
    \begin{figure}
        \centering
        \includegraphics{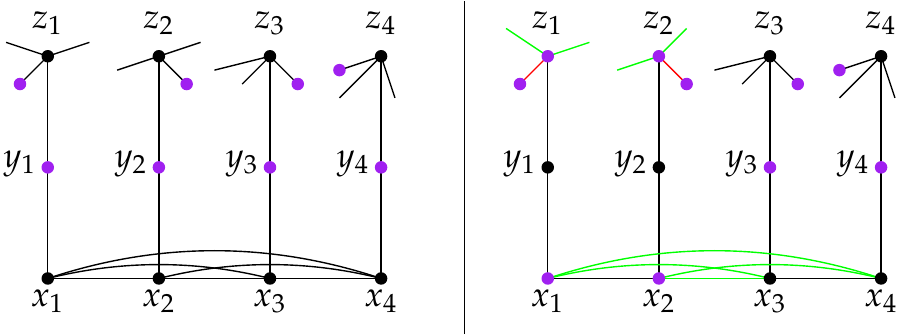}
        \caption{Augmenting the cut in the case when $|Y|=4$ and $|Z|=4$. Again, the left side depicts $I$ with purple vertices corresponding to vertices in the cut. The right side depicts the cut $S = I \setminus \{y_1, y_2\} \cup \{x_1, x_2, z_1, z_2\}$. Observe that the vertices $z_i$ are adjacent to some other interval different from $y_i$, because $z_i$ has multiplicity 2.
        This is depicted by the unlabelled purple vertices. We see that in $S$ we have eight additional cut edges (green) and two edges are lost in the cut (red), resulting in a total of six additional edges.}
        \label{fig:walk_graph_maxcut4}
    \end{figure}

    Observe that this case distinction is conclusive, where in any case we have shown that $\mu(G) \geq n+2$, exactly as required.
\end{proof}

Combining \Cref{claim:proof-num-intervals}, \Cref{lem:num-intervals-case1} and \Cref{lem:num-intervals-case2} we get the following corollary.
\begin{corollary}
    Let $G$ be an irreducible walk graph on $n$ vertices for some even $n \geq 5$.
    Then the number of vertices of degree 2, denoted by $k$, is exactly $k= n / 2$.
    \label{lem:number-intervals-irreducible}
\end{corollary}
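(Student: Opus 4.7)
The plan is to combine the three results that have just been established into a short contradiction argument. By \Cref{lem:num-intervals-irreducible-almost}, we already know that the number of degree-2 vertices $k$ of an irreducible walk graph on even $n \geq 5$ vertices satisfies $k \in \{n/2 - 1, n/2\}$, so it suffices to rule out the case $k = n/2 - 1$.

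First I would assume for contradiction that $k = n/2 - 1$. Then \Cref{claim:proof-num-intervals} applies (this claim was stated precisely under the hypothesis ``$n \geq 5$ even, $k = n/2 - 1$''), so at least one of its two cases must hold: either there exists a bicomponent $v$ with $m(v) = 0$, or there are at least four vertices of multiplicity exactly $1$ in $L$. In the former case, \Cref{lem:num-intervals-case1} yields $\mu(G) \geq n+1$, while in the latter case, \Cref{lem:num-intervals-case2} yields $\mu(G) \geq n+1$. Either way, we obtain $\mu(G) \geq n+1$.

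This contradicts condition d) of \Cref{lem:irreducibleconditions}, which requires $\mu(G) \leq n$ for the walk graph of an $n$-separable irreducible necklace. Hence the assumption $k = n/2 - 1$ is untenable, and we conclude $k = n/2$, as desired.

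The argument is essentially a bookkeeping step: all technical content has already been absorbed into \Cref{claim:proof-num-intervals}, \Cref{lem:num-intervals-case1}, and \Cref{lem:num-intervals-case2}, and the only non-trivial aspect is verifying that the two cases of the claim are genuinely exhaustive under the hypothesis $k = n/2 - 1$, which is exactly what that claim asserts. There is therefore no real obstacle — the corollary is a direct ``combining'' statement, and the proof consists of two sentences invoking the three results in sequence.
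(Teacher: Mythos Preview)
Your proposal is correct and matches the paper's approach exactly: the paper states this corollary with the single sentence ``Combining \Cref{claim:proof-num-intervals}, \Cref{lem:num-intervals-case1} and \Cref{lem:num-intervals-case2} we get the following corollary,'' having already noted (just before \Cref{claim:proof-num-intervals}) that by \Cref{lem:num-intervals-irreducible-almost} it suffices to rule out $k=n/2-1$ via a contradiction to $\mu(G)\leq n$. Your write-up simply spells out this combination in full, which is fine.
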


Now that we know that the number of intervals is correct, the next step is to show that each bicomponent is adjacent to exactly two intervals in the walk graph.
\begin{lemma}
    Let $G$ be an irreducible walk graph on $n$ vertices for some even $n \geq 5$.
    Then every bicomponent is adjacent to exactly two intervals, and there is no double edge between any bicomponent and any interval.
    \label{lem:bicomponents-exactly-two-adjacent-intervals}
\end{lemma}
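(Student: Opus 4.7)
The plan is to exploit the counting setup already established: by \Cref{lem:number-intervals-irreducible} we have exactly $k = n/2$ intervals, hence $n/2$ bicomponents, and since each interval has degree~$2$ the total multiplicity $\sum_{v \notin I} m(v)$ equals $2k = n$. The average multiplicity is therefore exactly $2$, so it suffices to rule out $m(v) \leq 1$ for every bicomponent $v$; equality $m(v)=2$ for all $v$ follows from averaging. Both the multiplicity claim and the no-double-edge claim will be established by starting from the cut $I$ (of size $2k = n$), performing a small local modification, and deriving $\mu(G) \geq n+1$, contradicting condition d) of \Cref{lem:irreducibleconditions}.

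First I would prove that $m(v) \geq 2$ for every bicomponent $v$. For this, consider the cut $S = I \cup \{v\}$. Adding $v$ removes the $m(v)$ cut edges previously going from $v$ to intervals and adds the $\deg(v) - m(v)$ edges from $v$ to the remaining bicomponents. Hence
\begin{equation*}
    \mu(S) \;=\; 2k - m(v) + (\deg(v) - m(v)) \;=\; n + \bigl(\deg(v) - 2 m(v)\bigr).
\end{equation*}
Using $\deg(v) \in \{3,4\}$ (condition b) of \Cref{lem:irreducibleconditions}), if $m(v) \leq 1$ this quantity is at least $n + 1$, a contradiction. Combined with the averaging argument this forces $m(v) = 2$ for all bicomponents, which already gives the first part of the lemma: every bicomponent is adjacent to exactly two interval endpoints.

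Second, I would rule out double edges. Suppose there is a double edge between an interval $u$ and a bicomponent $v$; then both edges incident to $u$ go to $v$, so the multiplicity they contribute to $v$ is already $2$. Since $m(v) = 2$ exactly, $v$ has no other interval neighbour. Now consider the cut $S = (I \setminus \{u\}) \cup \{v\}$. The remaining $k-1$ intervals contribute $2(k-1)$ cut edges (none of them go to $v$, since $u$ is the sole interval neighbour of $v$); the two $u$--$v$ edges are now cut edges again (the endpoints lie on opposite sides); and the remaining $\deg(v) - 2 \geq 1$ edges from $v$ go to other bicomponents, which lie outside $S$, contributing $\deg(v) - 2$ further cut edges. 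Summing gives
\begin{equation*}
    \mu(S) \;\geq\; 2(k-1) + 2 + (\deg(v) - 2) \;=\; n + \deg(v) - 2 \;\geq\; n + 1,
\end{equation*}
contradicting $\mu(G) \leq n$. Therefore no such double edge exists.

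The main obstacle I anticipate is bookkeeping in the double-edge step: one must carefully track which edges change status when both swapping $u$ out of and $v$ into the cut, in particular to check that the double edge $u$--$v$ is genuinely counted as two cut edges after the swap and that no other edges from $v$ go to $I \setminus \{u\}$. The multiplicity bound $m(v) = 2$ proved in the first step is what makes this check clean and disposes of any cross-interactions.
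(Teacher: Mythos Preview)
Your proof is correct and follows essentially the same approach as the paper: both arguments augment the cut $I$ with a low-multiplicity bicomponent to force $\mu(G)>n$, use averaging over the $n/2$ bicomponents to conclude $m(v)=2$ everywhere, and dispose of the double-edge case by swapping the interval for its unique bicomponent neighbour in $I$. Your presentation is slightly cleaner in that it packages the first two steps into the single formula $\mu(S)=n+\deg(v)-2m(v)$, but the underlying ideas are identical.
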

\begin{proof}
    First note that every bicomponent is adjacent to at least one interval.
    Otherwise, use this bicomponent to augment the cut $I$ to get a cut of size at least $2k+3 > n$.
    
    Similarly, if there is a bicomponent $u$ that has exactly one edge to intervals, say to interval $v$, we can use this bicomponent to augment the cut $I$ as well: From \Cref{lem:number-intervals-irreducible} we know that the size of the cut given by $I$ is exactly $n$. Adding the bicomponent $u$ of degree at least 3 that is only adjacent to one interval $v$ will increase the cut size by at least one.

    We next prove that there is no bicomponent with more than two edges to intervals (not even parallel edges). Since there are exactly $k=n/2$ intervals that have a total degree of $n$, if any bicomponent has three or more edges to intervals, then at least one bicomponent has at most one edge to intervals, which we have already excluded.

    It only remains to show that for every bicomponent the two edges to intervals go to two distinct intervals. If a bicomponent $u$ has two edges to some interval $v$ but otherwise is only adjacent to other bicomponents, the cut given by $I$ can be made strictly larger by replacing $v$ with $u$. We thus conclude the lemma.
\end{proof}

We thus know that every interval is adjacent to exactly two distinct vertices, which must be bicomponents by property c) of \Cref{lem:irreducibleconditions}. Since $N_n$ roughly consists of a cycle and a path connecting every other vertex in that cycle, these adjacent bicomponents should be connected.
\begin{lemma}
    Let $G$ be an irreducible walk graph on $n$ vertices for some even $n \geq 5$.
    For any vertex $v$ with $\deg(v) = 2$, let $v_1$ and $v_2$ be its two adjacent vertices.
    If $\deg(v_1) = 4$ or $\deg(v_2) =4$, then $v_1$ and $v_2$ are adjacent in $G$.
    \label{lem:deg2-adjacent-are-adjacent}
\end{lemma}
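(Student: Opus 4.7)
The plan is to argue by contradiction: assuming $v_1 \not\sim v_2$, I would construct a cut of size at least $n+1$, contradicting $\mu(G)=n$ (which holds by $n$-separability together with \Cref{lem:number-intervals-irreducible}). Without loss of generality take $\deg(v_1)=4$. By \Cref{lem:bicomponents-exactly-two-adjacent-intervals}, $v_1$ has exactly two distinct interval neighbors --- namely $v$ and another interval $v'$ --- and its remaining two edges go to bicomponents $u_1,u_2$ (possibly coinciding in a double edge). Analogously, $v_2$ has a second interval neighbor $v''$ and $\deg(v_2)-2\in\{1,2\}$ further edges to bicomponents $w_j$.

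The candidate cut I would use is $S := (I \setminus \{v\}) \cup \{v_1, v_2\}$, i.e.\ I remove the interval $v$ from the canonical max-cut $I$ and replace it with its two bicomponent neighbors. The key step is then to compare the cut edges of $S$ against those of $I$ (whose size is exactly $n$). Precisely two edges change from ``cut'' to ``not cut'', namely $(v_1,v')$ and $(v_2,v'')$, because in each case both endpoints now lie in $S$. On the other hand, the $2 + (\deg(v_2)-2) = \deg(v_2)$ bicomponent-neighbor edges of $v_1$ and $v_2$ all change from ``not cut'' to ``cut'', because their bicomponent endpoints sit outside $S$; crucially, the non-adjacency assumption $v_1 \not\sim v_2$ guarantees that none of the $u_i$ or $w_j$ equals $v_1$ or $v_2$. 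The two edges $(v,v_1)$ and $(v,v_2)$ stay cut. Therefore $|S| \geq n - 2 + \deg(v_2) \geq n+1$ (using $\deg(v_2)\geq 3$), giving the contradiction.

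The main obstacle I foresee is to ensure the count is robust to potential coincidences among the named vertices (for instance $v'=v''$, a double edge between $v_1$ and $u_1$, or some $u_i$ equal to some $w_j$). These are dispatched by combining two ingredients: the no-double-edge property between bicomponents and intervals in \Cref{lem:bicomponents-exactly-two-adjacent-intervals}, which ensures each bicomponent has two genuinely distinct interval neighbors; and the non-adjacency assumption, which rules out any $u_i$ or $w_j$ being $v_1$ or $v_2$. Any remaining coincidence only changes how edges are paired, not whether they contribute to the cut, so the tally of $\deg(v_2)-2$ net gained edges is unaffected.
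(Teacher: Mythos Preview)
Your argument is correct and essentially identical to the paper's: both argue by contradiction using the cut $S = (I \setminus \{v\}) \cup \{v_1,v_2\}$, invoking \Cref{lem:bicomponents-exactly-two-adjacent-intervals} to see that exactly two interval-edges are lost while at least three bicomponent-edges are gained, yielding $\mu(G)\geq n+1$. Your discussion of the possible coincidences ($v'=v''$, double edges, $u_i=w_j$) is a bit more explicit than the paper's, but the core idea and the edge-counting are the same.
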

\begin{proof}
    We proceed by contradiction.
    Assume there is an interval $v$ with two non-adjacent neighbours $v_1$ and $v_2$, such that either $v_1$ or $v_2$ has degree 4.
    Consider the cut given by the set of intervals $I$.
    Its size is $2k = n$ by \Cref{lem:number-intervals-irreducible}.
    Now modify this cut by removing $v$ but inserting $v_1$ and $v_2$, so we get the cut $S = I \setminus \{v\}\cup \{v_1, v_2\}$.
    In this new cut the edges $\{v,v_1\}$ and $\{v,v_2\}$ remain in the cut.
    From \Cref{lem:bicomponents-exactly-two-adjacent-intervals}, we know that both $v_1$ and $v_2$ each have exactly one edge to other intervals, call them $u_1 \neq v$ and $u_2 \neq v$ respectively.
    Hence, in $S$ the edges $\{u_1, v_1\}$ and $\{u_2, v_2\}$ are removed.
    However, since $v_1$ and $v_2$ are not adjacent all other edges adjacent to them are added to the cut.
    Since at least one of them has degree 4, we gain at least three cut edges.
    Hence, $S$ has larger size than the cut given by $I$.
    This is a contradiction to $\mu(G) \leq n$.
    \begin{figure}
        \centering
        \includegraphics{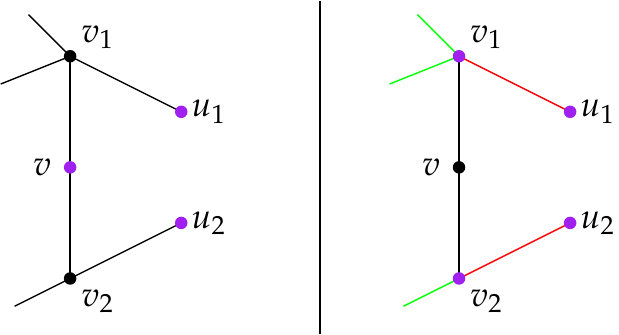}
        \caption{The augmentation from the proof of \Cref{lem:deg2-adjacent-are-adjacent}. The left side shows the cut $I$ (purple vertices correspond to vertices in the cut) and the right side shows the cut given by $I \setminus \{v\} \cup \{v_1,v_2\}$. Since $v_1$ and $v_2$ are non-adjacent all edges incident to $v_1$ and $v_2$ are gained as cut edges, except the two edges incident to the intervals $u_1$ and $u_2$. As at least one of $v_1, v_2$ is of degree 4 -- in this case $v_1$, we have a total of at least one additional cut edge.}
        \label{fig:walk_graph_maxcut5}  
    \end{figure}
    For an illustration of this augmentation see \Cref{fig:walk_graph_maxcut5}.
\end{proof}

Now we are ready to prove \Cref{prop:irreducible-graphs-isomorphic} for even $n$.
\begin{proof}[Proof of \Cref{prop:irreducible-graphs-isomorphic}, for even $n$]
    We build a sequence of distinct vertices {$u_1, ..., u_n$} that construct a cycle in $G$ such that for any odd $i < n-1$, there is an edge between $u_i$ and $u_{i+2}$.
    This is exactly the graph $N_n$ and since both $G$ and $N_n$ have the same number of vertices and edges, this then implies that $G$ is isomorphic to $N_n$.

    To start, let $u_1$ be an arbitrary vertex of degree $\deg(u_1) =3$, existing by properties a) and b) of \Cref{lem:irreducibleconditions}.
    By \Cref{lem:bicomponents-exactly-two-adjacent-intervals}, the vertex $u_1$ is adjacent to exactly two intervals $i_1,i_2$. Each of these intervals is adjacent to another bicomponent. One of the intervals must be adjacent to a bicomponent of degree 4, since otherwise both intervals would be adjacent to the two odd-degree vertices $u_1,u'$. Then the cut $I\setminus\{i_1,i_2\}\cup\{u_1,u'\}$ would be strictly larger than the cut given by $I$, which contradicts property d). Thus let $u_2\in\{i_1,i_2\}$ be an interval adjacent to $u_1$ and to a degree 4 vertex $u_3$.
    Note that $u_1$ and $u_3$ are adjacent by \Cref{lem:deg2-adjacent-are-adjacent}.
    We can now iteratively construct the sequence for every $i > 3$.
    To do so, we maintain the following invariant, which we have proven for $i \leq 3$.
    \begin{enumerate}[(1)]
        \item The vertices $u_1,\ldots,u_i$ are distinct,
        \item for all $j < i$ the vertices $u_j$ and $u_{j+1}$ are adjacent in $G$,
        \item for all even $j \leq i$ we have $\deg(v_j) = 2$, and
        \item for all odd $j \leq i-2$ we have $\deg(v_j) \geq 3$, with $\deg(v_1) = 3$, and $v_{j}$ is adjacent to $v_{j+2}$.
    \end{enumerate}
    To construct the sequence, we assume that for any $3 \leq i < n$ we have the subsequence $u_1, \dots, u_i$ such that the invariant holds for $i$ and we augment this sequence to $u_1, \dots, u_i, u_{i+1}$ that still maintains the invariant.
    Therefore, let $3 \leq i < n$ be arbitrary.
    Consider the following cases.

    \paragraph{Case 1: \texorpdfstring{$i$}{i} is even} Then $\deg(u_i) = 2$ (by (3)) and one of its adjacent vertices is $u_{i-1}$ (by (2)).
    Let $u_{i+1}$ be the other adjacent vertex to $u_i$ -- since $u_i$ cannot be adjacent to any interval we have $\deg(u_{i+1}) \geq 3$.

    To show that $u_{i+1} \neq u_j$ for any $j < i$ note that $j$ cannot be even as no intervals are adjacent.
    Moreover, for any odd $1 < j < i-1$, $u_{j}$ already has four edges, since it is adjacent to $u_{j-2}, u_{j-1}, u_{j+1}, u_{j+2}$ -- this is implied by the invariant. Since $u_i$ is not one of these vertices (using (1) and $j \leq i-3$) and $u_i$ and $u_{i+1}$ must be adjacent, $u_{i+1} \neq u_j$.
    Hence, the only remaining possibilities are $j = 1$ or $j=i-1$.
    
    If $u_{i+1}=u_1$ the vertices $u_1,\ldots,u_i$ must form a connected component of $G$. However, since $i<n$ this would imply that $G$ is not connected, and thus not semi-Eulerian.
    If $u_{i+1}=u_{i-1}$ we would have that $u_i$ has only one adjacent bicomponent, contradicting \Cref{lem:bicomponents-exactly-two-adjacent-intervals}.
    
    To show that the invariant still holds, it only remains to show that $u_{i-1}$ and $u_{i+1}$ are adjacent.
    Since neither $u_{i-1}$ nor $u_{i+1}$ are equal to $u_1$, at least one of them has degree 4. Thus the two vertices must be adjacent by \Cref{lem:deg2-adjacent-are-adjacent}.

    \paragraph{Case 2: \texorpdfstring{$i$}{i} is odd} Then $u_i$ is adjacent to the interval $u_{i-1}$ (by (2) and (3)).
    Moreover, $u_i$ is adjacent to some other interval by \Cref{lem:bicomponents-exactly-two-adjacent-intervals}, let $u_{i+1}$ be that vertex.
    The only non-trivial part of the invariant is to show $u_{i+1} \neq u_j$ for every $j < i$. However, every interval in $u_1,\ldots,u_i$ already has both of its edges accounted for, thus $u_{i+1}$ must be a new vertex.

    \paragraph{Conclusion} It remains to prove that the invariant indeed implies that the sequence $u_1, \dots, u_n$ forms a cycle of length $n$ and that for any odd $i< n-1$, $u_i$ and $u_{i+2}$ are adjacent.
    The latter is directly implied by (4) of the invariant.
    Moreover, for $i < n$ condition (2) implies that $u_i$ and $u_{i+1}$ are adjacent, so we need to show that $u_n$ and $u_1$ are adjacent which then implies that $u_1, \dots, u_n$ forms a cycle.
    The vertex $u_n$ has two adjacent bicomponents, one of which is $u_{n-1}$.
    Since in the sequence all the vertices $u_j$ for odd $1 < j < n-1$ already have all four edges accounted for by (2),(3),(4) and \Cref{lem:deg2-adjacent-are-adjacent}, the only target of $u_n$'s remaining edge can be $u_1$.
    Hence, $u_n$ and $u_1$ are adjacent implying that $u_1, \dots, u_n$ forms a cycle.

    Since $G$ neither contains more vertices nor more edges than $N_n$, we can conclude that the graphs are isomorphic.
\end{proof}

\subsection{Odd \texorpdfstring{$n$}{n}}
To see that \Cref{prop:irreducible-graphs-isomorphic} also holds for odd $n$, we will use that in this case the two vertices of degree 3 must be adjacent in the walk graph.
\begin{lemma}
    Let $G$ be an irreducible walk graph on $n$ vertices for some odd $n \geq 5$. Then the two vertices of degree 3 are adjacent.
    \label{lem:deg3-adjacent-n-odd}
\end{lemma}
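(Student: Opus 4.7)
The plan is to proceed by contradiction, assuming the two degree-3 vertices $v_1, v_2$ of $G$ are not adjacent, and to exhibit a cut of size strictly greater than $n$, contradicting property d) of \Cref{lem:irreducibleconditions}. The argument splits naturally according to the value of $k = |I|$, which by \Cref{lem:num-intervals-irreducible-almost} lies in $\{(n-3)/2, (n-1)/2\}$.

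The cleaner case is $k = (n-1)/2$, where the cut given by $I$ already has size $2k = n-1$. For any bicomponent $u$, the single-vertex augmentation gives $|I \cup \{u\}| = n-1 + \deg(u) - 2 m(u) \leq n$, forcing $m(u) \geq 2$ for every degree-4 bicomponent and $m(v_i) \geq 1$ for both degree-3 vertices. Here $m$ is the label-multiplicity notation of \Cref{fig:multiplicity_example}. The sum of all multiplicities equals $2k = n-1$, and the $(n-3)/2$ degree-4 bicomponents contribute at least $n-3$ to this sum, leaving at most $2$ for the degree-3 vertices combined; hence $m(v_1) = m(v_2) = 1$. Now, assuming $v_1 \not\sim v_2$, I would flip $v_1$ and then $v_2$ into $I$: each flip changes the cut size by $\deg(v_i) - 2 m(v_i) = 1$, and since $v_1 \not\sim v_2$ there is no correction term, so $|I \cup \{v_1, v_2\}| = (n-1) + 2 = n+1$, the desired contradiction.

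To rule out $k = (n-3)/2$, I would adapt the case analysis of \Cref{claim:proof-num-intervals} and \Cref{lem:num-intervals-case1,lem:num-intervals-case2} to odd $n$. In this setting there are $(n+3)/2$ bicomponents sharing only $2k = n-3$ interval-endpoints, so either some bicomponent has multiplicity $0$ or many bicomponents have multiplicity exactly $1$. In the first subcase, augmenting $I$ with a degree-$4$ bicomponent of multiplicity $0$ yields $|I \cup \{u\}| \geq 2k + 4 = n+1$; in the second, a pair of non-adjacent low-multiplicity degree-$4$ bicomponents can be added to $I$ simultaneously to exceed $n$. I expect this step to be the main obstacle: the presence of the two degree-$3$ vertices breaks the symmetry of the even-$n$ argument, since adding a single degree-$3$ bicomponent alone changes the cut by only $3 - 2 m \leq 3$, which is never enough to contradict $\mu(G) \leq n$ on its own. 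The grouping argument of \Cref{lem:num-intervals-case2} therefore has to be performed essentially with degree-$4$ bicomponents, and one must verify that sufficiently many of them with low multiplicity exist and are arranged non-adjacently, with a small-$n$ check ($n \in \{5,7\}$) handled separately.
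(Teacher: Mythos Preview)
Your plan is correct and mirrors the paper's proof: contradiction, split by $k\in\{(n-3)/2,(n-1)/2\}$, and in each case augment the cut $I$ to exceed $n$. Your treatment of $k=(n-1)/2$ is in fact tighter than the paper's---where the paper proves a two-case claim (some vertex has multiplicity~$0$, or at least two have multiplicity~$1$) and then branches, your single-vertex augmentation bound $\deg(u)-2m(u)\leq 1$ together with the total-multiplicity count forces $m(v_1)=m(v_2)=1$ directly, after which the non-adjacency of $v_1,v_2$ gives the $+2$ cleanly. For $k=(n-3)/2$ your anticipation is accurate: the paper's argument here is exactly an odd-$n$ adaptation of \Cref{claim:proof-num-intervals} and \Cref{lem:num-intervals-case2}, with an additional branch (both degree-$3$ vertices have low multiplicity) inserted into the claim, and the ``four pairwise-adjacent degree-$4$ vertices of multiplicity~$1$'' subcase reworked to account for the two degree-$3$ vertices; this is indeed the bulk of the work, and your outline of it is sound.
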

Before we prove this lemma, let us see how this can be used to show \Cref{prop:irreducible-graphs-isomorphic} for odd~$n$.

\begin{proof}[Proof of \Cref{prop:irreducible-graphs-isomorphic}, for odd $n$]
We reduce the odd case to the even case.
In particular, by \Cref{lem:deg3-adjacent-n-odd} there is an edge between the two vertices of degree 3 in the walk graph $G$.
We subdivide this edge with a new vertex and obtain a graph $G'$ on $n+1$ vertices.
The goal is to show that $G'$ satisfies the conditions a)--d) of an irreducible walk graph (\Cref{lem:irreducibleconditions}).
Note that a)--c) trivially hold in $G'$.

Therefore, we only need to show that $\mu(G') \leq n+1$.
Towards a contradiction assume $\mu(G') > n+1$, and let $S$ be a maximum cut in $G'$.
Consider this same cut in $G$.
Let $v_1$ and $v_2$ be the two vertices of degree 3.

If $v_1$ and $v_2$ are on opposite sides of the cut, exactly one of the edges making up $\{v_1,v_2\}$ in $G'$ is a cut edge. Similarly, $\{v_1,v_2\}$ is a cut edge in $G$. Thus the cut in $G$ will have size exactly $\mu(G)=\mu(G')>n$, a contradiction.

For the case that $v_1$ and $v_2$ are on the same side of the cut, note that every cut edge in a cut of $G'$ corresponds to a cut in the necklace underlying $G'$.
Therefore, since $v_1$ and $v_2$ correspond to the first and last component of the necklace they can only be on the same side of the cut if there is an even number of cut edges.
Since $n$ is odd and we assume $\mu(G') > n+1$ this implies that $\mu(G') \geq n+3$.
As when going from $G'$ to $G$ the cut size decreases by at most two edges we conclude that $\mu(G) \geq \mu(G)-2 \geq n+3-2 > n$, also a contradiction.

We have shown that $G'$ is an irreducible walk graph on $n+1$ --- an even number --- vertices.
Thus, $G'$ is isomorphic to $N_{n+1}$ by the even case of \Cref{prop:irreducible-graphs-isomorphic} proven earlier.
We now observe that $G'$ is obtained by replacing an edge between the two odd-degree vertices in $G$ by an interval and there is only one such interval in $N_{n+1}$ adjacent to both odd-degree vertices. Thus $G$ must be isomorphic to $N_{n+1}$ with this interval replaced by an edge. This yields exactly the graph $N_{n}$.
Hence, we can conclude that \Cref{prop:irreducible-graphs-isomorphic} also holds in the odd case.    
\end{proof}

It remains to show \Cref{lem:deg3-adjacent-n-odd}.
\begin{proof}[Proof of \Cref{lem:deg3-adjacent-n-odd}]
    We prove the lemma by contradiction, so assume that the two degree three vertices are non-adjacent.
    To show the lemma, we distinguish by the number of intervals $k$ in $G$ -- note that $k \in \{\frac{n-3}{2},\frac{n-1}{2}\}$, by \Cref{lem:num-intervals-irreducible-almost} for odd $n$.
    We prove the lemma for both cases separately.

    In both cases we use a very similar strategy as in the proof of \Cref{lem:number-intervals-irreducible}, where the assumption on the number of intervals can be used to derive two cases to distinguish the walk graph.
    For each of these two cases we show that the cut given by the set of intervals, $I$, can be augmented to a cut of size larger than $n$.
    In fact, we use the same terminology for labelling the incident edges of the intervals as in the proof of \Cref{claim:proof-num-intervals}.
    Recall that for an edge $e = \{u,v\}$ incident to an interval $u$ and a bicomponent $v$, we label this edge with $l(e) := v$.
    The multiset of labels is given by $L := \{v^{m(v)} \mid \deg(v) \geq 3\}$, where the multiplicity of a bicomponent $v$ is given by $m(v) := |\{e \in E_I \mid l(e) = v\}|$, and $E_I$ is the set of edges incident to intervals.

    \paragraph{Case 1: \texorpdfstring{$k = (n-3)/2$}{k=(n-3)/2}}
    We need the following claim.
    \begin{claim}
        If $k = \frac{n-3}{2}$ we have at least one of the following:
        \begin{enumerate}
            \item There is at least one vertex of degree 4 with multiplicity 0 in $L$, or
            \item one of the vertices of degree 3 has multiplicity 0 and the other multiplicity at most 1, or
            \item there are at least four vertices of degree 4 with multiplicity 1 in $L$.
        \end{enumerate}
        \label{claim:n-odd-1}   
    \end{claim}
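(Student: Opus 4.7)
I would argue by contrapositive: assume none of the three conditions 1, 2, 3 holds for the multiset $L$, and derive a contradiction from a short counting argument. The setup begins by recording the basic counts: $|L| = 2k = n-3$, and since $G$ is semi-Eulerian but not Eulerian with all vertex degrees in $\{2,3,4\}$ (by \Cref{lem:irreducibleconditions}), the $(n+3)/2$ bicomponents split into exactly two of degree $3$ and $(n-1)/2$ of degree $4$.

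The next step is to translate each failed condition into a quantitative statement about multiplicities. The negation of condition 1 forces every degree-$4$ vertex to have multiplicity at least $1$ in $L$. The negation of condition 3 caps the number of degree-$4$ vertices with multiplicity exactly $1$ at three. The negation of condition 2 forces the total multiplicity on the two degree-$3$ vertices to be at least $2$; this follows by observing that condition 2 is precisely the statement that the two multiplicities on degree-$3$ vertices sum to at most $1$, as a short case analysis on whether each such vertex has multiplicity $0$, $1$, or $\geq 2$ confirms.

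Letting $a$ and $b$ denote the numbers of degree-$4$ vertices with multiplicity exactly $1$ and with multiplicity at least $2$ respectively, I obtain $a + b = (n-1)/2$ and, combining the above bounds, $a + 2b \leq (n-3) - 2 = n - 5$. Subtracting yields $b \leq (n-9)/2$, hence $a \geq 4$, which contradicts $a \leq 3$ from the negation of condition 3. For the small cases $n \in \{5,7\}$ the inequality $a + 2b \leq n-5$ is already incompatible with $a+b = (n-1)/2$ and $a, b \geq 0$, so the argument closes uniformly.

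I do not expect a real obstacle here; the only point requiring mild care is the translation of condition 2 into the clean bound ``the multiplicities on the two degree-$3$ vertices sum to at most $1$,'' which is a small but entirely routine case distinction. Everything else is elementary counting and follows the same pattern as \Cref{claim:proof-num-intervals} in the even case.
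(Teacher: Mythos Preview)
Your proposal is correct and takes essentially the same approach as the paper: both argue by contrapositive, observe that the negation of condition~2 forces the total multiplicity on the two degree-$3$ vertices to be at least $2$, and then use the negations of conditions~1 and~3 to lower-bound the total multiplicity on degree-$4$ vertices, obtaining a contradiction with $|L|=2k=n-3$. Your presentation with the variables $a,b$ and the explicit treatment of $n\in\{5,7\}$ is slightly more detailed than the paper's, but the underlying counting is identical.
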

    \begin{proof}[Proof of \Cref{claim:n-odd-1}]
        Consider the total multiplicity $T$ of the vertices of degree 3.
        That is, for $u$ and $v$ being the two degree 3 vertices we have $T = m(u) + m(v)$.
        If $T < 2$ we must have that condition 2.\ holds.
        So assume $T \geq 2$.
        The remaining total multiplicity shared among the degree 4 vertices is then $2k - T \leq n-5$.
        Note that there are $n-k-2 = (n-1)/2$ vertices of degree 4.
        However, if both 1.\ and 3.\ do not hold, the multiplicity used by vertices of degree at least 4 is at least $1\cdot 3 + 2\cdot((n-1)/2-3) = n-4 > n-5$.
        Hence the claim follows.
    \end{proof}
    
    Now notice that \Cref{claim:n-odd-1} implies the following.
    If condition 1.\ holds we can augment the cut given by $I$ with that vertex of multiplicity 0 to get a cut of size $2k + 4 = (n-3)+4 > n$.

    If condition 2.\ holds we can add both vertices of degree 3 vertices to the cut and since they are non-adjacent and have total multiplicity at most 1, we add at least 5 edges to the cut and remove at most 1 edge. Thus the cut size increases by at least four in this case as well.

    Finally, if condition 3.\ but none of the other conditions hold we can increase the cut size by at least four edges in a very similar way to the proof of \Cref{lem:num-intervals-case2}. We repeat the proof in the following claim, since there are some small subtle changes.

    \begin{claim}\label{claim:repeatoflemma}
        Assuming that condition 3.\ is the only condition of \Cref{claim:n-odd-1} that holds and the vertices of degree 3 are not adjacent, we must have have $\mu(G)>n$.
    \end{claim}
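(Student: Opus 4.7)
The plan is to follow the cut-augmentation strategy of the proof of \Cref{lem:num-intervals-case2} adapted to the odd-case parameters. Starting from the cut given by the set of intervals $I$, which has size $2k = n-3$, I need to augment it by at least four edges to obtain a cut of size at least $n+1$, which contradicts $\mu(G) \leq n$.

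The first step fixes the multiplicity profile. Since condition 1 of \Cref{claim:n-odd-1} fails, every degree-$4$ vertex has multiplicity at least $1$. Since each multiplicity-$1$ degree-$4$ vertex has only three edges to other bicomponents, any collection of five or more such vertices contains a non-adjacent pair, and adding both to $I$ already gains four edges. I may therefore assume exactly four multiplicity-$1$ degree-$4$ vertices $x_1, x_2, x_3, x_4$. The total-multiplicity identity $\sum_v m(v) = 2k = n-3$ then forces all remaining degree-$4$ vertices to have multiplicity exactly $2$ and the degree-$3$ vertices $u, v$ to satisfy $m(u)+m(v) = 2$. Since condition 2 also fails, either $(m(u), m(v)) = (1,1)$ or, up to symmetry, $(m(u), m(v)) = (2, 0)$.

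In the subcase $(m(u), m(v))=(2,0)$, the degree-$3$ vertex $v$ has three bicomponent edges and cannot be adjacent to all four $x_i$; picking a non-adjacent $x_i$ and augmenting $I$ by $\{v, x_i\}$ gains $3 + 2 = 5$ edges. In the subcase $m(u) = m(v) = 1$, I mimic the four-step case analysis from the proof of \Cref{lem:num-intervals-case2}. If two $x_i$'s are non-adjacent, adding both to $I$ gains four edges; otherwise $\{x_1, x_2, x_3, x_4\}$ induces a $K_4$ and each $x_i$ has a unique interval neighbour $y_i$; if two $y_i$ coincide, the cut $S = I \setminus \{y_i\} \cup \{x_i, x_j\}$ again gains four edges. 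When the $y_i$ are distinct, I let $z_i$ be the other neighbour of $y_i$; since $u$ and $v$ both have multiplicity $1$, each can equal at most one $z_i$, so whenever $z_i = z_j$ the coinciding vertex must be a degree-$4$ multiplicity-$2$ vertex, and the even-case augmentation $S = I \setminus \{y_i, y_j\} \cup \{x_i, x_j, z_i\}$ carries over unchanged. In the last subcase, where all $z_i$ are distinct, I apply $S = I \setminus \{y_i, y_j\} \cup \{x_i, x_j, z_i, z_j\}$.

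The main obstacle lies in this last subcase, because the bookkeeping is tight and some $z_i$ may be the degree-$3$ vertices $u$ or $v$. The crucial observation is that the four edges at each $x_m$ are entirely consumed by $y_m$ together with the three $K_4$ adjacencies, so no $z_i$ is adjacent to any $x_m$. A short case analysis on the multiplicities of $z_i, z_j$ and on whether $\{z_i, z_j\}$ is an edge then shows that the gain is always at least $+4$, the worst case being $m(z_i) = m(z_j) = 2$ with $z_i, z_j$ adjacent, where the $+4$ gain from the $K_4$ edges at $x_i, x_j$ is offset by $-2$ from the second interval neighbours of $z_i, z_j$ and recovered by $+2$ from their remaining bicomponent edges.
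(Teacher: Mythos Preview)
Your approach is essentially the paper's: start from the cut $I$ of size $2k=n-3$ and augment by $+4$. Your bookkeeping is in fact a bit cleaner than the paper's in one place—where the paper uses two separate averaging arguments to rule out degree-$4$ vertices of multiplicity $\geq 3$ and to pin down $m(u)+m(v)=2$, your single total-multiplicity identity $\sum_v m(v)=2k=n-3$ does both at once. Splitting off the $(2,0)$ subcase with the direct $I\cup\{v,x_i\}$ augmentation is also a nice shortcut that the paper does not take; the paper instead carries both subcases through the $Y,Z$ analysis.

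There is one small gap in your final $|Z|=4$ subcase. You apply $S=I\setminus\{y_i,y_j\}\cup\{x_i,x_j,z_i,z_j\}$ for an unspecified pair $(i,j)$ and assert that the worst case is $m(z_i)=m(z_j)=2$ with $z_i,z_j$ adjacent, where ``$+2$ from their remaining bicomponent edges'' saves the day. But walk graphs are multigraphs: if $z_i$ and $z_j$ are joined by a \emph{double} edge, each of them has \emph{no} remaining bicomponent edge, and your count drops to $+2$, which is not enough. The paper avoids this by first observing that some degree-$4$ multiplicity-$2$ vertex $z_i$ has only two bicomponent edges and therefore must be non-adjacent to at least one of the other three $z_j$, and then choosing that non-adjacent pair. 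With that choice your accounting goes through (and in fact you get at least $+5$ in your $(1,1)$ subcase). So the fix is simply to state the choice of a non-adjacent pair up front, rather than claiming the bound holds for every $(i,j)$.
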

    \begin{proof}[Proof of \Cref{claim:repeatoflemma}]
    We first observe that we can assume that there is no degree 4 vertex of multiplicity 0, as then condition~1.\ applies.

    We use the set $X = \{x_1, x_2, x_3, x_4\}$ to refer to the degree 4 multiplicity 1 vertices given by condition 3.
    Observe that when two vertices in $X$ are non-adjacent, say $\{x_1, x_2\} \notin E(G)$, then we can augment the cut given by $I$, by adding both $x_1$ and $x_2$ to this cut.
    This way, the cut $I \cup \{x_1, x_2\}$ will have $2k - 2 + 6 = 2k + 4 = n+1$ cut edges.
    
    If there are any other degree 4 vertices with multiplicity exactly 1, we can augment the cut in the exact same way: Since each of these vertices must also be adjacent to an interval, they cannot all be pairwise adjacent.
    Therefore, we can now assume that there are exactly four vertices with degree 4 and multiplicity at most 1, and they are pairwise adjacent.

    Moreover, we can see that there is no degree 4 vertex with multiplicity at least 3. Assuming there is a degree 4 vertex $v$ with $m(v)\geq 3$, then $v,x_1,\ldots,x_4$, and the two degree 3 vertices together would use at least $3+4+2$ labels (since condition 2.\ is not satisfied). Thus there are at most $2k - 9$ edges to label with the remaining $n-k-4-1-2$ degree 4 vertices not in $X\cup\{v\}$.
    But then the remaining average multiplicity is at most $(2k-9)/(n-k-7) = (n-12)/(n/2 - 11/2) = 2 - 2/(n-11) < 2$ for $n > 11$ (note that if we had $n\leq 11$ we would have at most four intervals which would provide at most 8 total multiplicity, which is not enough).
    Hence, there would be at least one other degree 4 vertex with multiplicity at most 1, which we assumed not to be the case.
    We can thus assume that there are exactly four degree 4 vertices with multiplicity 1 and all other degree 4 vertices have multiplicity exactly 2.

    Similarly, if the degree 3 vertices together had multiplicity at least 3, $X$ and the two degree 3 vertices would use at least $4+3$ labels. Thus there are at most $2k-7$ labels left for the remaining $n-k-4-2$ degree 4 vertices. Then the remaining average multiplicity is at most
    $(2k-7)/(n-k-6) = (n-10)/(n/2 - 9/2) = 2- 2/(n-9)<2$ for $n>9$ (note again that with $n\leq 9$ we had at most three intervals, providing only 6 total multiplicity, which is not enough). Thus there would again be another degree 4 vertex with multiplicity at most 1. We thus conclude that both degree 3 vertices together have total multiplicity exactly 2.

    Let $Y = \{y_1, y_2, y_3, y_4\}$ be the intervals adjacent to $x_1, x_2, x_3, x_4$, respectively.
    First assume $|Y| < 4$, i.e., without loss of generality $y_1 = y_2$.
    Consider the cut $S = I \setminus \{y_1\} \cup \{x_1, x_2\}$.
    This augments the cut given by $I$ by four edges, implying that $\mu(G) \geq 2k + 4 = n+2$.

    Next, assume $|Y| = 4$.
    Let $Z = \{z_1, z_2, z_3, z_4\}$ be the bicomponents adjacent to $y_1, y_2, y_3, y_4$, such that $z_i \neq x_i$.
    If $|Z| \neq 4$, we have without loss of generality $z_1 = z_2$.
    Consider the cut $S = I \setminus \{y_1, y_2\} \cup \{x_1, x_2, z_1\}$.
    Since no vertex has multiplicity more than 2, this augments the cut given by $I$ by at least five edges, implying that $\mu(G) \geq 2k + 5 \geq n+2$.

    Finally, consider the case where $|Y|=4$ and $|Z|=4$.
    Then since for at least two $z_i \in Z$ we have $\deg(z_i) = 4$ and $m(z_i) = 2$, there must be two non-adjacent vertices in $Z$, without loss of generality let them be $z_1$ and $z_2$.
    Then the cut given by $S = I \setminus \{y_1, y_2\} \cup \{x_1, x_2, z_1, z_2\}$ augments the cut given by $I$ by at least four edges.
    This implies that $\mu(G) \geq 2k + 4 \geq n + 1$.
    \end{proof}
        
    In all three cases of \Cref{claim:n-odd-1} we thus get a contradiction to the assumption that $\mu(G)\leq n$.
    Hence, if $k = (n-3)/2$ the two degree 3 vertices must be adjacent.

    \paragraph{Case 2: \texorpdfstring{$k = (n-1)/2$}{k=(n-1)/2}}
    Similar to the previous case, we need the following claim.
    \begin{claim}
        If $k = \frac{n-1}{2}$ we have at least one of the following:
        \begin{enumerate}
            \item There is at least one vertex with multiplicity 0 in $L$, or
            \item there are at least two vertices with multiplicity 1 in $L$.
        \end{enumerate}
        \label{claim:n-odd-2} 
    \end{claim}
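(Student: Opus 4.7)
The plan is to prove the claim by a direct pigeonhole/double-counting argument, analogous to (but simpler than) the proof of \Cref{claim:n-odd-1}. I would argue by contradiction: assume that neither condition~1 nor condition~2 holds. Then every bicomponent has multiplicity at least~$1$ (negation of~1), and at most one bicomponent has multiplicity exactly~$1$ (negation of~2), so all but at most one bicomponent have multiplicity at least~$2$.

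Next I would count both sides of the equation ``sum of multiplicities of bicomponents equals $|E_I|$''. On the one hand, $|E_I| = 2k = n-1$ since each of the $k$ intervals contributes two incident edges, each of which is labelled by the unique bicomponent at its other end (using property c) from \Cref{lem:irreducibleconditions}). On the other hand, the number of bicomponents is $n - k = (n+1)/2$. Under the assumption above, the minimum possible total multiplicity is achieved when exactly one bicomponent has multiplicity~$1$ and all others have multiplicity~$2$, giving
\[
\sum_{v \colon \deg(v)\geq 3} m(v) \;\geq\; 1 + 2\cdot\left(\tfrac{n+1}{2}-1\right) \;=\; n.
\]
This already contradicts $\sum_v m(v) = n-1$, and so at least one of the two conditions must hold.

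I do not expect a real obstacle here: the argument is a one-line counting inequality once the two sides are correctly identified. The only thing to be careful about is to make sure that the labelling $l(e)$ is well-defined and that each edge in $E_I$ contributes exactly one label, which follows from condition~c) of \Cref{lem:irreducibleconditions} (no two adjacent intervals) and was already used in the proof of \Cref{claim:n-odd-1}. After establishing the claim, the remainder of the case $k=(n-1)/2$ will proceed just as in Case~1: in subcase~1 one augments the cut $I$ by adding a multiplicity-$0$ bicomponent of degree $\geq 3$; in subcase~2 one adds both multiplicity-$1$ vertices (which by the non-adjacency assumption on the two degree-$3$ vertices cannot together form too dense a local picture) to obtain a cut strictly larger than $n$, contradicting $\mu(G)\leq n$.
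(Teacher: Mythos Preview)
Your proof is correct and is essentially the same double-counting argument as the paper's: the paper assumes condition~2 fails and bounds $|L^*|\leq k$ to force a multiplicity-$0$ vertex, while you assume both conditions fail and compare the total multiplicity $n-1$ to the lower bound $n$; these are the same inequality read in two directions.
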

    \begin{proof}[Proof of \Cref{claim:n-odd-2}]
        Assume 2.\ does not hold.
        Then there is at most one vertex of multiplicity 1.
        The remaining multiplicity is then given by $2k-1$, hence the number of vertices with multiplicity 2 or larger is at most $\frac{2k-1}{2} = k - 1/2$.
        Thus the number of distinct labels is at most $|L^*| \leq k-1/2+1$, hence $|L^*| \leq k$.
        Therefore, there are at least $n-k-|L^*|$ vertices with multiplicity 0.
        The claim follows observing that $n-k-|L^*| \geq n - \frac{n-1}{2} - \frac{n-1}{2} = 1$.
    \end{proof}
    If condition 1.\ of the claim holds, augmenting the cut $I$ by the vertex of multiplicity 0 increases the cut by at least three edges, implying $\mu(G)\geq 2k+3 = n+2 > n$.
    
    If condition 2.\ holds and one of the vertices with multiplicity 1 is of degree 4, we can augment the cut $I$ by inserting that vertex, increasing the cut size by at least two as well.
    If both of the vertices with multiplicity 1 are of degree 3, since we assume they are non-adjacent, we can augment the cut $I$ by inserting these vertices.
    Doing so will increase the cut by at least two edges as well.

    We have shown that also in the case of $k = \frac{n-1}{2}$ the two degree 3 vertices are adjacent.
\end{proof}

\section{Proof of \texorpdfstring{\Cref{lem:treewidth-enhancement}}{Lemma \ref{lem:treewidth-enhancement}}}\label{app:enhancedgraph}
\begin{proof}[Proof of \Cref{lem:treewidth-enhancement}]
    Let $(\mathcal{T}, \mathcal{B})$ be a tree decomposition of $\mathcal{G}$ of width $\tw(\mathcal{G})$.
    We now construct a tree decomposition of $G$ of width $w \cdot (\tw(\mathcal{G})+1) -1 $, on the same tree $\mathcal{T}$.
    Consider the function $B: V(\mathcal{T}) \to 2^{V}$ defined as $B(x) := \bigcup_{X_i \in \mathcal{B}(x)} X_i$ for all $x \in V(\mathcal{T})$.
    Since $|\mathcal{B}(x)| \leq \tw(\mathcal{G}) + 1$ for all $x \in V(\mathcal{T})$ and $|X_i| \leq w$ for all $X_i \in \mathcal{X}$ it follows that the width of this decomposition is at most $w \cdot (\tw(\mathcal{G})+1) - 1$.
    It remains to show that $(\mathcal{T},B)$ is indeed a valid tree decomposition of $G$.
    In order to do so, we need to show that the conditions 1--3 from \Cref{def:treewidth} are satisfied.

    The first two conditions are straightforward.
    For the first condition we have 
    \[\bigcup_{x \in V(\mathcal{T})}B(x) = \bigcup_{x \in V(\mathcal{T})}\bigcup_{X_i \in \mathcal{B}(x)}X_i = \bigcup_{X_i \in \mathcal{X}} X_i = V,\]
    first using that $(\mathcal{T, B})$ is a valid tree decomposition of $\mathcal{G}$ and then that $\mathcal{G}$ is an enhanced graph of $G$.
    
    Similarly, for the second condition consider an edge $e\in E$. If $e\subseteq X_i$ for some $X_i \in \mathcal{X}$, then any $x \in V(\mathcal{T})$ with $X_i \in \mathcal{B}(x)$ has $e \subseteq B(x)$.
    If otherwise $e = \{u,v\} \nsubseteq X_i$ for all $X_i \in \mathcal{X}$, we let $X_u,X_v \in \mathcal{X}$ be such that $u \in X_u$ and $v \in X_v$.
    Then by definition of an enhanced graph there must be an edge between $X_u$ and $X_v$ in $\mathcal{G}$.
    Since $(\mathcal{T}, \mathcal{B})$ is a tree decomposition of $\mathcal{G}$ there is an $x \in V(\mathcal{T})$ such that $\{X_u, X_v\} \subseteq \mathcal{B}(x)$, implying $\{u,v\} \subseteq B(x)$.

    It is left to prove that for all $v \in V$ the subgraph $\mathcal{T}[\{x \mid v \in B(x)\}]$ is connected.
    Note that $S:=\{x \mid v \in B(x)\} = \{x \mid v \in \bigcup_{X_i \in \mathcal{B}(x)} X_i\}$.
    Let $\mathcal{X}_v := \{X_i \in \mathcal{X} \mid v \in X_i\}$.
    Then the set $S$ is equal to $S = \{x \mid \mathcal{B}(x) \cap \mathcal{X}_v \neq \emptyset\}$.
    We can rewrite this further as $S = \bigcup_{X_i \in \mathcal{X}_v} \{x \mid X_i \in \mathcal{B}(x)\}$.

    To show the desired results we need to show that $\mathcal{T}[S]$ is connected.
    In order to do so, we show that for any $X_i, X_j \in \mathcal{X}_v$ we have $\{x \mid X_i \in \mathcal{B}(x)\} \cap \{x \mid X_j \in \mathcal{B}(x)\} \neq \emptyset$.
    To see why this is sufficient, notice that as $(\mathcal{T}, \mathcal{B})$ is a tree decomposition of $\mathcal{G}$, both $\mathcal{T}[\{x \mid X_i \in \mathcal{B}(x)\}]$ and $\mathcal{T}[\{x \mid X_j \in \mathcal{B}(x)\}]$ are connected.
    Thus, if these sets share an element the subgraph induced by the union of these sets must be connected. Thus, the subgraph induced by the union of all of these sets, i.e., $\mathcal{T}[S]$ is connected.

    Note that by definition both $X_i$ and $X_j$ contain $v$, meaning $X_i \cap X_j \neq \emptyset$.
    But then, by definition of an enhanced graph there is an edge between $X_i$ and $X_j$ in $\mathcal{G}$.
    Further using the second condition of the definition of a tree decomposition we can conclude that there must be an $x^* \in V(\mathcal{T})$ such that $\{X_i, X_j\} \subseteq \mathcal{B}(x^*)$.
    But then $x^* \in \{x \mid X_i \in \mathcal{B}(x) \}$ and $x^* \in \{x \mid X_j \in \mathcal{B}(x)\}$, and thus their intersection is non-empty.
\end{proof}

\section{Example of the Reduction}\label{app:example}
We show how the reduction from \Cref{thm:NPhardness} turns the following example  \textsc{e3-Sat} instance into a necklace.
\[\Phi = C_1 \land C_2, \;\text{where }\;C_1 = x \lor \overline{y} \lor z \;\text{ and }\;C_2 = x \lor \overline{z} \lor w\]
There are four variables, and the variable parts are given by:
\begin{align*}
    &P\, x_0^A\, x_T\, x_T\, x_0^B\, P\, x_0^A\, x_0^B\, P \\
    &P\, y_0^A\, y_T\, y_0^B\, P\, y_0^A\, y_0^B\, P \\
    &P\, z_0^A\, z_T\, z_T\, z_0^B\, P\, z_0^A\, z_0^B\, P \\
    &P\, w_0^A\, w_T\, w_0^B\, P\, w_0^A\, w_0^B\, P 
\end{align*}
The clause parts are given by, first for clause $C_1$
\begin{align*}
    &P\, x_1^A\, C_1\, x_1^B\, P\, x_1^A\, x_T\, x_1^B\, P \\
    &P\, y_1^A\, y_1^B\, P\, y_1^A\, y_T\, C_1\, y_1^B\, P \\
    &P\, z_1^A\, C_1\, z_1^B\, P\, z_1^A\, z_T\, z_1^B\, P
\end{align*}
and for $C_2$
\begin{align*}
    &P\, x_2^A\, C_2\, x_2^B\, P\, x_2^A\, x_T\, x_2^B\, P \\
    &P\, z_2^A\, z_2^B\, P\, z_2^A\, z_T\, C_2\, z_2^B\, P \\
    &P\, w_2^A\, C_2\, w_2^B\, P\, w_2^A\, w_T\, w_2^B\, P
\end{align*}
The enforcing part starts with the enforcement of the clauses by
\begin{align*}
    & P\, C_1\, C_1\, C_1\, S_1\, P \\
    & P\, C_2\, C_2\, C_2\, S_2\, P
\end{align*}
and for the variables
\begin{align*}
    & P\, x_T\, P\, x_T\, N\, x_T\, N\, x_T\, N\, S_x\\
    & P\, y_T\, N\, y_T\, N\, S_y\\
    & P\, z_T\, P\, z_T\, N\, z_T\, N\, z_T\, N\, S_z\\
    & P\, w_T\, N\, w_T\, N\, S_w
\end{align*}
Finally, the enforcement of $P$ and $N$ is given by
\begin{align*}
    P\, N   
\end{align*}
In conclusion, the necklace is given by the following string (line breaks and spacing are inserted for better readability, and to emphasise the structure of the construction). The beads in boldface are the beads that are cut in the cut corresponding to the satisfying assignment $x=y=z=0, w=1$:
\begin{align*}
    &a\, \mathbf{b\, a} \\~\\
    & P\, x_0^A\, x_T\, x_T\, x_0^B\, P\, \mathbf{x_0^A\, x_0^B}\, P  \quad P\, y_0^A\, y_T\, y_0^B\, P\, \mathbf{y_0^A\, y_0^B}\, P  \quad P\, z_0^A\, z_T\, z_T\, z_0^B\, P\, \mathbf{z_0^A\, z_0^B}\, P \\ 
    &P\, \mathbf{w_0^A}\, w_T\, \mathbf{w_0^B}\, P\, w_0^A\, w_0^B\, P \quad \\~\\
    &P\, x_1^A\, C_1\, x_1^B\, P\, \mathbf{x_1^A}\, x_T\, \mathbf{x_1^B}\, P \quad P\, y_1^A\, y_1^B\, P\, \mathbf{y_1^A}\, y_T\, C_1\, \mathbf{y_1^B}\, P \quad P\, z_1^A\, C_1\, z_1^B\, P\, \mathbf{z_1^A}\, z_T\, \mathbf{z_1^B}\, P \\
    &P\, x_2^A\, C_2\, x_2^B\, P\, \mathbf{x_2^A}\, x_T\, \mathbf{x_2^B}\, P \quad P\, z_2^A\, z_2^B\, P\, \mathbf{z_2^A}\, z_T\, C_2\, \mathbf{z_2^B}\, P  \quad P\, \mathbf{w_2^A}\, C_2\, \mathbf{w_2^B}\, P\, w_2^A\, w_T\, w_2^B\, P \\~\\
    &P\, \mathbf{C_1}\, C_1\, C_1\, \mathbf{S_1}\, P \quad P\, C_2\, \mathbf{C_2}\, C_2\, \mathbf{S_2}\, P\\~\\
    &P\, x_T\, P\, \mathbf{x_T}\, N\, x_T\, N\, x_T\, N\, \mathbf{S_x} \quad P\, \mathbf{y_T}\, N\, y_T\, N\, \mathbf{S_y} \quad P\, z_T\, P\, \mathbf{z_T}\, N\, z_T\, N\, z_T\, N\, \mathbf{S_z} \\
    &P\, \mathbf{w_T}\, N\, w_T\, N\, \mathbf{S_w} \\~\\
    &\mathbf{P\, N}    
\end{align*}
Moreover, $\alpha$ is given as 
\begin{align*}
    &\alpha(a) = 2, \alpha(b) = 1 \\
    &\alpha(P) = 41 = |P|, \alpha(N) = 1\\
    &\alpha(x_0^A) = \alpha(x_0^B) = \alpha(x_1^A) = \alpha(x_1^B) = \alpha(x_2^A) = \alpha(x_2^B) = 2\\
    &\alpha(y_0^A) = \alpha(y_0^B) = \alpha(y_1^A) = \alpha(y_1^B) =  2\\
    &\alpha(z_0^A) = \alpha(z_0^B) = \alpha(z_1^A) = \alpha(z_1^B) = \alpha(z_2^A) = \alpha(z_2^B) = 2\\
    &\alpha(w_0^A) = \alpha(w_0^B) = \alpha(w_2^A) = \alpha(w_2^B) = 2\\
    &\alpha(x_T) = 4 = \frac{|x_T|}{2},\alpha(y_T) = 2 = \frac{|y_T|}{2}, \alpha(z_T) = 4 = \frac{|z_T|}{2}, \alpha(w_T) = 2 = \frac{|w_T|}{2}\\
    &\alpha(C_1) = \alpha(C_2) = 3\\
    &\alpha(S_1) = \alpha(S_2) = \alpha(S_x) = \alpha(S_y) = \alpha(S_z) = \alpha(S_w) = 1.
\end{align*}

\end{document}